\newtheorem{theorem}{Theorem}
\newtheorem{lemma}[theorem]{Lemma}
\newtheorem{proposition}[theorem]{Proposition}
\theoremstyle{definition}
\renewcommand{\i}{\ensuremath\mathrm{i}}
\DeclareMathOperator{\LandauO}{\mathrm{O}}
\DeclareMathOperator{\Tr}{Tr}
\newcommand{\fro}{\mathrm{F}}
\renewcommand{\L}{\operatorname{L}} 
\DeclareMathOperator{\U}{U}
\DeclareMathOperator{\wt}{wt}
\newcommand{\gdim}{d} 
\newcommand{\ldim}{p} 
\newcommand{\flip}{F}
\newcommand{\frameop}{S} 
\DeclareMathOperator{\frameopPeriodic}{\Sigma_{\mathrm{pb}}} 
\DeclareMathOperator{\frameopOpen}{\Sigma_{\mathrm{ob}}} 
\DeclareMathOperator{\suppBW}{\mathrm{supp}_{\mathrm{BW}}}
\DeclareMathOperator{\suppLC}{\mathrm{supp}_{\mathrm{LC}}}
\DeclareMathOperator{\partBW}{\mathrm{part}_{\mathrm{BW}}}
\DeclareMathOperator{\varBW}{\sigma_{\mathrm{BW}}^2}
\DeclareMathOperator{\varLC}{\sigma_{\mathrm{LC}}^2}
\DeclareMathOperator{\stdevBW}{\sigma_{\mathrm{BW}}}
\DeclareMathOperator{\stdevLC}{\sigma_{\mathrm{LC}}}
\DeclareMathOperator{\tper}{t_{\mathrm{pb}}}
\DeclareMathOperator{\topen}{t_{\mathrm{ob}}}
\DeclareMathOperator{\vfull}{v_{\mathrm{full}}} 
\DeclareMathOperator{\vhalf}{v_{\mathrm{half}}} 
\DeclareMathOperator{\vthres}{v_{\mathrm{thres}}} 
\DeclareMathOperator{\stab}{\mathrm{STAB}}
\DeclareMathOperator{\Cl}{Cl}
\newcommand{\Sp}[2]{\mathrm{Sp}_{#1}(#2)}
\newcommand{\CC}{\mathbb{C}}
\newcommand{\ZZ}{\mathbb{Z}}
\newcommand{\NN}{\mathbb{N}}
\newcommand{\1}{\mathds{1}}
\newcommand{\EE}{\mathbb{E}}
\newcommand{\FF}{\mathbb{F}}
\newcommand{\C}{\mathbb{C}}
\newcommand{\fr}[1]{\mathfrak{#1}}
\newcommand{\frS}{\fr{S}}
\newcommand{\Sym}{\frS}
\newcommand{\argdot}{{\,\cdot\,}}
\newcommand{\ad}{\dagger}
\DeclarePairedDelimiterX{\abs}[1]{\lvert}{\rvert}{%
  \ifblank{#1}{\,\cdot\,}{#1}
}   
\DeclarePairedDelimiterX\norm[1]\lVert\rVert{%
  \ifblank{#1}{\,\cdot\,}{#1}
}   
\DeclarePairedDelimiterX{\iiiNorm}[1]{\lvert}{\rvert}{%
  \delimsize\lvert\delimsize\lvert#1\delimsize\rvert\delimsize\rvert%
}
\DeclarePairedDelimiterXPP\snorm[1]{}\lVert\rVert{_\infty}{\ifblank{#1}{\,\cdot\,}{#1}}   
\DeclarePairedDelimiterXPP\twonorm[1]{}\lVert\rVert{_2}{\ifblank{#1}{\,\cdot\,}{#1}}   
\DeclarePairedDelimiterXPP\trnorm[1]{}\lVert\rVert{_1}{\ifblank{#1}{\,\cdot\,}{#1}}   
\DeclarePairedDelimiterXPP\fnorm[1]{}\lVert\rVert{_{\fro}}{\ifblank{#1}{\,\cdot\,}{#1}}   
\DeclarePairedDelimiterXPP\dnorm[1]{}\lVert\rVert{_\diamond}{\ifblank{#1}{\,\cdot\,}{#1}}   
\DeclarePairedDelimiterXPP\cbnorm[1]{}\lVert\rVert{_\mathrm{cb}}{\ifblank{#1}{\,\cdot\,}{#1}}   
\DeclarePairedDelimiterXPP\onenorm[1]{}\lVert\rVert{_{1\rightarrow 1}}{\ifblank{#1}{\,\cdot\,}{#1}}   
\DeclarePairedDelimiterXPP\ddnorm[1]{}\lVert\rVert{_{\diamond\rightarrow \diamond}}{\ifblank{#1}{\,\cdot\,}{#1}}   
\DeclarePairedDelimiterXPP\ssnorm[1]{}\lVert\rVert{_{\infty\rightarrow\infty}}{\ifblank{#1}{\,\cdot\,}{#1}}   
\DeclarePairedDelimiterX\Set[1]\{\}{%
  
  #1
}
\DeclarePairedDelimiterX\innerp[2]{\langle}{\rangle}{%
  \ifblank{#1}{\,\cdot\,}{#1} , \ifblank{#2}{\,\cdot\,}{#2}%
}
\DeclarePairedDelimiter{\ket}{\vert}{\rangle}
\DeclarePairedDelimiterX\braket[2]{\langle}{\rangle}%
  {#1\kern0.15ex\delimsize\vert\kern0.15ex\mathopen{}#2}
\DeclarePairedDelimiterX\ketbra[2]{\vert}{\vert}%
  {#1\kern0.15ex\delimsize\rangle\delimsize\langle\kern0.15ex\mathopen{}#2}
\DeclarePairedDelimiterX\sandwich[3]{\langle}{\rangle}%
  {#1\,\delimsize\vert\kern0.15ex\mathopen{}#2\kern0.15ex\delimsize\vert\kern0.15ex\mathopen{}#3}
\DeclarePairedDelimiter{\oket}{\vert}{)}
\DeclarePairedDelimiterX\obraket[2]{(}{)}%
  {#1\kern0.15ex\delimsize\vert\kern0.15ex\mathopen{}#2}
\DeclarePairedDelimiterX\oketbra[2]{\vert}{\vert}%
  {#1\kern0.15ex\delimsize)\delimsize(\kern0.15ex\mathopen{}#2}
\DeclarePairedDelimiterX\osandwich[3]{(}{)}%
  {#1\,\delimsize\vert\kern0.15ex\mathopen{}#2\kern0.15ex\delimsize\vert\kern0.15ex\mathopen{}#3}
\newcommand{\myleft}{\mathopen{}\mathclose\bgroup\left}
\newcommand{\myright}{\aftergroup\egroup\right}
\newcommand{\Psym}[1][2]{P_{\mathrm{sym}^{#1}}}
\newcommand{\hhu}{Institute for Theoretical Physics,
	Heinrich Heine University D{\"u}sseldorf, 
	Germany
}
\newcommand{\tuhh}{%
    Institute for Quantum-Inspired and Quantum Optimization,
    Hamburg University of Technology, Germany
}
\title{Closed-form analytic expressions for shadow estimation with brickwork circuits}
\author{Mirko Arienzo}
\affiliation{\hhu}
\affiliation{\tuhh}
\email{mirko.arienzo@tuhh.de}
\author{Markus Heinrich}
\email{markus.heinrich@hhu.de}
\affiliation{\hhu}
\author{Ingo Roth}
\affiliation{Quantum research centre, Technology Innovation Institute, Abu Dhabi, United Arab Emirates}
\email{ingo.roth@tii.ae}
\author[MK]{Martin Kliesch}
\email{martin.kliesch@tuhh.de}
\affiliation{\hhu}
\affiliation{\tuhh}
\date{}    
\begin{document}
\maketitle
\hypersetup{
	pdfauthor = {Mirko Arienzo, Markus Heinrich, Ingo Roth, and Martin Kliesch}, 
	pdfsubject = {Quantum science},
	pdfkeywords = {%
		quantum, information, computing,
		shadow, tomography, classical,
		Pauli, observable, estimation,
		random, brickwork, layered, circuits, brick,
		tensor network, 
		Clifford, group, 
		unitary, 2, 3, design,
		frame, operator, dual
	}
}
\begin{abstract}
	Properties of quantum systems can be estimated using classical shadows, which implement measurements based on random ensembles of unitaries.
	Originally derived for global Clifford unitaries and products of single-qubit Clifford gates,
	practical implementations are limited to the latter scheme for moderate numbers of qubits.
	Beyond local gates, 
	the accurate implementation of very short random circuits with two-local gates is still experimentally feasible and, therefore, interesting for implementing measurements in near-term applications.
	In this work, we derive closed-form analytical expressions for shadow estimation using brickwork circuits with two layers of parallel two-local Haar-random (or Clifford) unitaries. 
	Besides the construction of the classical shadow, our results give rise to sample-complexity guarantees for estimating Pauli observables.
	We then compare the performance of shadow estimation with brickwork circuits to the established approach using local Clifford unitaries and find improved sample complexity in the estimation of observables supported on sufficiently many qubits.
\end{abstract}

\section{Introduction}
Retrieving information about the state of a quantum system is a long-standing problem in quantum information processing and of central practical importance in quantum technologies.
Full quantum state tomography can recover a complete, precise classical description of the state but requires a large number of state copies \cite{Kliesch2020TheoryOfQuantum, darianoQuantumTomography2003, DonWri15, HaaHarJi15, FlaGroLiu12}, making the protocol feasible only for a very moderate number of qubits.
Nevertheless, for many concrete tasks, complete knowledge of the quantum state is often unnecessary \cite{Aaronson2018ShadowTomography}, and estimation schemes for specific properties are often scalable.

A particularly attractive estimation primitive is nowadays referred to as \emph{shadow estimation} \cite{huangPredictingManyProperties2020,painiApproximateDescriptionQuantum2019}. 
Here, an approximation of a repeatedly prepared unknown quantum state, the so-called \emph{classical shadow}, is constructed from measurements in randomly selected bases. 
In the limit of many bases, this approach allows, in principle, for full state tomography. 
For this reason, classical shadows can be further post-processed to construct estimators for the expectation value of arbitrary sets of observables.
Importantly, for certain random measurement ensembles, rigorous analytical guarantees ensure that precise estimates of expectation values can be evaluated long before one has collected enough measurement statistics for full quantum state tomography.

The original examples with strict guarantees on the sample complexity are, in a sense, two ``extreme'' scenarios:
The first one is characterized by evolving the state with a global random Clifford unitary before performing a basis measurement.
It is particularly suited for predicting global properties; 
for instance, fidelity estimation requires a constant number of samples with this setting.
The second scheme is built on local Clifford unitaries and effectively amounts to perform measurements in random local Pauli bases.
In this case, local properties can often be efficiently estimated \cite{elbenMixedstateEntanglementLocal2020,zhang2021ExperimentalQuantumState,struchalinExperimentalEstimationQuantum2021}.
Moreover, biasing the distribution of local Clifford unitaries to the estimation task at hand can yield further improvements in sample complexity \cite{hadfieldMeasurementsQuantumHamiltonians2022}.

An accurate estimation requires a precise experimental implementation of the random unitaries.
Although more robust variants of shadow estimation exist \cite{chenRobustShadowEstimation2021,kohClassicalShadowsNoise2020}, 
the implementation of global multi-qubit Clifford unitaries on near-term hardware will typically introduce too much noise to be useful for estimation.

Experimentally feasible alternatives, naturally interpolating between the two extreme cases and potentially lowering the sample complexity over local Clifford unitaries, are \emph{short Clifford circuits} \cite{hu2021ClassicalShadowTomography}.
However, finding expressions for classical shadows for random low-depth Clifford circuits is a challenging task.
For instance, the construction by \citet{hu2021ClassicalShadowTomography} involve numerically solving a large system of equations.

In this work, we derive closed-form analytic expressions for the arguably simplest non-trivial circuit construction of classical shadows:
One round of a brickwork circuit consisting of two layers of products of random unitaries.
Besides providing a more direct construction of the classical shadow, these analytic expressions allow us to compare the sample complexity of the circuit construction to the one with local Clifford unitaries.
In particular, we first observe that for Pauli observables, one shall look at pairs of adjacent qubits in the support of such observables and their relative position in the circuit.
Then, we find that the (very short) brickwork shadows outperform the local Clifford ones for Pauli observables supported on sufficiently many qubits of a brickwork circuit.
Conversely, we also observe that local Clifford unitaries yield a lower sample complexity in the case of Pauli observables supported on sparsely distributed qubits in the sense of the brickwork circuit.

The remainder is structured as follows:
Following the observation that the associated measurement channel can be interpreted as a frame (super-)operator \cite{Helsen21EstimatingGate-set} in \cref{sec:classical_shadows},
we work out its matrix representation in the Pauli basis in \cref{sec:brickwork_results}.
In particular, using well-known expressions for the second-moment operator of sufficiently uniform probability measures over the unitary group, we derive recurrence relations for subcircuits that can be analytically solved.
In \cref{sec:discussion}, we identify the regime where the resulting sample complexity outperforms the shadow estimation protocol with the local Cliffords ensemble, and in \cref{sec:numerics} we compare numerically the performance of brickwork and local Cliffords shadows.

\paragraph{Related works.}
During the completion of this work, two other papers on brickwork circuits were published \cite{Akhtar2023scalableflexible, bertoniShallowShadowsExpectation2022}.
Both describe shadows associated with brickwork circuits of arbitrary depth and numerically study the measurement channels associated with such circuits using tensor network techniques.
In particular, 
\citet{Akhtar2023scalableflexible}
apply the formalism based on entanglement features introduced by \citet{buClassicalShadowsPauliinvariant2022} and discusses average case scenario upper bounds on sample complexity based on the locally scrambled shadow norm \cite{hu2021ClassicalShadowTomography}.
A similar discussion, following a probabilistic interpretation of the eigenvalues of the measurement channels, is done by \citet{bertoniShallowShadowsExpectation2022}.
In particular, they provide rigorous upper bounds to the locally scrambled shadow norm for circuits of depth logarithmic in the number of qubits, and find upper bounds to the shadow norm for a class of observables beyond the Pauli case.
In comparison, we only focus on single-round brick-layer circuits but provide analytic expressions for the estimator of Pauli observables.

\section{Preliminaries}

\subsection{Notation}
We denote the Hilbert-Schmidt inner product by a braket-like notation, namely
\begin{equation}
	\Tr(A^\ad B) \equiv \obraket{A}{B} \quad A, B \in \CC^{\gdim \times \gdim} \, .
\end{equation}
Likewise, the outer product $\oketbra{A}{B}$ denotes the superoperator $C \mapsto \obraket{B}{C} A$.
We parametrize single-qubit Pauli operators by binary vectors $v=(z,x)\in\FF_2^2$ as 
\begin{align}
 W(0,0) &\coloneqq \1, &
 W(0,1) &\coloneqq X, &
 W(1,0) &\coloneqq Z, &
 W(1,1) &\coloneqq Y, 
\end{align}
where $X,Y,Z \in \CC^{2\times 2}$ are the usual Pauli matrices. 
Then, we define the $n$-qubit Pauli operators as tensor products of the single-qubit Pauli operators, indexed by vectors $v = v_1\oplus\dots\oplus v_n \in\FF_2^{2n}$:
\begin{equation}
\label{eq:pauli_operator:def}
	W(v) \coloneqq W(v_1) \otimes \dots \otimes W(v_n) \, .
\end{equation}
For a given vector $v = v_1\oplus\dots\oplus v_n \in\FF_2^{2n}$, we define its \emph{weight vector} as the binary vector $\wt(v)\in\FF_2^n$ such that $\wt(v)_i = 0$ if $v_i = (0,0)$ and $\wt(v)_i = 1$ else.
In other words, $\wt(v)$ has a zero in the $i$th position if and only if $W(v)$ is the identity on the $i$th qubit.
We use the shorthand notation 
\begin{equation}
	\oket{v} \equiv \frac{1}{\sqrt d} W(v)
\end{equation}
for the normalized Pauli operators.
Hence, the set $\{ \oket{v} \}$ denotes the orthonormal Pauli basis in $\CC^{\gdim \times \gdim}$, 
where $\gdim = 2^n$ denotes the dimension of the Hilbert space of $n$ qubits from now on. 

Finally, for any $k\in\mathbb N$, we set $[k]\coloneqq \{1,\dots,k\}$.

\subsection{Classical shadows formalism}
\label{sec:classical_shadows}

In this section, we review the shadow estimation protocol \cite{huangPredictingManyProperties2020} in the language of frame theory (see Ref.~\cite{waldronIntroductionFiniteTight2018} for an introduction to frame theory).
The procedure works as follows: draw unitaries $U\sim\nu$ according to some probability measure $\nu$ on the unitary group $\U(d)$, apply $U$ to the (unknown) state $\rho$, and finally measure in the computational basis $\{E_i \coloneqq \ketbra{i}{i}\}_{i\in[\gdim]}$.
Having obtained outcome $i$, store the \emph{classical snapshot} $(i,U)$. Repeating this primitive yields multiple snapshots $\{i_k, U_k\}_{k=1}^m$.
Finally, given an observable $O$, one evaluates a scalar function $f_O(i,U)$ for each snapshot and takes the empirical average $\hat o = \sum_{k=1}^m f_O(i_k, U_k)$.

Constructing $f_O(i,U)$ as follows ensures that $\hat o$ is an unbiased estimator for the expectation value $\Tr(O\rho)$:
First, one shall require that $\{ E_{i,U} \coloneqq U^\ad E_i U \}$ is a tomographically complete, \ac{POVM} \cite{acharya2021ShadowTomographybased}, i.e.\ for all states $\rho \neq \sigma$ there exists a pair $(i,U)$ such that $\sandwich{i}{U \rho U^\ad}{i} \neq \sandwich{i}{U \sigma U^\ad}{i}$.
This ensures that $\{ E_{i,U} \}$ is a frame \cite{Moran2013, Kliesch2020TheoryOfQuantum}, and the associated measurement channel
\begin{equation}
\label{eq:frame_op}
	\frameop (\rho) \coloneqq
	\sum_{i \in [\gdim]} \EE_{U \sim \nu}
	\oket{E_{i,U}} \obraket{E_{i,U}}{\rho}
	=
	\sum_{i \in [\gdim]} \EE_{U \sim \nu} \sandwich{i}{U \rho U^\ad}{i}\, U^\ad \ketbra{i}{i} U \, 
\end{equation}
has the interpretation as a frame operator.
In particular, $\frameop$ is positive definite, and thus invertible.
Then, $\{\tilde E_{i,U} \coloneqq \frameop^{-1} (E_{i,U})\}$ is the so-called \emph{canonical dual frame}, and we have the following relation
\begin{equation}
	\label{eq:reconstruction_shadow}
	\Tr ( O \rho ) 
	= \Tr ( O \frameop^{-1} \frameop (\rho) ) 
	= \sum_{i} \EE_{U \sim \nu} \obraket{O}{\tilde E_{i,U}} \obraket{E_{i,U}}{\rho}\,. 
\end{equation}
Therefore, the last expression can be interpreted as the expected value of $f_O(i,U) \coloneqq \obraket{O}{\tilde E_{i,U}}$ when sampling $U\sim \nu$ and $i\sim \obraket{E_{i,U}}{\rho}$ and is, thus, the limit of the empirical average over many experimental snapshots.

However, the computation of the canonical dual frame is in general a highly non-trivial task.
Analytical inversion of $S$ is often only possible in special cases where the probability measure $\nu$ is very structured.
For instance, if $\nu$ is the Haar measure on $\U(d)$, or a unitary 2-design, then the POVM $\{E_{i,U}\}$ is a complex projective (state) $2$-design and, thus, forms a tight frame on the subspace of traceless Hermitian matrices.
As a consequence, $\frameop$ is a depolarizing channel and can be readily inverted.
A similar argument can be applied when the unitaries $U$ are drawn Haar-randomly from a subgroup $G\subset\U(d)$ \cite{heinrich2023general}.
More generally, one has to rely on numerical methods which are not only expensive, but may also be numerically unstable since there are no general guarantees on the condition number of $\frameop$.
In principle, the condition number can even be exponentially large \cite{heinrich2023general}.

Under certain conditions, the inversion of $\frameop$ is however drastically simplified:
For instance, if the measure $\nu$ is right-invariant under multiplication with Pauli operators, then $\frameop$ is diagonal in the Pauli basis \cite{buClassicalShadowsPauliinvariant2022}.
This follows from the observation that, in this case, we have $\mathcal{W}(v)^\dagger \frameop \mathcal{W}(v) = \frameop$, where $\mathcal{W}(v) \coloneqq W(v) (\argdot) W(v)^\dagger$, and hence $\frameop$ is invariant under the channel twirl over the Pauli group.
Thus, it is a Pauli channel and, in particular, diagonal in the Pauli basis, which means $\frameop^{-1}$ can be computed via entrywise inversion of the diagonal elements $\osandwich{v}{\frameop}{v}$.
Notice that, for Pauli invariant ensemble without group structure, it is convenient to construct the estimator according to \cref{eq:reconstruction_shadow} instead of the classical shadows $\frameop^{-1}(\rho)$ as in \cite{huangPredictingManyProperties2020}:
For instance, for sparse observables in the Pauli basis \cite{bertoniShallowShadowsExpectation2022}, the estimator can be computed more easily than the classical shadows. 
Indeed, in the latter case, one would rely on the decomposition of $\rho$ in the Pauli basis, which usually involves exponentially many terms.

Finally, if $O=W(v)$ is a Pauli observable (we call this task \emph{Pauli estimation}), the sample complexity of shadow tomography can be bounded for simple circuits.
In particular, if $\frameop$ is diagonal in the Pauli basis, we simply have
\begin{equation}
	\label{eq:pauli_estimation}
 	f_{W(v)}(i,U) = \frac{1}{\osandwich v \frameop v} \obraket {W(v)} {E_{i,U}}\,.
\end{equation}
Note that this expression features only a single diagonal element of the frame operator independent of $i$ and $U$.
The sample complexity of the corresponding mean estimator $\hat w(v)$ can be controlled using the variance of $f_{W(v)}(i,U)$ which can be shown to be dominated by $\osandwich v \frameop v^{-1}$ \cite{buClassicalShadowsPauliinvariant2022}.
Chebyshev's inequality then ensures that the mean estimator is $\epsilon$-precise using $O(\osandwich v \frameop v^{-1} \epsilon^{-2}\delta^{-1})$ many snapshots with probability $1-\delta$.
Note that a Hoeffding bound here yields a worse bound scaling as $\osandwich v \frameop v^{-2}$.
If the expectation values of `many' observables are to be estimated at once, it may be beneficial to use the \emph{median-of-mean estimator} with sample complexity depending only logarithmically on $\delta$ \cite{huangPredictingManyProperties2020}.

In general, however, it is not easy to find strict guarantees for the sample complexity, since it is hard to analytically bound the variance, even for different classes of Pauli invariant measures.
In these cases, one can rely on the weaker notion of \emph{locally scrambled shadow norm} \cite{hu2021ClassicalShadowTomography, Akhtar2023scalableflexible, bertoniShallowShadowsExpectation2022}, which can be interpreted as the average variance over all states.
In particular, since the variance is linear in the state $\rho$, the locally scrambled shadow norm thus quantifies the performance when $\rho$ is the completely mixed state.

\section{The brickwork circuit: analytical results}
\label{sec:brickwork_results}
We assume for simplicity that the number of qubits is even and consider one round of a one-dimensional \ac{BW} circuit built in the following way: a first layer of $n/2$ two-local Haar random unitaries is applied to qubits $(2i-1, 2i)$ for $i \in [n/2]$. 
The second layer, built in the same way but shifted by one position, applies Haar random unitaries to qubits $(2i, 2i+1)$.
Here, we consider two cases, see also Figure~\ref{fig:brickwork_periodic}.
First, the second layer has periodic boundary conditions such that qubits $n+1$ and $1$ are identified, and consequently, the $n/2$th random unitary acts on the qubit pair $(n,1)$.
Second, we treat the case of open boundary conditions, where the second layer does not act on the first and the $n$th qubit.
In practice, it can be more convenient to draw unitaries from a unitary $2$-design, such as the Clifford group (which, for qubits systems, is even a $3$-design \cite{Zhu15, Web15}).
Indeed, implementing Haar-random unitaries is very hard \cite{knill_1995} and, moreover,
employing Clifford unitaries ensures one can classically post-process shadows efficiently \cite{gottesmanHeisenbergRepresentationQuantum1998}.

\begin{figure}[tb]
	\centering
	\includegraphics[width=\textwidth]{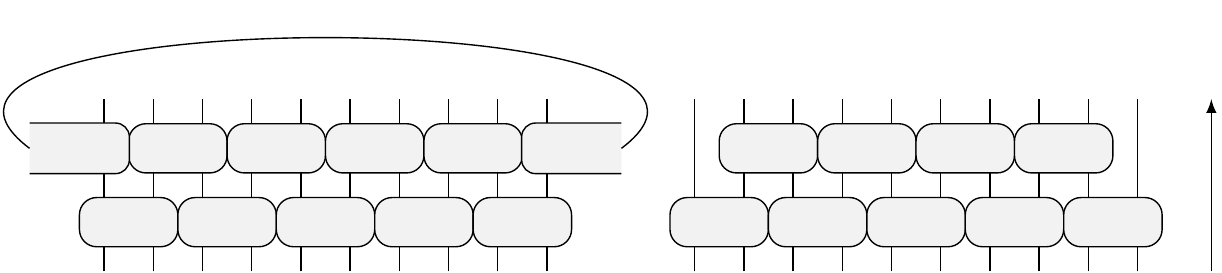}
	\caption{
		Brickwork circuits acting on $n=10$ qubits. 
		The left and right figures show periodic and open boundary conditions, respectively, and the arrow indicates the direction in which the circuit acts on quantum states. 
		For both of them, the first layer is composed of $n/2$ two-qubit Haar random unitaries acting on qubits $(2i-1, 2i), \ i \in [5]$, and the second layer is shifted by one position.
		On the left, the bricks in the second layer acts on qubits $(2i, 2i+1) \, , i \in [5]$, with the periodic identification $n+1 = 1$.
		On the right, the second layer acts on qubits $(2i, 2i+1) \, , i \in [4]$, leaving the first and the $n$th qubit untouched.
	}
	\label{fig:brickwork_periodic}
\end{figure}

In the following, we derive analytical results for the frame operator of random brickwork circuits with open and periodic boundary conditions.
Both \ac{BW} circuit ensembles are clearly (left and right) invariant under tensor products of single-qubit unitaries, in particular they are right-invariant under Pauli operators.
By the preceding discussion in \cref{sec:classical_shadows}, we thus know that the frame operator $\frameop$ is diagonal in the Pauli basis.
It is thus sufficient to compute the matrix elements $\osandwich{v}{\frameop}{v}$ for all $v\in\FF_2^{2n}$.
Moreover, both \ac{BW} circuit ensembles are also invariant under local Clifford unitaries, i.e.~tensor products of single-qubit Clifford gates.
This implies that $\osandwich{v}{\frameop}{v}$ is invariant under the exchange of $X$, $Y$, and $Z$ operators, and hence depends only on the weight vector $\wt(v)$.
As we show shortly, $\osandwich{v}{\frameop}{v}$ is in fact determined
by non-vanishing pairs of elements in $\wt(v)$ corresponding to a brick in the second layer, and by their positions in the circuit.
To make this precise, we have to introduce some definitions.

\begin{figure}[tb]
\centering
\includegraphics[width=0.9\textwidth]{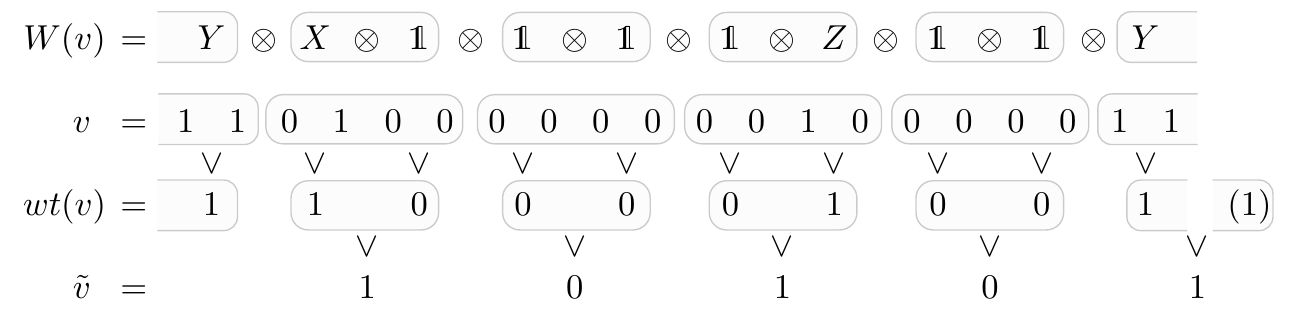}
 \caption{ 
	 Example how the vector of supported bricks is computed from a 10-qubit Pauli operator.
	 The structure of the shaded bricks is the one of the second layer of the circuit.
	 First, the Pauli operator is transformed into its binary representation $v\in\FF_2^{20}$.
	 We apply a logical or ($\vee$) per qubit to compute the weight vector $\wt(v)$.
	 Subsequently, this procedure is repeated for qubit pairs $(2i,2i+1)$ and yields the vector of supported bricks $\tilde{v}$.
	 For periodic boundary conditions, the last entry of $\tilde v$ is computed between the last and first entry of $\wt(v)$ (here depicted by appending the first entry at the end in parentheses).
	 The brickwork support of this example is $\suppBW(v) = \{1, 3, 5\}$, while its partition into local factors is $\partBW(v) = (1, 2)$.
 }
 \label{fig:example-tilde-v}
\end{figure}

Let us consider a Pauli string $v = v_1\oplus\dots\oplus v_n \in \FF_2^{2n}$.
Roughly speaking, a brick is identified by a pair of two adjacent qubits, and it is in the support of $v$ if at least one of the qubits is in the support of $v$.
More formally, we define the vector of \emph{supported bricks} as
\begin{equation}
	\tilde v = (\tilde v_1, \dots, \tilde v_{n/2}) \in \FF_2^{n/2} \, , \quad
	\tilde v_i \coloneqq \wt(v)_{2i} \vee \wt(v)_{2i+1} \, , \quad
	i \in [n/2] \, ,
\end{equation}
where $x \vee y$ is the \emph{logical or} between two bits $x,y\in\FF_2$, i.e.~$x \vee y = 1$ if $x=1$ or $y=1$, and $0$ else.
The last entry of $\tilde v$ is defined according to the boundary conditions of the second layer, in particular $\tilde v_{n/2} = \wt(v)_n \vee \wt(v)_1$ for periodic boundary conditions, and $\tilde v_{n/2} \equiv 0$ for open boundary conditions, see also \cref{fig:example-tilde-v} for an explicit example how $\tilde v$ is computed.
We say that the $i$th brick in the second layer, with $i \in [n/2]$, \emph{is in the support of} $v$ if $\tilde v_i = 1$. 
Then, one can define the \emph{brickwork support} of $v \in \FF_2^{2n}$ as $\suppBW(v) \coloneqq \{ i \mid \tilde v_i \neq 0\}$.
In the following, however, it will be equally important to keep track of sequences of consecutive supported bricks in the circuit.
Hence, we introduce the following notation:
A \emph{one-component} of $\tilde v$ is a maximal tuple of consecutive ones in $\tilde v$, where ``consecutive'' is again meant w.r.t.~the boundary conditions of the \ac{BW} circuit.
Then, we define the \emph{partition of the brickwork support} $\partBW(v)$ to be the integer sequence given by the (non-unique) sizes of the one-components of $\tilde v$.
For instance, if we have periodic boundary conditions and $\tilde v = (1, 0, 1, 0, 1)$ as in \cref{fig:example-tilde-v}, then $\partBW(v) = (1, 2)$.
Note that the maximal number of consecutive ones is $n/2-1$ and $n/2$ for open and periodic boundary conditions, respectively. 

We can now state our main result: 

\begin{theorem}
	\label{thm:main}
	Let $\frameop$ be the frame operator associated with one round of a two-local brickwork circuit with open or periodic boundary conditions in the second layer.
	Then, $\frameop$ is diagonal in the Pauli basis, and for $v \in \FF^{2n}_2$
	\begin{equation}
		\label{eq:vSv_main}
		\osandwich{v}{\frameop}{v}
		=
		\begin{cases}
			\frameopPeriodic(n) \, , & \quad \partBW(v) = ( n/2 ) \, , \\
			\prod_{l \in \partBW(v)} \frameopOpen(2l+2) \, , & \quad \text{otherwise} \, ,
		\end{cases}
	\end{equation}
	where, for any $m \in \NN$ even,
	\begin{align}
		\label{eq:vSv_final_periodic}
		\frameopPeriodic(m)
		& =
		\frac{\left( \sqrt{41} + 5 \right)^{m/2} + (-1)^{m/2} \left( \sqrt{41} - 5 \right)^{m/2}}{(5 \sqrt 2)^m} \, , \\
		\label{eq:vSv_linear_final}
		\frameopOpen(m)
		& =
		\frac{5}{2 \sqrt{41}} \frac{\left( 25-3\sqrt{41} \right) \left( \sqrt{41} + 5 \right)^{m/2} + (-1)^{m/2+1} \left( 25+3\sqrt{41} \right) \left( \sqrt{41} - 5 \right)^{m/2}}{\left(5 \sqrt{2}\right)^m} \, .
	\end{align}
\end{theorem}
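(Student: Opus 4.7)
The plan is to compute $\osandwich{v}{\frameop}{v}$ via the standard 2-copy representation. Writing $\sum_i \sandwich{i}{A}{i}^2 = \Tr(D\cdot A^{\otimes 2})$ with $D = \sum_i \ketbra{i,i}{i,i}$ on the doubled Hilbert space, one has
\[
	\osandwich{v}{\frameop}{v} = \frac{1}{d}\Tr\!\bigl(D\cdot \EE_U (UW(v)U^\ad)^{\otimes 2}\bigr),
\]
so the task reduces to evaluating the 2-fold twirl of $W(v)$ under the brickwork ensemble.

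Since each layer of the circuit is a tensor product of independent Haar-random 2-qubit gates, I would apply the Schur-Weyl second-moment identity on each gate separately. On the doubled 2-qubit brick, the twirl projects onto $\linspan\{I,F\}$, where $F$ is the swap between the two copies. A short computation gives $\EE_U (UWU^\ad)^{\otimes 2} = (4F-I)/15$ for any non-identity 2-qubit Pauli $W$, and the identity on trivial bricks. Because $F$ factorizes as $F = F_L\otimes F_R$ in per-qubit swaps, the post-first-layer operator lives in the per-qubit span $\{I_j, F_j\}$; applying the second-layer twirls on this basis gives the rules $(I,I)\mapsto I\otimes I$, $(F,F)\mapsto F\otimes F$ (weight one), and $(I,F),(F,I)\mapsto \tfrac{2}{5}(I\otimes I + F\otimes F)$.

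Combined with $\Tr(D_j I_j) = \Tr(D_j F_j) = 2$ for every qubit $j$, this collapses $\osandwich{v}{\frameop}{v}$ to a one-dimensional classical partition function in variables $s_k\in\{0,1\}$ attached to the first-layer bricks, with local weights determined by whether the $k$-th brick is trivial and nearest-neighbor weights coming from the second-layer twirls. This admits a $2\times 2$ transfer-matrix representation whose eigenvalues are $\lambda_\pm = 2(5\pm\sqrt{41})/25$. For the fully supported periodic case, the partition function equals $\Tr(M^{n/2})/d = (\lambda_+^{n/2}+\lambda_-^{n/2})/d$; using $(5-\sqrt{41})^{n/2} = (-1)^{n/2}(\sqrt{41}-5)^{n/2}$ and $25^{n/2}2^{n/2} = (5\sqrt{2})^n$, this collapses to the closed form $\frameopPeriodic(n)$.

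For the remaining cases, I plan to argue that the partition function decouples into a product over maximal chunks of consecutive supported second-layer bricks, each chunk of length $l$ contributing the factor $\frameopOpen(2l+2)$ obtained from the same $2\times 2$ transfer matrix but with boundary vectors encoding the chunk endpoints; diagonalizing $M$ then produces the explicit closed form in terms of $\sqrt{41}$. The main obstacle I expect is precisely this chunk-decoupling step: an unsupported second-layer brick only forces a pair of qubits to carry the identity Pauli and does not immediately trivialize the adjacent first-layer bricks, so one has to trace through the first-layer correlations across unsupported bricks to establish the clean multiplicative form asserted in the theorem.
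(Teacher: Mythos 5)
Your reduction machinery is essentially sound, and for the fully supported periodic case it is equivalent to the paper's route: the per-gate second-moment identities, the rules sending a mixed pair to $\tfrac25(\1+\flip_{(2)})$, the per-qubit traces $\Tr(D_j I_j)=\Tr(D_j \flip_j)=2$, and the transfer matrix with eigenvalues $2(5\pm\sqrt{41})/25$ are all correct, and $\Tr(M^{n/2})/d$ does reproduce $\frameopPeriodic(n)$. (The paper reaches the same numbers by writing the contracted tensor network as two or three coupled linear recurrences and solving them in closed form, which is the transfer matrix in another guise.)

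The genuine gap is exactly the step you flag, and it is worse than ``tracing through first-layer correlations'': in your formulation the decoupling asserted in the theorem cannot be established, because the support pattern of the second-layer bricks never enters your partition function at all. You attach the $v$-dependent local weights ($-1/15$ and $4/15$, or a pinned identity) to the bricks of the layer that twirls $W(v)$, while the nearest-neighbour weights produced by the other layer ($1$ or $4/5$) are $v$-independent; hence your partition function depends only on which Pauli-twirling-layer bricks are supported, and it factorizes precisely where a spin is pinned, i.e.\ at unsupported bricks of \emph{that} layer, not at unsupported second-layer bricks. With your starting expression $\tfrac1d\Tr\bigl(D\,\EE_U (UW(v)U^\ad)^{\otimes 2}\bigr)$ and $U=(\text{shifted layer})\cdot(\text{aligned layer})$, the layer adjacent to $W(v)$ is the aligned one $(2i-1,2i)$, so your construction yields the theorem's formulas with the case distinction attached to the aligned bricks rather than to $\partBW(v)$ as defined on the shifted bricks $(2i,2i+1)$; these genuinely differ as functions of $v$ (for $n=4$, a Pauli supported on qubits $\{1,2\}$ versus $\{2,3\}$ gives $\frameopOpen(4)\approx 0.104$ versus $\frameopPeriodic(4)=0.0528$ in your bookkeeping, while the stated theorem assigns them the other way around). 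To prove the statement as written you need the opposite grouping, which is what the paper's lemma on the diagonal elements sets up: starting from $\EE_U\sandwich{0}{U^{\otimes 2\dagger}\,W(u)\otimes W(v)\,U^{\otimes 2}}{0}$ with $U=DU_2D^\ad U_1$, the \emph{shifted} layer twirls the Paulis, producing $\1$ or $\tfrac1{15}(4\flip_{(2)}-\1)$ on the bricks $(2i,2i+1)$, while the aligned layer is absorbed into the $v$-independent factor $(\1+\flip_{(2)})^{\otimes n/2}$; an unsupported shifted brick then carries the identity and manifestly cuts the trace into the open-boundary factors $\frameopOpen(2l+2)$, after which your transfer-matrix solution applies verbatim to each chunk. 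As it stands, the final decoupling step of your plan does not go through; you must either redo the bookkeeping so that the shifted layer carries the local weights, or prove your layer-swapped version and supply the (translation/inversion) argument relating it to the statement.
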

We provide a proof for the theorem in \cref{sec:proof_theorem}.

Let us briefly comment on the interpretation of the matrix elements of $\frameop$.
These values, determined by the elements of $\partBW(v)$, are associated with different topologies of the effective \ac{BW} circuit.

First, notice that the case $\partBW(v) = (n/2)$ can occur for periodic boundary conditions only and corresponds to all bricks being in the support of $v$.
In particular, for open boundary conditions, the second case in \cref{eq:vSv_main} always applies.

Next, let us motivate the second case in \cref{eq:vSv_main}.
Concretely, let us first assume $\partBW(v) = (n/2-1)$.
In the case of open boundary conditions, this assumption corresponds to all bricks being in the support of $v$, since $\tilde v_{n/2} = 0$ by definition.
Likewise, this situation occurs in the \ac{BW} circuit with periodic boundary conditions whenever there exists exactly one $i \in [n/2]$ such that $\tilde v_i = 0$, see Figure~\ref{fig:brickwork_open_examples}.
Then, we can make two observations:
First, the topology of the effective circuit changes from periodic to open boundary conditions.
Second, the effective circuit is equivalent --up to reordering of qubits on which it acts-- to the fully supported circuit with open boundary conditions described before, which is depicted in Figure \ref{fig:brickwork_periodic}.

Suppose now we have a \ac{BW} circuit with open boundary conditions, and there exists another index $i$ such that $\tilde v_i = 0$.
Then, two cases can occur: 
Either, (a), $i=1$ or $i=n/2-1$, which implies $\partBW(v) = (n/2 - 2)$, and we simply obtain a \ac{BW} circuit with open boundary conditions on $n-4$ qubits. 
Otherwise, (b), $\partBW(v) = ( i-1, n/2 - i -1 )$ and the \ac{BW} circuit again factorizes into two independent \ac{BW} circuits with open boundary conditions, acting on $2i$ and $n - 2i$ qubits respectively.

In general, the effective circuit splits into as many independent \ac{BW} circuits with open boundary conditions as the number of elements in $\partBW(v)$, and the diagonal elements of $\frameop$ are given by products of different contributions as in \cref{eq:vSv_main}.
These elements also determine the number of qubits on which these subcircuits act, see Figure~\ref{fig:brickwork_open_examples} for an example with $\abs{\partBW(v)} = 2$.

\begin{figure}
	\centering
	\includegraphics[width=\textwidth]{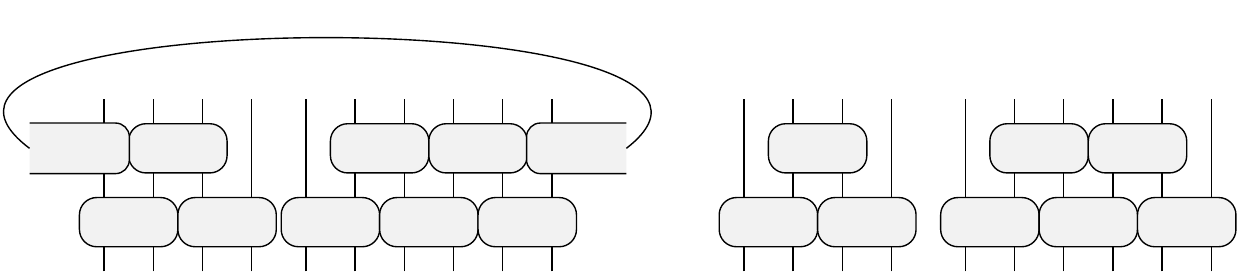}
	\caption{
		Effective brickwork circuits associated with non-fully supported Pauli operators in the case $n=10$.
		On the left, the effective \ac{BW} circuit associated with a $v \in \FF_2^{20}$ such that $\tilde v_2 = 0$ and $\partBW(v) = (4)$.
		In this case, open boundary conditions apply and the circuit is topologically equivalent to the right one in Figure \ref{fig:brickwork_periodic}.
		On the right, the effective \ac{BW} circuit associated with $v \in \FF_2^{20}$ such that $\tilde v_2, \tilde v_5 = 0$, which implies $\suppBW(v) = \{1, 3, 4\}$ and $\partBW(v) = (1,2)$.
		In this case, the circuit is the product of two smaller subcircuits with open boundary conditions.
		In particular, the subcircuits are defined on $4$ and $6$ qubits, respectively.
	}
	\label{fig:brickwork_open_examples}
\end{figure}

The frame operator provides, as proved in Ref.~\cite{buClassicalShadowsPauliinvariant2022}, a bound on the variance of the Pauli estimation task.
Here, the variance of a single sample is given as
\begin{equation}
	\varBW(v, \rho) 
	\coloneqq 
	\sum_i \EE_{U\sim \mu} 
	f_{W(v)}(i,U)^2
	\obraket{E_{i,U}}{\rho} - \obraket{W(v)}{\rho}^2 \, ,
\end{equation}
where $f_{W(v)}$ is defined as in \cref{eq:pauli_estimation}.
Then, the following result holds:
\begin{proposition}[{\cite[Prop.~3]{buClassicalShadowsPauliinvariant2022}}]
	\label{prop:variance_upper_bound}
	For any state $\rho$, estimate $W(v)$ using \ac{BW} shadows.
	Then, $\varBW$ depends only on $v \in \FF_2^{2n}$, and
	\begin{equation}
		\varBW(v,\rho) \equiv \varBW(v)
		\leq
		\frac{1}{\osandwich v S v} \, .
	\end{equation}
\end{proposition}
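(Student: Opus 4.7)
The plan is to leverage the Pauli-invariance of the brickwork measure to absorb the $\rho$-dependence in the second moment of the estimator, and then recognize the remaining sum as $\osandwich{v}{\frameop}{v}$. Throughout, let $\mathcal{P}$ denote the $n$-qubit Pauli group. Since the first layer of the BW circuit consists of Haar-random two-local unitaries (on disjoint qubit pairs), the induced measure $\mu$ on $\U(d)$ is right-invariant under $\mathcal{P}$, i.e.\ $\EE_{U\sim\mu}[g(U)] = \EE_{U\sim\mu} \EE_{P\in\mathcal{P}}[g(UP)]$.

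First I would plug the explicit form \eqref{eq:pauli_estimation} into the variance definition, giving
\begin{equation}
    \varBW(v,\rho) = \frac{1}{\osandwich{v}{\frameop}{v}^2}\sum_i \EE_U \obraket{W(v)}{E_{i,U}}^2 \obraket{E_{i,U}}{\rho} - \obraket{W(v)}{\rho}^2 \, .
\end{equation}
Then, using Pauli-invariance of $\mu$, I would replace $U$ by $UP$ and average over $P$. Observe that $E_{i,UP} = P^\ad E_{i,U} P$, so $\obraket{W(v)}{E_{i,UP}} = \pm \obraket{W(v)}{E_{i,U}}$ depending on whether $P$ commutes or anticommutes with $W(v)$; the squared factor is therefore strictly $P$-independent. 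The remaining factor satisfies
\begin{equation}
    \EE_{P}\obraket{E_{i,UP}}{\rho} = \Tr\!\bigl(E_{i,U}\,\EE_{P}[P\rho P^\ad]\bigr) = \Tr\!\bigl(E_{i,U}\,\1/\gdim\bigr) = \frac{1}{\gdim} \, ,
\end{equation}
where I used that the Pauli group is a unitary $1$-design.

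Combining these two observations eliminates all $\rho$-dependence from the first term:
\begin{equation}
    \sum_i \EE_U \obraket{W(v)}{E_{i,U}}^2 \obraket{E_{i,U}}{\rho}
    = \frac{1}{\gdim} \sum_i \EE_U \obraket{W(v)}{E_{i,U}}^2
    = \frac{1}{\gdim}\osandwich{W(v)}{\frameop}{W(v)}
    = \osandwich{v}{\frameop}{v} \, ,
\end{equation}
by the definition \eqref{eq:frame_op} of $\frameop$ and the normalization $\oket{W(v)} = \sqrt{\gdim}\,\oket{v}$. Substituting back yields
\begin{equation}
    \varBW(v,\rho) = \frac{1}{\osandwich{v}{\frameop}{v}} - \obraket{W(v)}{\rho}^2 \, ,
\end{equation}
which is $\rho$-independent up to the non-negative subtractive term $\obraket{W(v)}{\rho}^2 \geq 0$. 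Dropping this term gives the desired upper bound, and the statement $\varBW(v,\rho) \equiv \varBW(v)$ is interpreted in the sense that the $\rho$-dependence has been bounded away.

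No real obstacle arises: the whole argument is essentially a Pauli-twirl calculation combined with the definition of $\frameop$. The only subtlety worth double-checking is the normalization factor of $\gdim$ between $\oket{v}$ and $\oket{W(v)}$, which is responsible for converting $\osandwich{W(v)}{\frameop}{W(v)}/\gdim$ into $\osandwich{v}{\frameop}{v}$ and thus for the clean final expression.
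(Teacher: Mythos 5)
Your proof is correct, and in fact it establishes something slightly stronger than the stated bound: the exact identity $\varBW(v,\rho) = \osandwich{v}{\frameop}{v}^{-1} - \obraket{W(v)}{\rho}^2$, so the only $\rho$-dependence sits in the subtracted squared first moment, which matches the paper's convention of reading variances ``up to first-moment terms''. Your route is, however, genuinely different from the proof the paper itself supplies in \cref{app:variance} (\cref{prop:variance_upper_bound_appendix}): there, the second layer of two-qubit unitaries is treated as a unitary $3$-design, $\rho$ is expanded in the Pauli basis, and \cref{lem:Clif_third_moment} together with the partial-trace identity for $\Psym[3]$ (\cref{lem:partial_trace_Psym}) is used to kill all non-identity components of $\rho$ and reduce the three-copy trace back to the two-copy expression for $\osandwich{v}{\frameop}{v}$; that argument is explicitly restricted to periodic boundary conditions and to qubits. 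You instead use only right Pauli-invariance of the measure: under $U\mapsto UP$ the squared overlap $\obraket{W(v)}{E_{i,UP}}^2$ is literally $P$-independent (commutation or anticommutation only flips a sign), while the Pauli twirl sends $\rho$ to $\1/\gdim$, after which the surviving sum is by definition $\gdim\,\osandwich{v}{\frameop}{v}$ and the normalization $W(v)=\sqrt{\gdim}\,\oket{v}$ gives the clean constant. This is essentially the argument of the cited reference and is strictly more general for this statement: it covers open as well as periodic boundary conditions, any Pauli-invariant ensemble, and does not need any design property beyond the Pauli twirl — consistent with the paper's own remark that the proposition ``holds in general for any ensemble invariant under Pauli multiplication''. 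What the paper's $3$-design computation buys in exchange is a derivation carried out inside the same moment-operator/tensor-network machinery used for \cref{thm:main}, but as a proof of this proposition it is less general than yours.
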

We remark that the latter holds in general for any ensemble invariant under Pauli multiplication.

In \cref{app:variance} we provide an alternative proof for circuits with periodic boundary conditions which holds for unitary $3$-designs.

\subsection{Proof of \texorpdfstring{\cref{thm:main}}{Theorem~\ref{thm:main}}}
\label{sec:proof_theorem}
In this section, we provide a proof of \cref{eq:frame_op}.
It goes through the following steps:
First, in \cref{lem:frame_op_diagonal}, we will prove that such eigenvalues are determined by $\partBW(v)$ for any $v \in \FF_2^{2n}$.
Those eigenvalues are associated with different, effective \ac{BW} circuits.
Exploiting the structure of the \ac{BW} circuit, we can `split' the action of the layers in two separate ($2$-local) group twirls, which can be evaluated using standard results in the computation of moment operators (see \cref{app:auxiliary_results} for a quick introduction on these techniques and the main facts of our interest).
Next, in \cref{lem:recursive_relations_periodic,lem:recursive_relations_open}, using tensor network techniques, we reduce the problem of finding such eigenvalues to two different systems of recurrence relations associated with \ac{BW} circuits with periodic and open boundary conditions, respectively.
The latter admit closed-form analytic solutions, which lead to the explicit expression of $\osandwich{v}{\frameop}{v}$.

In the following, given an operator $A \in \L(\CC^{2} \otimes \CC^{2})$, we denote by $A_{(2)} \equiv A \otimes A \in \L(\CC^{4} \otimes \CC^{4})$ the operator acting on two copies of two qubit ``sites''.
In particular, we will extensively use the operator $\flip_{(2)} \equiv \flip \otimes \flip$, where $\flip$ is the flip operator which swaps tensor factors of $\CC^{2} \otimes \CC^{2}$.
The action of $\flip$ is also depicted in \cref{fig:flip_2copies_example}.
\begin{figure}[tb]
	\centering
	\includegraphics[width=0.6\textwidth]{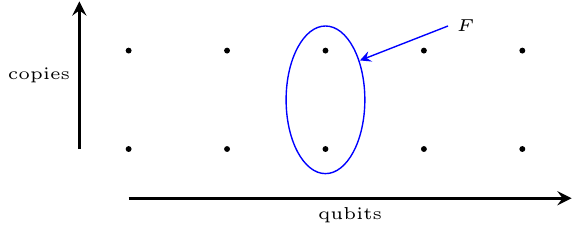}
	\caption{
		Flip operator swapping tensor factors of two copies of the third qubit site.
	}
	\label{fig:flip_2copies_example}
\end{figure}

As we observed in \cref{sec:classical_shadows}, the frame operator is diagonal in the Pauli basis \cite{buClassicalShadowsPauliinvariant2022} for Pauli invariant measures.
However, since unitaries in the \ac{BW} circuit are Haar random, we can characterize the matrix elements of the frame operator exploiting some known results about the second moment operator \cite{Kliesch2020TheoryOfQuantum}, which we summarize in \cref{lem:Haar_second_moment} in \cref{app:auxiliary_results}.
As a side note, this also implies that, in practice, we can draw unitaries from any unitary $2$-design instead \cite{dankertEfficientSimulationRandom2005,GroAudEis07}, such as the Clifford group.

Then, the following holds:
\begin{lemma}
	\label{lem:frame_op_diagonal}
	Let $\frameop$ be the frame operator associated with one round of a two-local brickwork circuit with periodic or open boundary conditions in the second layer.
	Then, $\frameop$ is diagonal in the Pauli basis, and
	\begin{equation}
		\label{eq:vSv_lem}
		\osandwich{v}{\frameop}{v}
		=
		\begin{cases}
			\frac{1}{(10 \sqrt{3})^n} \tper(n) \, , & \quad \partBW(v) = ( n/2 ) \\
			\frac{1}{15^{\abs{\tilde v}}}
			\prod_{l \in \partBW(v)} \frac{\topen(2l+2)}{(2 \sqrt{5})^{2l+2}} \, , & \quad \text{otherwise}
		\end{cases} \, ,
	\end{equation}
	where $\abs{\tilde v}$ is the Hamming weight of $\tilde v$, and
	\begin{align}
		\label{eq:tper(n)_def}
		\tper(n)
		& \coloneqq
		\Tr\myleft[ (\1+\flip_{(2)})^{\otimes n/2} \ D_{(2)} \left( 4 \flip_{(2)} - \1 \right)^{\otimes n/2} D_{(2)}^{-1} \myright] \, , \\
		\label{eq:topen(n)_def}
		\topen(n) 
		& \coloneqq 
		\Tr\myleft[ (\1 +\flip_{(2)})^{\otimes n/2} 
		\left\{ \1_4 \otimes \left( (4 \flip_{(2)} -\1)^{\otimes (n/2-1) } \right) \otimes \1_4 \right\} \myright] \, ,
	\end{align}
	with $D\ket{\psi_1}\otimes\dots\otimes\ket{\psi_n} \coloneqq \ket{\psi_2}\otimes\dots\otimes\ket{\psi_n}\otimes\ket{\psi_1}$ being a cyclic shift operator.
\end{lemma}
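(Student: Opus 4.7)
The plan is to combine the 2-qubit Haar second-moment formula with a tensor-network bookkeeping argument in order to reduce $\osandwich{v}{\frameop}{v}$ to the claimed traces. The first observation is that the brickwork ensemble is right-invariant under tensor products of single-qubit unitaries (since each 2-qubit Haar measure is right-invariant on its brick), so by the Pauli-covariance argument recalled in \cref{sec:classical_shadows} the frame operator $\frameop$ is diagonal in the Pauli basis, and its diagonal entries depend on $v$ only through $\wt(v)$.

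Starting from $\osandwich{v}{\frameop}{v} = \frac{1}{d} \sum_i \EE_U \bra{i,i} U^{\otimes 2} W(v)^{\otimes 2} U^{\dagger \otimes 2} \ket{i,i}$, I split $U = U_2 U_1$ and perform the two Haar integrals separately. The first-layer integral factorizes over first-layer bricks, and the 2-qubit Haar second-moment formula (\cref{lem:Haar_second_moment}) gives a local factor equal to $\1$ when the brick lies outside the support of $W(v)$ and to $(4 \flip_{(2)} - \1)/15$ otherwise. The resulting operator $\mathcal{T}^{(1)} \coloneqq \EE_{U_1} U_1^{\otimes 2} W(v)^{\otimes 2} U_1^{\dagger \otimes 2}$ is a tensor product over the first-layer structure, and its normalization already produces the factors of $1/15$ appearing in the claim.

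For the second layer I push the twirl onto the measurement via cyclicity,
\begin{equation*}
  \osandwich{v}{\frameop}{v} = \frac{1}{d} \Tr\!\big[P'_{\text{coinc}} \mathcal{T}^{(1)}\big], \qquad P'_{\text{coinc}} \coloneqq \EE_{U_2} U_2^{\dagger \otimes 2} P_{\text{coinc}} U_2^{\otimes 2},
\end{equation*}
with $P_{\text{coinc}} = \sum_i \ketbra{i,i}{i,i}$. Because $P_{\text{coinc}}$ factorizes per qubit and $U_2$ per second-layer brick, the second-moment formula applies brick-by-brick once more and yields a factor proportional to $\1 + \flip_{(2)}$ on each second-layer brick, with a $1/5$ normalization. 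The second-layer bricks sit on the sublattice shifted by one qubit relative to the first-layer one; in the periodic case this relative shift is implemented by conjugation with $D_{(2)}$, and collecting all the prefactors ($1/d$, one $1/15$ per supported first-layer brick, one $1/5$ per second-layer brick) combines into $1/(10\sqrt 3)^n$ while what remains inside the trace is precisely $\tper(n)$, matching the first case $\partBW(v) = (n/2)$.

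When some $\tilde v_j = 0$, the tensor factor of $\mathcal{T}^{(1)}$ met by the $j$-th second-layer twirl is an identity, so the brick twirl reduces to a partial trace over that brick; this effectively ``cuts'' the chain into independent pieces, one per one-component of $\tilde v$. A component of size $l$ reduces to an open-boundary trace on $2l+2$ qubits in two copies, which is exactly the definition of $\topen(2l+2)$ together with the $(2\sqrt 5)^{-(2l+2)}$ normalization and the boundary factors $\1_4$; the surviving $1/15$'s from the $|\tilde v|$ supported first-layer bricks assemble into $1/15^{|\tilde v|}$. The main technical obstacle I anticipate is this decoupling step: carefully showing that an identity insertion at an unsupported second-layer brick genuinely factors the trace into the product over $\partBW(v)$, and that the boundary operators produced by the reduction agree with the $\1_4$'s appearing in the definition of $\topen$ — the rest of the proof is an essentially algebraic bookkeeping of the second-moment prefactors.
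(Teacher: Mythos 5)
Your overall strategy (2-qubit second-moment formula plus tensor-network bookkeeping) is the same as the paper's, but there is a genuine gap in how you assign the two twirls, and it breaks the claimed identification with the stated formula. With your grouping, $\Tr[W(v)U^\dagger E_i U]=\Tr[\,U_1 W(v)U_1^\dagger\; U_2^\dagger E_i U_2]$, the Pauli is integrated against the \emph{first} layer and the coincidence projector against the \emph{second} layer. Hence all $v$-dependence in your expression sits on the first-layer pairs $(2i-1,2i)$: you get one factor $\tfrac1{15}(4\flip_{(2)}-\1)$ per \emph{supported first-layer brick}, the chain cuts at \emph{unsupported first-layer bricks}, and the uniform $\tfrac15(\1+\flip_{(2)})$ factors sit on the second-layer bricks. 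But $\tilde v$, $\abs{\tilde v}$, $\suppBW$ and $\partBW$ in the lemma are defined through the second-layer pairs $(2i,2i+1)$, so your claims that the prefactor assembles into $15^{-\abs{\tilde v}}$, that the pieces are the one-components of $\tilde v$, and that the fully-wrapped trace $\tper(n)$ appears exactly when $\partBW(v)=(n/2)$ do not follow from your derivation; they hold only when the first-layer support pattern happens to coincide with $\tilde v$. A concrete counterexample: $n=4$ (periodic) and $v$ with $\wt(v)=(1,1,0,0)$, i.e.\ supported exactly on one first-layer brick. Then $\tilde v=(1,1)$, $\partBW(v)=(n/2)$ and the lemma asserts $\osandwich{v}{\frameop}{v}=\tper(4)/(10\sqrt3)^4=4752/90000$, whereas your bookkeeping produces a single $1/15$ and the trace $\Tr[(\1+\flip_{(2)})^{\otimes 2}_{\mathrm{2nd}}\,(4\flip_{(2)}-\1)_{\mathrm{1st\,brick\,1}}]=624$ with prefactor $1/(16\cdot 25\cdot 15)$, i.e.\ $624/6000$ --- a structurally and numerically different expression. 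In addition, for open boundary conditions the boundary qubits $1$ and $n$ carry no second-layer unitary in your route, so the per-qubit coincidence projectors there remain untwirled; they are rank-2 projectors, not the $\1_4$ factors appearing in $\topen$, so the anticipated ``boundary operators agree with the $\1_4$'s'' step fails as well.

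The paper lands on the stated formula precisely because it makes the opposite assignment: it writes $U=DU_2D^\dagger U_1$ and evaluates $\EE_U\sandwich{0}{U^{\otimes2\dagger}\,W(u)\otimes W(v)\,U^{\otimes2}}{0}$, so the \emph{second} layer (the one defining $\tilde v$) twirls the Pauli pair, producing the operators $Q_{\tilde v_i}\in\{\1,\tfrac1{15}(4\flip_{(2)}-\1)\}$ at the second-layer positions, while the \emph{first} layer twirls $\ketbra{0}{0}^{\otimes2}$, producing the uniform $\Psym[2]^{\otimes n/2}/10^{n/2}$; the case analysis in terms of $\partBW(v)$ then follows by splitting the trace at the $Q=\1$ insertions. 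To repair your argument you must either adopt this ordering of the conjugations (equivalently, track carefully which layer is adjacent to the Pauli in the convention underlying the lemma), or carry out your first-layer-indexed bookkeeping honestly and then separately relate the resulting expression to the $\tilde v$-indexed one --- which, as the $n=4$ example shows, is not a term-by-term identification and would require an additional argument (e.g.\ the relabeling symmetry of the periodic ensemble), with no analogue available for open boundary conditions.
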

\begin{proof}
	Let $\mu$ be a probability measure on the \ac{BW} circuit.
	Hence, $\mu$ is the product of probability measures $\mu_{i_j}$, where $i = 1, \dots, n/2$, and $j = 1,2$.
	In other terms, the operator corresponding to the $i_j$th brick is sampled independently from all the others.
	Then, given $u, v \in \FF_2^{2n}$, we have
	\begin{equation}
		\label{eq:Svv_general}
		\begin{aligned}
			\osandwich u S v 
			& = 
			\frac{1}{\gdim}
			\sum_{i} \EE_{U \sim \mu} \obraket{u}{E_{i,U}} \obraket{E_{i,U}}{v} \\
			& =
			\frac{1}{\gdim}
			\sum_i \EE_{U \sim \mu} \Tr[W(u)^\dagger U \ketbra ii U^\dagger]\,  \Tr[U \ketbra ii U^\dagger W(v)]
			\\
			& =
			\EE_{U \sim \mu} \sandwich{0}{U^{\otimes 2\dagger} W(u) \otimes W(v) U^{\otimes 2}}{0} \, .
		\end{aligned}
	\end{equation}
	Consider now the following factorization of $W(v)$:
	\begin{equation}
		W(v) = W(v_{2,3}) \otimes \dots \otimes W(v_{n,1}) \, ,
	\end{equation}
	where, for each $i \in [n/2]$,
	\begin{equation}
		v_{2i, 2i+1} \equiv v_{2i} \oplus v_{2i+1} \in \FF_2^2 \oplus \FF_2^2 \, ,
	\end{equation}
	and each $W(v_{2i, 2i+1})$ is as in \cref{eq:pauli_operator:def}.
	Moreover, writing $U = D U_2 D^\ad U_1$, where $U_i$ is the tensor product of two-local Haar random unitaries, it follows
	\begin{equation}
		\begin{aligned}
			\osandwich u S v 
			&=
			\,\Tr\myleft[ \EE_{U_1} \myleft[ U_1^{\otimes 2} \ketbra 00 U_1^{\otimes 2\dagger} \myright] \, 
			D_{(2)} \, \EE_{U_2} \myleft[ U_2^{\otimes 2\dagger} D_{(2)}^\ad W(u) \otimes W(v) D_{(2)} U_2^{\otimes 2} \myright] \, D_{(2)}^\ad \myright] 
			\, .
		\end{aligned}
	\end{equation}
	Hence, by \cref{lem:Haar_second_moment} in \cref{app:auxiliary_results}, we have
	\begin{align}
		\EE_{U_1} U_1^{\otimes 2} \ketbra{0}{0}^{\otimes 2} U_1^{\otimes 2 \ad} 
		& =
		\frac{1}{10^{n/2}} \Psym[2]^{\otimes n/2} \, , \\
		\EE_{U_2} U_2^{\otimes 2} D_{(2)}^\ad W(v) \otimes W(u) D_{(2)} U_2^{\otimes 2 \ad} 
		& =
		\delta_{u,v}
		\bigotimes_{i \in [n/2]} Q_{\tilde v_i} \, ,
	\end{align}
	where
	\begin{equation}
		\label{eq:Q_v}
		Q_{\tilde v_i}
		\coloneqq 
		\begin{cases}	
			\1 \quad &\text{if } \tilde v_i = 0 \, , \\
			\frac{1}{15} \left( 4 \flip_{(2)}-\1 \right)  \quad &\text{otherwise} \, .
		\end{cases}
	\end{equation}
	Therefore, writing $\Psym[2] = \frac{1}{2} \left( \1 + \flip_{(2)} \right)$,
	\begin{equation}
		\label{eq:vSv:general_main}
		\begin{aligned}
			\osandwich{v}{\frameop}{v}
			& 
			=
			\frac{1}{(2 \sqrt{5})^{n}}
			\Tr\myleft[ (\1+\flip_{(2)})^{\otimes n/2} \ D_{(2)} \, 
			\bigotimes_{i \in [n/2]} Q_{\tilde v_i} \, D_{(2)}^{-1} \myright]
		\end{aligned} \, .
	\end{equation}
	Finally, we distinguish cases for the latter according to $\partBW(v)$.
	In particular, if $\partBW(v) = ( n/2 )$, then $Q_{\tilde v_i} = \frac{1}{15} \left( 4 \flip_{(2)} - \1 \right) \ \forall i = 1, \dots, n/2$, and \cref{eq:tper(n)_def} reads immediately from \cref{eq:vSv:general_main}.
	Next, assume $\partBW(v) = (n/2-1)$.
	In particular, due to invariance under translations of bricks, we can assume without loss of generality that $\tilde v_{n/2} = 0$, meaning $Q_{\tilde v_{n/2}} = \1$.
	This yields
	\begin{equation}
		\label{eq:vSv_n/2-1}
		\osandwich{v}{\frameop}{v}
		=
		\frac{1}{(2 \sqrt{5})^n} \frac{1}{15^{n/2-1}} 
		\Tr \left[
			\left(\1+\flip_{(2)}\right)^{\otimes n/2} \1 \otimes \left(4\flip_{(2)}-\1\right)^{\otimes n/2-1} \otimes \1
		\right] \, .
	\end{equation}
	Consider now there exists $i \in [n/2-1]$ such that $\tilde v_i = 0$.
	Then, we distinguish between two cases.
	If $i = 1$ or $i = n/2-1$, then $\osandwich{v}{\frameop}{v}$ is still given by an expression that is morally equivalent to \cref{eq:vSv_n/2-1} up to obvious modifications determined by $\partBW(v) = (n/2-2)$.
	More specifically, we have
	\begin{equation}
		\begin{aligned}
			\osandwich{v}{\frameop}{v} 
			& = 
			\frac{1}{(2\sqrt{5})^{n}} \frac{1}{15^{n/2-2}}
			\Tr \left[ \1+\flip_{(2)} \right]
			\Tr \left[
				\left(\1+\flip_{(2)}\right)^{\otimes n/2-1} \1 \otimes \left(4\flip_{(2)}-\1\right)^{\otimes n/2-2} \otimes \1
			\right] \\
			& = 
			\frac{1}{(2\sqrt{5})^{n-2}} \frac{1}{15^{n/2-2}} \topen(n-2) \, .
		\end{aligned}
	\end{equation}
	On the other hand, if $i \in \{2,\dots,n/2-2\}$, we have $\partBW(v) = (i-1, n/2-1-i)$, and the circuit splits into two subcircuits, yielding
	\begin{equation}
		\begin{aligned}
			\osandwich{v}{\frameop}{v}
			& = 
			\frac{1}{(2\sqrt{5})^n}
			\Tr \left[
				\left(\1+\flip_{(2)}\right)^{\otimes i} \1 \otimes \left(4\flip_{(2)}-\1\right)^{\otimes (i-1)} \otimes \1
			\right] \\
			& \times
			\Tr \left[
				\left(\1+\flip_{(2)}\right)^{\otimes n/2-i} \1 \otimes \left(4\flip_{(2)}-\1\right)^{\otimes n/2-1-i} \otimes \1
			\right] \\
			& = 
			\frac{1}{(2\sqrt{5})^n} \topen(2i) \topen(n-2i) \, .
		\end{aligned}
	\end{equation} 
	All other cases follow from analogous considerations.
\end{proof}

Note that the traces in the latter expression have two main contributions. The first one, which is proportional to the projector on the symmetric subspace $\Psym[2]$, comes from scrambling $E_i$ with the first layer of the \ac{BW} circuit, and it is independent of $v$.
The second layer, on the other hand, acts on $W(u) \otimes W(v)$, and the result of the scrambling for each pair of qubits is an operator that depends on $v$.
This means that \emph{effectively} the second layer determines whether the circuit factorizes at a given position, and the number of qubits on which each subcircuit is defined is determined by the corresponding first layer of random unitaries.

The next couple of technical results will give a way to evaluate the traces appearing in the previous lemma. 
The core steps of the proofs are most conveniently presented in terms of tensor network diagrams and deferred to \cref{app:tensor_networks}.
\begin{lemma}
	\label{lem:recursive_relations_periodic}
	Let $t_1(n), t_2(n), t_3(n)$ defined as follows:
	\begin{align}
		\label{eq:t1_periodic}
		t_1(n)
		& \coloneqq
		\Tr \left[ 
			\left(
			\1 \otimes \left( \1 + \flip_{(2)} \right)^{\otimes n/2-1} \otimes \1
			\right)
			\left(
			4 \flip_{(2)} -\1
			\right)^{\otimes n/2} 
		\right] \, , \\
		t_2(n)
		\label{eq:t2_periodic}
		& \coloneqq
		\Tr \left[ 
		\left(
		\flip \otimes \left( \1 + \flip_{(2)} \right)^{\otimes n/2-1} \otimes \flip
		\right)
		\left(
		4 \flip_{(2)} -\1
		\right)^{\otimes n/2} 
		\right] \, , \\
		t_3(n)
		\label{eq:t3_periodic}
		& \coloneqq
		\Tr \left[ 
		\left(
		\1 \otimes \left( \1 + \flip_{(2)} \right)^{\otimes n/2-1} \otimes \flip
		\right)
		\left(
		4 \flip_{(2)} -\1
		\right)^{\otimes n/2} 
		\right] \, .
	\end{align} 
	Then, $\tper(n) = t_1(n) + t_2(n)$ and the following system of recursive relations hold true:
	\begin{equation}
		\label{eq:t1_t2_t3_periodic_recursive}
		\begin{cases}
			t_1(n) = 24 \, t_3(n-2) \\
			t_2(n) = 24 \, t_3(n-2) + 60 \, t_2(n-2) \\
			t_3(n) = 24 \, t_1(n-2) + 60 \, t_3(n-2)
		\end{cases} \, ,
		\quad n \geq 2 \, , \quad n=0 \mod 2 \, ,
	\end{equation}
	with the following base conditions:
	\begin{equation}
		\label{eq:t1_t2_t3_periodic_base}
		\begin{cases}
			t_1(2) = 0 \\
			t_2(2) = 60 \\
			t_3(2) = 24
		\end{cases} \, .
	\end{equation}
\end{lemma}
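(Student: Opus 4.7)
The plan is to prove the identity $\tper(n) = t_1(n) + t_2(n)$ together with the three recursions by peeling off one pair of adjacent qubits at a time and reducing the problem to a diagram on $n-2$ qubits. For the decomposition of $\tper(n)$, I would use cyclic invariance of the trace to rewrite
\begin{equation*}
	\tper(n) = \Tr\myleft[ D_{(2)}^{-1} (\1 + \flip_{(2)})^{\otimes n/2} D_{(2)} \,(4\flip_{(2)} - \1)^{\otimes n/2} \myright] \, ,
\end{equation*}
so that the shift moves the first-layer bricks to the offset positions $(2,3),(4,5),\dots,(n,1)$, producing exactly one wrap-around brick on qubits $(n,1)$. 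Expanding this wrap-around as $\1_n \otimes \1_1 + \flip_n \otimes \flip_1$ splits the cyclic trace into two open-boundary diagrams, which match $t_1(n)$ and $t_2(n)$ from \cref{eq:t1_periodic,eq:t2_periodic} respectively.

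For each of the three recursions I would peel off the two boundary qubits adjacent to the single-qubit factors at one end of the diagram: the left end for $t_1$ and $t_2$, and the end carrying the $\flip$ factor for $t_3$. These two qubits are coupled to the remainder only through the adjacent layer-1 brick $(\1+\flip_{(2)})$, which I expand as $\1\otimes\1 + \flip\otimes\flip$. In both terms the two boundary qubits decouple from the rest of the diagram, and the trace over them is evaluated using only $\Tr(\1) = 4$, $\Tr(\flip) = 2$, and $\flip^2 = \1$. Exactly one of the three two-qubit trace factors
\begin{equation*}
	\Tr[4\flip\otimes\flip - \1\otimes\1] = 0, \quad \Tr[4\flip\otimes\1 - \1\otimes\flip] = 24, \quad \Tr[4\,\1\otimes\1 - \flip\otimes\flip] = 60
\end{equation*}
appears in each case, and these are precisely the numerical coefficients on the right-hand side of \cref{eq:t1_t2_t3_periodic_recursive}. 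The remaining diagram on $n-2$ qubits is one of $t_1, t_2, t_3$, determined by whether the newly exposed boundary qubit inherits an identity or a swap from the expansion.

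The base cases at $n=2$ reduce to single-site traces evaluated by the same rules, yielding $t_1(2) = 0$, $t_2(2) = 60$, and $t_3(2) = 24$ directly. The main obstacle I anticipate is not algebraic but combinatorial: after each expansion one must carefully track which of $\1$ or $\flip$ remains on the newly exposed boundary qubit in order to identify the correct $t_i(n-2)$, and for $t_3$ one has to peel the end carrying the $\flip$ factor (peeling the identity end instead yields the equivalent relation $t_3(n) = 24\, t_2(n-2)$, which agrees with the stated recursion but has a different form). This bookkeeping is most cleanly handled using the tensor-network diagrams of \cref{app:tensor_networks}.
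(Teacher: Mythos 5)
Your proposal is correct and follows essentially the same route as the paper: the identity $\tper(n) = t_1(n) + t_2(n)$ comes from expanding the single wrap-around brick $\1 + \flip_{(2)}$ after shifting it onto the offset layer, and the recursions and base values follow by expanding the boundary brick of that layer and evaluating the two-site trace factors $0$, $24$, $60$, which is precisely the content of the tensor-network simplifications in \cref{app:tensor_networks}. Your remark that peeling the other end of $t_3$ yields the equivalent relation $t_3(n) = 24\, t_2(n-2)$ is accurate and only a matter of bookkeeping.
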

\begin{proof}
	The fact that $\tper(n) = t_1(n) + t_2(n)$ is clear from the definition of $t_1$ and $t_2$.
	Relations \eqref{eq:t1_t2_t3_periodic_recursive} and \eqref{eq:t1_t2_t3_periodic_base} are proved in \cref{app:tensor_networks}.
\end{proof}
\begin{lemma}
	\label{lem:recursive_relations_open}
	Let $t_1(n), t_2(n)$ be defined as follows:
	\begin{align}
		\label{eq:t1_open}
		t_1(n) 
		& \coloneqq
		\Tr\myleft[ \left\{ \1_4 \otimes (\1 + \flip_{(2)})^{\otimes (n/2-1)}\right\}
		\left\{ (4 \flip_{(2)} -\1)^{\otimes (n/2-1) } \otimes \1_4 \right\}  \myright] \, , \\
		\label{eq:t2_open}
		t_2(n) 
		& \coloneqq
		\Tr\myleft[ \left\{  \flip  \otimes (\1+\flip_{(2)})^{\otimes (n/2-1)}\right\}
		\left\{ (4 \flip_{(2)} -\1)^{\otimes (n/2-1) } \otimes \1_4 \right\}  \myright] \, .
	\end{align}
	Then, $\topen(n) = 4 t_1(n) + 2 t_2(n)$, and the following recursive relations hold true:
	\begin{equation}
		\label{eq:t1_t2_open_recursive}
		\begin{cases}
			t_1(n) = 24 \, t_2(n-2) \\
			t_2(n) = 24 \, t_1(n-2) + 60 \, t_2(n-2) 
		\end{cases}
		\, \quad n \geq 4 \, , \quad n=0 \mod 2 \, ,
	\end{equation}
	with the following base conditions:
	\begin{equation}
		\label{eq:t1_t2_open_base}
		\begin{cases}
			t_1(4) = 48 \\
			t_2(4) = 216	
		\end{cases} \, .
	\end{equation}
\end{lemma}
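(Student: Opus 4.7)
The strategy is a transfer-matrix argument: I peel two qubits off the left end of each trace, evaluate the resulting partial trace as a linear combination $\alpha \1 + \beta \flip$ on the next qubit, and then read off the recursion by substituting that residual operator as the new left boundary for a trace of the same shape on $n-3$ qubits. The only ingredients needed are the elementary identities $\Tr[\1_4] = 4$, $\Tr[\flip] = 2$, and $\flip^2 = \1$ on $\CC^2 \otimes \CC^2$, together with linearity of the trace.

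To establish $\topen(n) = 4 t_1(n) + 2 t_2(n)$, I will split the leftmost outer brick of $\topen(n)$ as $\1 + \flip_{(2)} = \1 \otimes \1 + \flip \otimes \flip$ on qubits $1$ and $2$. Since the inner layer acts as $\1_4$ on qubit $1$, tracing out qubit $1$ contributes a factor of $4$ in the first summand (leaving $\1$ on qubit $2$) and a factor of $2$ in the second (leaving $\flip$ on qubit $2$). The remaining expressions on qubits $2,\dots,n$ are, by direct inspection, the defining traces of $t_1(n)$ and $t_2(n)$ after relabelling, which gives the claim.

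For the recursions I consider the first three qubits of $t_1(n)$, respectively $t_2(n)$. On these qubits the outer layer acts as $X_1 \otimes (\1_2 \otimes \1_3 + \flip_2 \otimes \flip_3)$, with $X = \1$ for $t_1$ and $X = \flip$ for $t_2$, while the inner layer acts as $(4 \flip_1 \flip_2 - \1_1 \1_2) \otimes \1_3$. Multiplying these and tracing over qubits $1$ and $2$ using the identities above leaves a residual operator on qubit $3$ of the form $\alpha \1_3 + \beta \flip_3$; direct computation gives $(\alpha,\beta) = (0,24)$ for $t_1$ and $(\alpha,\beta) = (24, 60)$ for $t_2$. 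After relabelling $q_3 \mapsto q_1$, the remaining trace on $n-3$ qubits matches exactly the definition of $t_1(n-2)$ when the left boundary is $\1$ and of $t_2(n-2)$ when it is $\flip$, so by linearity we obtain the two recursions. The base cases $t_1(4) = 48$ and $t_2(4) = 216$ are three-qubit traces of explicit combinations of $\1$ and $\flip$ factors on either side, which evaluate immediately using $\Tr[\1] = 4$ and $\Tr[\flip] = 2$.

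The main obstacle is purely notational: $\flip$ and $\1$ appear with two meanings (single-qubit two-copy versus two-qubit two-copy), and one must carefully track which tensor legs each operator acts on when multiplying and partial-tracing. This is precisely the bookkeeping that the authors offload to tensor network diagrams in the appendix; once a diagrammatic convention is fixed, the calculations sketched above are essentially mechanical, and no deeper input than the three scalar identities listed at the outset is required.
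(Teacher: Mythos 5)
Your proposal is correct: the reduction $\topen(n) = 4\,t_1(n) + 2\,t_2(n)$ via $\Tr_1[\1+\flip_{(2)}] = 4\cdot\1 + 2\cdot\flip$, the transfer coefficients $(\alpha,\beta)=(0,24)$ for $\1$ and $(24,60)$ for $\flip$ on the left boundary, and the base values $t_1(4)=48$, $t_2(4)=216$ all check out. This is essentially the same peel-and-contract argument the paper uses, just carried out algebraically with $\Tr[\1_4]=4$, $\Tr[\flip]=2$, $\flip^2=\1$ instead of the tensor-network diagrams of the appendix.
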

\begin{proof}
	First, observe that
	\begin{equation}
		\begin{aligned}
			\topen(n) 
			&\coloneqq 
			\Tr\myleft[ (\1+\flip_{(2)})^{\otimes n/2} 
			\left\{ \1 \otimes \left(4 \flip_{(2)} -\1\right)^{\otimes (n/2-1) } \otimes \1 \right\} \myright]
			\\
			&=
			\Tr\myleft[ \myleft\{ \Tr_{1} \myleft[ (\1+\flip_{(2)}) \myright] \otimes \myleft[ (\1+\flip_{(2)})^{\otimes (n/2-1)} \myright] \myright\}
			\left\{ (4 \flip_{(2)} -\1)^{\otimes (n/2-1) } \otimes \1 \right\}  \myright]
			\\
			&=
			\Tr\myleft[ \left\{ (4 \1 + 2 \flip ) \otimes (\1+\flip_{(2)})^{\otimes (n/2-1)}\right\}
			\left\{ (4 \flip_{(2)} -\1)^{\otimes (n/2-1) } \otimes \1 \right\}  \myright] \\
			&=
			4 \, t_1(n) + 2 \, t_2(n) \, .
		\end{aligned}
	\end{equation}
	Relations \eqref{eq:t1_t2_open_recursive} and \eqref{eq:t1_t2_open_base} are proved in \cref{app:tensor_networks}.
\end{proof}
\begin{proof}[Proof of \cref{thm:main}]
	As discussed before, $\frameop$ is diagonal in the Pauli basis, and we only need to characterize its diagonal elements $\osandwich{v}{\frameop}{v}$, which, by \cref{lem:frame_op_diagonal}, are determined by 
	$\partBW(v)$.
	In the first case, when $\partBW(v) = (n/2)$, the circuit retains periodic boundary conditions, and $\osandwich{v}{\frameop}{v}$ is proportional to $\tper(n)$ according to \cref{eq:vSv_lem}.
	By \cref{lem:recursive_relations_periodic}, $\tper(n)$ can be expressed as the sum of two terms that can be calculated recursively using the system of recurrence relations \eqref{eq:t1_t2_t3_periodic_recursive}, and one can check that the solution of this system is given by
	\begin{equation}
		\label{eq:recurrence_rel_periodic_solution}
		\tper(n) = 
		6^{n/2} \left[ \left(\sqrt{41}+5\right)^{n/2} + (-1)^{n/2} \left(\sqrt{41} - 5\right)^{n/2} \right] .
	\end{equation}
	This solution can be found with a computer algebra system, or, by hand, using the $Z$-transform \cite{oppenheim1999discrete}, which maps recurrence relations to algebraic equations.
	Inserting \cref{eq:recurrence_rel_periodic_solution} into \cref{eq:vSv_lem} then shows \cref{eq:vSv_final_periodic}.
	
	Otherwise, $\partBW(v)$ determines the factorization into (possibly many) subcircuits with open boundary conditions.
	In particular, each entry $l \in \partBW(v)$ determines a (factorized) subcircuit acting on $2l+2$ qubits.
	Each such subcircuit evaluates up to a multiplicative constant to $\topen(2l+2)$, that, by \cref{lem:recursive_relations_open} fulfills the recurrence relations \eqref{eq:t1_t2_open_recursive}.
	These also admit a closed-form solution that can be found with the same techniques and is given as
	\begin{equation}
		\label{eq:recurrence_rel_open_solution}
		\topen(m)
		=
		\frac{6^{m/2}}{6 \sqrt{41}} 
		\left[
		\left(25-3\sqrt{41}\right)\left(\sqrt{41}+5\right)^{m/2} + (-1)^{m/2+1}\left(25+3\sqrt{41}\right)\left(\sqrt{41}-5\right)^{m/2}
		\right] ,
	\end{equation}
	for any $m \equiv 2l+2$.
	This shows \cref{eq:vSv_linear_final} for each subcircuit.
\end{proof}

As a final remark, observe that the proof of the theorem can be generalized to systems of arbitrary prime or power of prime local dimension.
In particular, redefining $\tilde v$ according to the local dimension, \cref{eq:vSv:general_main} holds true with obvious modifications for any prime $\ldim$, and the same holds for the traces \cref{eq:tper(n)_def,eq:topen(n)_def}.

Finally, one may wonder whether it is possible to find analytical expressions for the frame operator associated with circuits with more layers.
However, in this case, splitting the scrambling over multiple layers is more involved, since non-trivial `intertwinings' between layers occur.
This implies that the analytical contraction of the corresponding tensor network is more difficult compared to the calculations of \cref{app:tensor_networks}, and one might only resort to numerical methods to evaluate the frame operator \cite{Akhtar2023scalableflexible, bertoniShallowShadowsExpectation2022}.

\section{Discussion and comparison with local Clifford circuits}
\label{sec:discussion}

Given the closed analytic expressions for the frame operator associated with the \ac{BW} circuit, we can now compare the performance with the \acp{LC} ensemble.

For \acp{LC}, the variance is exponential in the weight of the Pauli observable \cite{huangPredictingManyProperties2020}.
More precisely, for any $v \in \FF_2^{2n}$ we define the \emph{local Cliffords support} as the set of weighted sites of $v$, namely $\suppLC(v) \coloneqq \{i \mid \wt(v)_i \neq 0\}$.
Then, since the \acp{LC} ensemble is clearly invariant under multiplication with Pauli operators, one can apply \cref{prop:variance_upper_bound} to get a bound on the variance.
In particular, $\varLC(v) \leq \frac{1}{\osandwich{v}{\frameop_{\mathrm{LC}}}{v}} = 3^{\abs{\suppLC(v)}}$, where $\frameop_{\mathrm{LC}}$ is the frame operator associated with \acp{LC} shadows (see \cref{sec:frame_operator_LC} for the derivation of its matrix elements).
Notice also that, although this bound corresponds to the \emph{shadow norm} \cite{huangPredictingManyProperties2020}, one does not need to maximize over all the states. 
The inequality solely originates in disregarding the square of the first-moment $(\EE_{i,U} f_{W(v)}(i,U))^2$ which agrees by construction with $\Tr(W(v)\rho)^2$ for any ensemble.
Therefore, we are most of the time comparing the exact expressions for the second moment $\EE_{i,U}[f_{W(v)}(i,U)^2]$ allowing us to formally deduce lower and upper bounds.
In the following, all our expressions for the variances are understood as being up to first-moment terms and we write, e.g.~$\varLC = 3^n$.

As derived in the previous section, the variance for the brickwork circuit depends on the partitioning of the brick support into local factors.
In principle, for any Pauli string $v$, we are able to compute such variance by means of \cref{thm:main}.
We also remark that the value obtained in this way are strict upper bounds, since we are only disregarding the square of the first moment term, which is upper bounded by $1$.
For simplicity, to compare the \ac{BW} and \acp{LC} ensembles, we derive lower and upper bounds to the exact variance expression that make the asymptotic scaling transparent.

We obtain the simplest expression when
$\partBW(v) = (n/2)$.
In this case, Theorem~\ref{thm:main} together with Proposition~\ref{prop:variance_upper_bound}
implies that
$0.8\cdot2^{n} < \varBW < 2.1^{n+1}$; see \cref{app:bounds} for details.

To compare the scaling of $\varBW$ to the one of $\varLC$, we introduce some notation to distinguish different regimes.
First, recall that we say that $v \in \FF_2^{2n}$ is supported on the $i$th brick if $\tilde v_i = 1$, and, by definition, $\tilde v_i = 1$ if at least one of $\wt(v)_{2i}$ and $\wt(v)_{2i+1}$ is non-zero.
A supported brick can further be of two types.
If $\wt(v)_{2i} \wedge \wt(v)_{2i+1} = 1$, namely the \emph{logical and} between the two local weights is non-trivial, the $i$th brick is said to be \emph{fully supported}.
Otherwise, the $i$th brick is said to be \emph{half supported} if the \emph{logical xor} between the two local weights is non-trivial, or more formally $\wt(v)_{2i} \veebar \wt(v)_{2i+1} = 1$.

Still assuming $\partBW(v) = (n/2)$, we have two extreme cases:
\begin{itemize}
	\item If each brick is \emph{fully supported}, then $\varLC = 3^n > 2.1^{n+1} > \varBW$ for all $n\geq 2$.
	Thus, the brickwork circuits have an improved sample complexity compared to single qubit random Clifford unitaries.
	The number of samples is reduced by one order of magnitude for $n \geq 8$ and by a factor of about $0.5\cdot 10^{-4}$ for $n = 20$.
	\item If each brick is \emph{half-supported},
	then $\varLC = 3^{n/2} < 1.75^n < 0.8\cdot 2^n < \varBW$ for all $n\geq 2$. 
	In this case, the \ac{BW} circuit retains its periodic structure, while \acp{LC} shadow sees the `correct' number of qubits in the support leading to a smaller sample complexity.
\end{itemize}
Similar considerations apply if $\partBW(v) = (n/2-1)$ i.e.\ when 
$\osandwich v S v$ is given by a single term with open boundary condition. 
Evaluations of the expressions for both cases are summarized in Figure~\ref{fig:BW_vs_LC}.
We observe that the scaling for both cases only differ in a constant factor as we also explain analytically in \cref{app:bounds}.

The two extreme cases suggest that shadows with \ac{BW} circuits outperform the \acp{LC} ones when the number of fully supported bricks reaches a certain threshold.
More specifically, 
based on our bounds, we can guarantee a lower sample complexity with \ac{BW} circuits 
if $\abs{\suppLC(v)} \geq 0.68 (n + 1)$ for the cases $\partBW(v) = (n/2)$ and $\partBW(v) = (n/2-1)$, see \cref{app:bounds}.
Furthermore, the additional constant term can be decreased for larger number of qubits.
Evaluations of the threshold for up to $100$ qubits are summarized in Figure~\ref{fig:thresholds_BW}.

\begin{figure}
	\centering
	\begin{subfigure}[ht]{.5\textwidth}
		\includegraphics[width=1\linewidth]{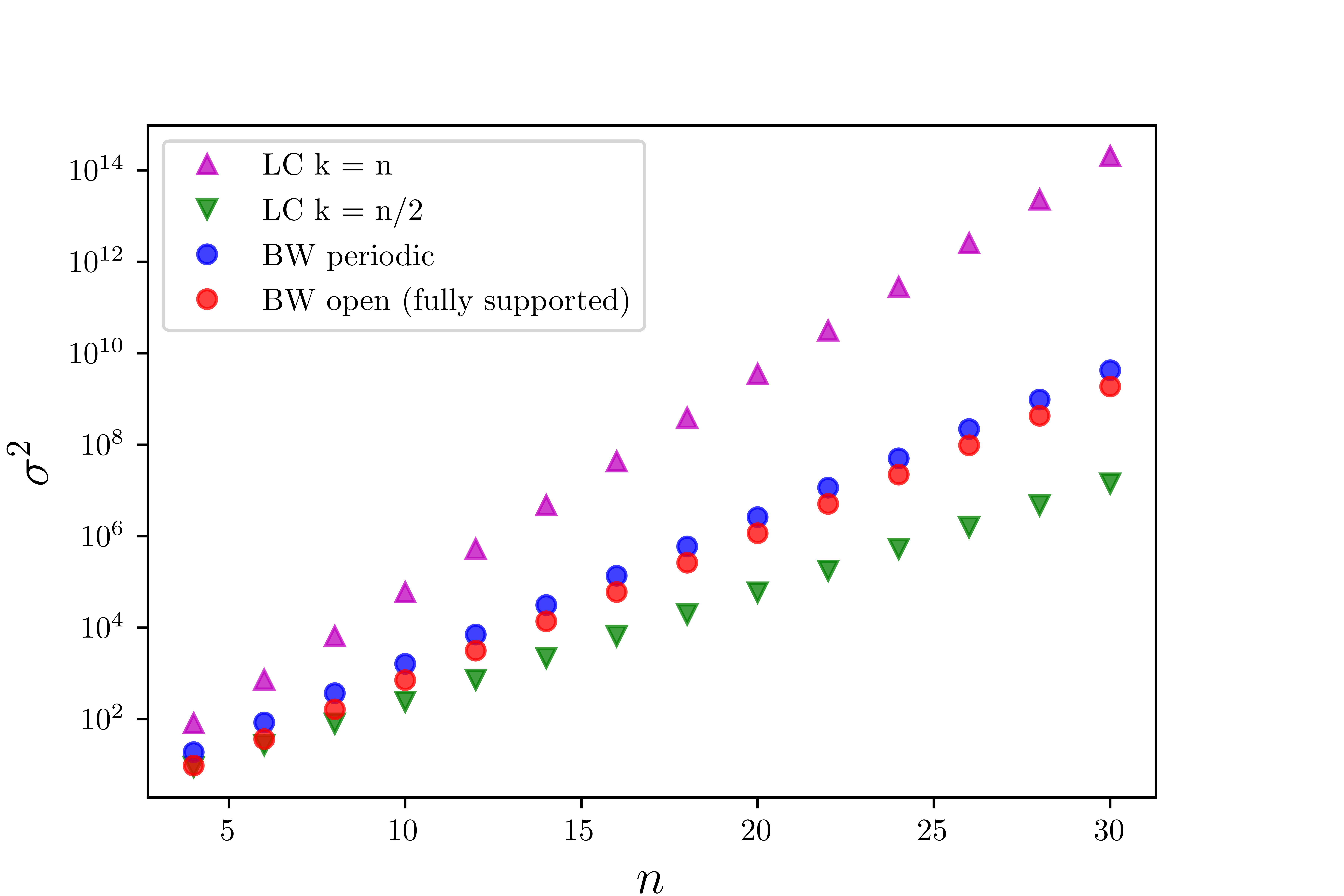}
		\caption{}
		\label{fig:BW_vs_LC}	
	\end{subfigure}%
	\begin{subfigure}[ht]{.5\textwidth}
		\includegraphics[width=1\linewidth]{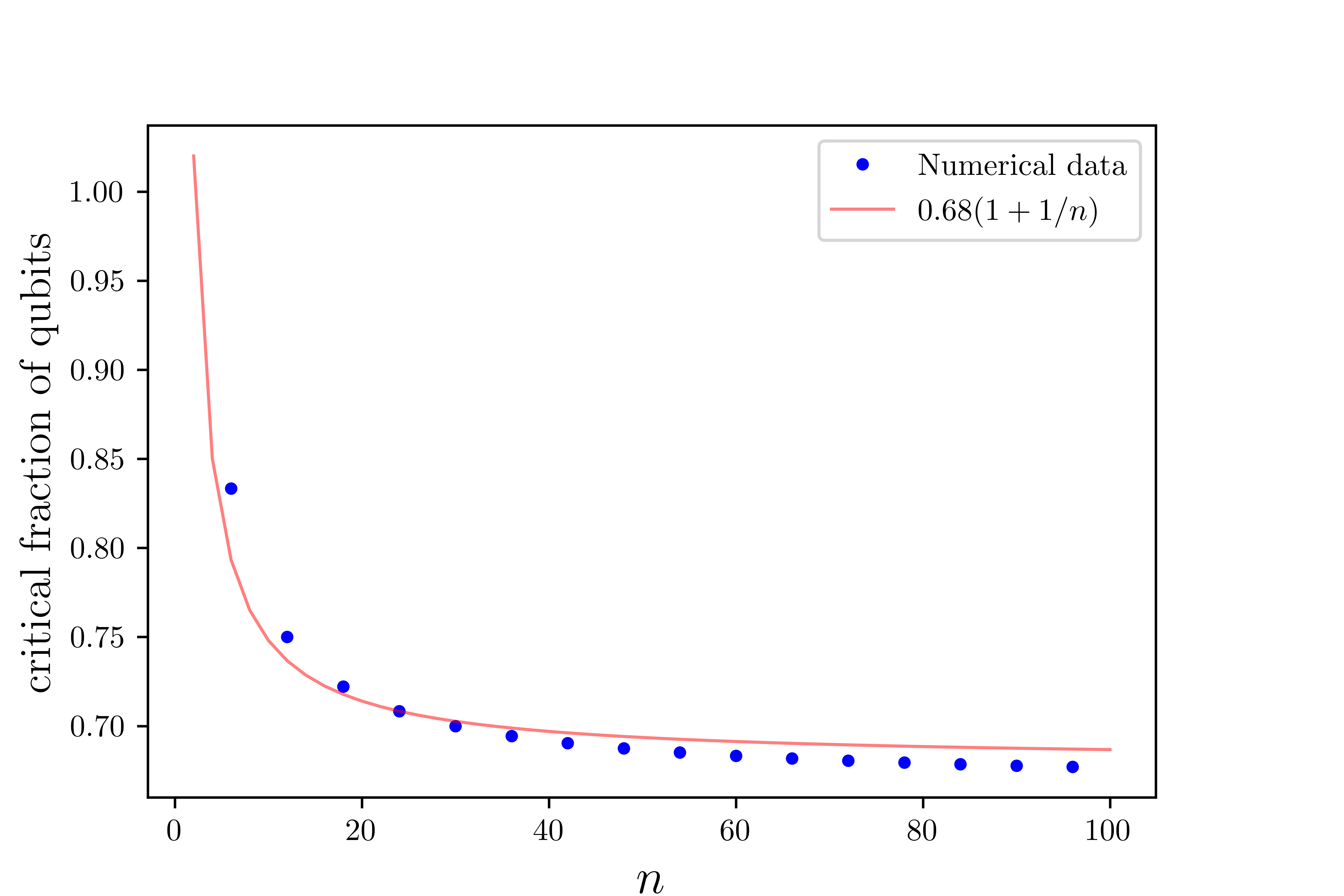}
		\caption{}
		\label{fig:thresholds_BW}	
	\end{subfigure}
	\caption{
		Figure~\ref{fig:BW_vs_LC}: 
		Comparison of variances (exact evaluation of second moments) calculated for brickworks and local Cliffords.
		For magenta and green triangles, we assumed $\suppLC(v) = n$ and $\suppLC(v) = n/2$, respectively.
		For blue and red dots, we assumed $\partBW(v) = (n/2)$ and $\partBW(v) = (n/2-1)$, respectively.
		Figure~\ref{fig:thresholds_BW}: 
		Numerical evaluation of the threshold that determines the \ac{BW} sample advantage over \acp{LC} shadows in the case $\partBW(v) = (n/2)$.
		On the $x$ axis, we represent the total number of qubits.
		On the $y$ axis, we represent the ratio between the smallest number of qubits such that $\varBW \leq \varLC$ and the total number of qubits.
		The red line represents the ratio between the lower bound for $\abs{\suppLC(v)}$ and the total number of qubits.
		The numerical dots are obtained by a direct comparison of $\varBW$ and $\varLC$:
		For each fixed $n$, and starting from the case where each brick is half-supported, we evaluated both of them for an increasing number of qubits (i.e.\ increasing the number of fully supported bricks), until the condition $\varBW \leq \varLC$ has been satisfied. 
	}
\end{figure}

Relaxing the restriction on $\partBW$, we can ensure that $\varBW \leq \varLC$ provided that
\begin{equation}
	\label{eq:threshold}
	\abs{\suppLC(v)} \geq 0.8 |\partBW(v)|  + 1.4\, 
	|\tilde v| 
	\, ,
\end{equation}
where $|\partBW(v)|$ is the number of entries in $\partBW(v)$ and $|\tilde v|$ the Hamming weight of $\tilde v$, i.e.\ the number of supported bricks in the circuit.
The derivation of this criterion is given in \cref{app:bounds}.

The main contribution (up to rescaling factors) is given by the size of $\suppBW$ (which, in turn, also influences the number of entries in $\partBW$), while the number of connected components in the effective circuit can be seen as a `correction' to the naive comparison between the notions of supported Pauli's and bricks.
In fact, by \cref{eq:vSv:general_main}, the criterion is more likely to be satisfied if the local Pauli's are bunched together: 
Sparse Pauli observables are associated with (effective) \ac{BW} circuits with many disjoint partitions, which imply a higher threshold.
For instance, for a fixed $|\suppLC(v)|$, as the the number of distinct partitions increases, each subcircuit is less densely populated, and the threshold for each subcircuit becomes harder to reach.
\cref{fig:threshold_examples} shows two non-fully supported Pauli observables, supported on a different number of qubits, associated with the \ac{BW} circuit structure.
\cref{eq:threshold} is satisfied by the second Pauli string only.
\begin{figure}[h]
	\centering
	\includegraphics[width=0.8\textwidth]{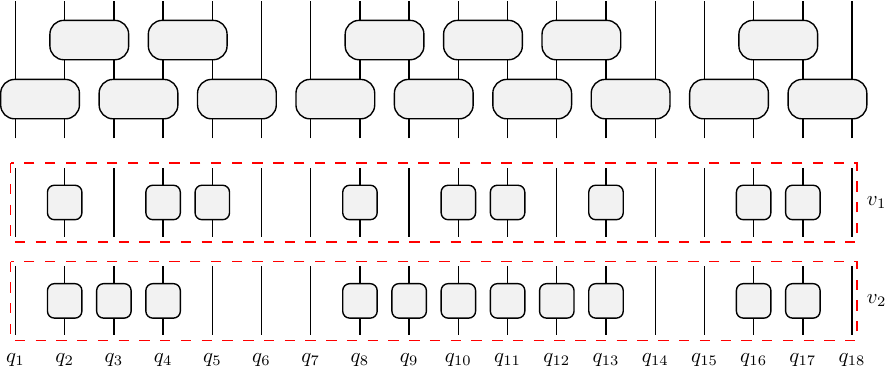}
	\caption{
		Effective circuits associated with two non-fully supported Pauli observables $v_1, v_2$.
		\cref{eq:threshold} is satisfied by $v_2$ only. 
		In particular, we have $|\partBW(v_1)| = |\partBW(v_2)| = (2, 3, 1), |\suppBW(v_1)| = 9, |\suppBW(v_2)| = 11$, which imply $\varBW(v_1) = \varBW(v_2) \approx 58 \cdot 10^3, \varLC(v_1) \approx 19 \cdot 10^3, \varLC(v_2) \approx 177 \cdot 10^3$.
	}
	\label{fig:threshold_examples}
\end{figure}

The threshold criterion is likely to hold for random Pauli observables, since, for a fixed $n$, few additional qubits are needed to reach the threshold.
On the other hand, for any random Pauli string $v \in \FF_2^{2n}$, the probability of the $i$th brick to be fully supported is strictly larger than the probability of being half supported.
Indeed, evaluating the bounds for random Pauli strings we observe that the brickwork circuit gives better performance with high probability drawing random Pauli strings, and $p(\varBW \leq \varLC) \xrightarrow{n \text{\ big} } 1$, see Figure~\ref{fig:BW_vs_LC_probabilistic}.
\begin{figure}[ht]
	\centering
	\includegraphics[width=0.5\linewidth]{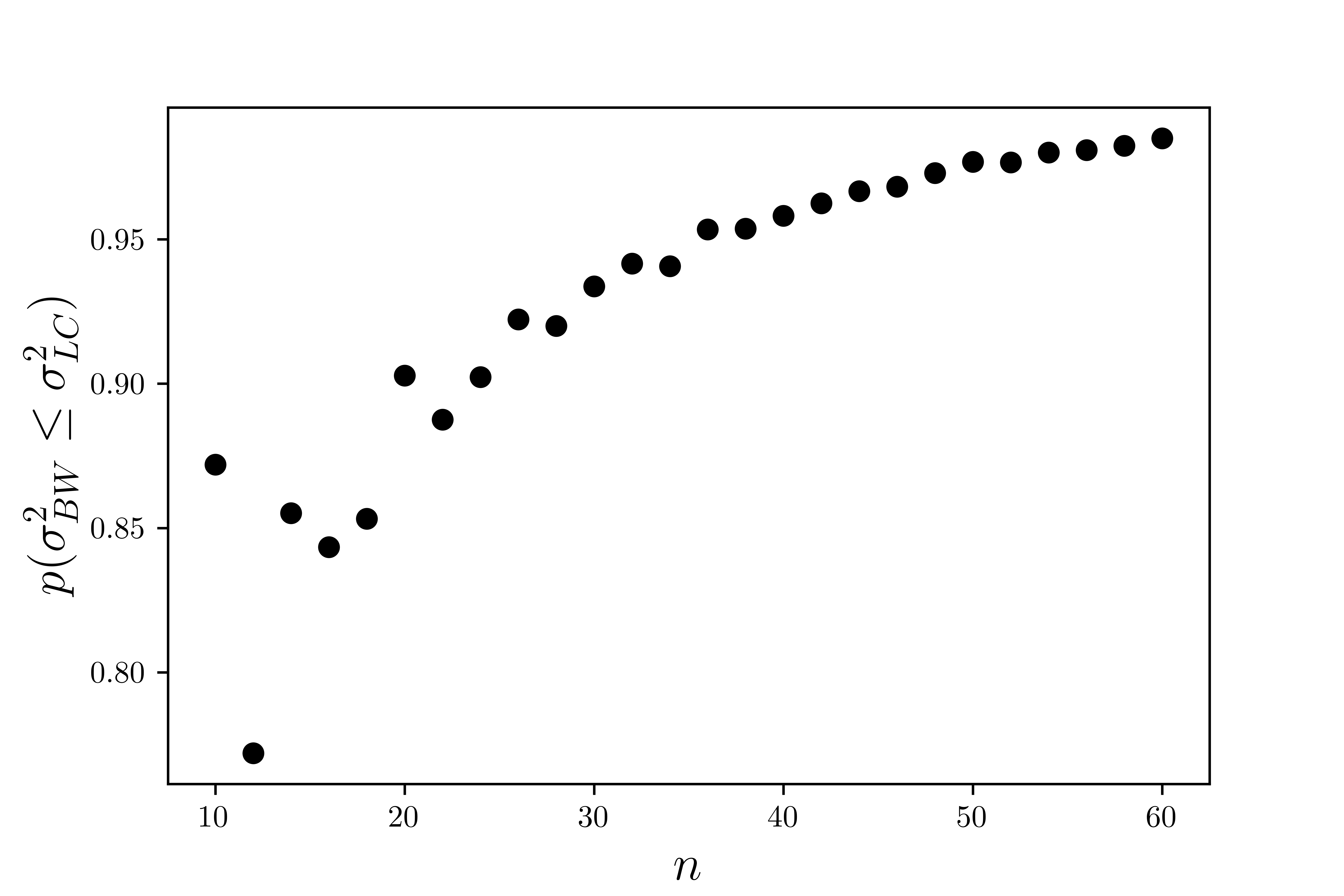}
	\caption{
		Probability of $\varBW \leq \varLC$, evaluated on $2^{16}$ random bitstrings for each value of $n$.
		}
	\label{fig:BW_vs_LC_probabilistic}	
\end{figure}

In conclusion, we showed that \acp{LC} shadows still have their own merit, in particular they are still the best choice for very sparse, local Pauli observables.
However, they are significantly outperformed in all the other cases.
For instance, for fully supported bricks, we observed that the variance is scaling as $\approx 2.1^{n}$, which is very close to the performance of a global Cliffords ensemble for a moderate number of qubits, namely global properties may be predicted using brickwork shadows.
This represents a particular case of the shallow shadows presented in \cite{bertoniShallowShadowsExpectation2022}, where the authors argue that brickwork circuits (in their case, of depth $\log(n)$) are expected to have the same sample complexity as the global Clifford scheme.

\subsection{Numerical experiments}
\label{sec:numerics}

We now compare numerically performances of \ac{BW} and \acp{LC} estimation procedures.
We fix $n = 10$ as the number of qubits, and consider for simplicity $\rho = \ketbra{0}{0}$ as the input state.
Then, we collect numerical data for three different $Z$-type operators, that we assume to be supported on each brick.
In particular, we consider the following Pauli strings:
$\vfull$, which is supported on each qubit, $\vhalf$, where each brick is half supported, and $\vthres$, which is supported on $8$ qubits, ensuring it satisfy the threshold criterion discussed in the last section.
Notice that it does not matter where the two half supported bricks are located in $\vthres$, since all of them are supported.
Finally, drawing unitaries from the Clifford group, we can classically simulate the whole procedure efficiently using standard techniques \cite{gottesmanHeisenbergRepresentationQuantum1998, aaronsonImprovedSimulationStabilizer2004,koenig_how_2014}.
More details on the algorithms are provided in \cref{app:numerics} and at the following link: \url{https://github.com/MirkoArienzo/shadow_short_circuits}.

Then, we fix $m$ as the number of samples, and compute the empirical average over all samples as described in \cref{sec:classical_shadows}, which yields an estimator for the given observables and $\rho = \ketbra{0}{0}$. 
We run this procedure $100$ times, and evaluate the average of the estimators over all runs.
The latter has standard deviation given by $\sigma / \sqrt{100 m}$, with $\sigma = \stdevBW, \stdevLC$.
Finally, the task is repeated for different values of $m$.

The results of the simulations, shown in Figure~\ref{fig:numerics}, agree with the previous discussion.
In particular, for circuits that are fully supported or over the threshold, the convergence to the expected values is faster using \ac{BW} circuits, see Figure~\ref{fig:numerics_BW_vs_LC_full} and Figure~\ref{fig:numerics_BW_vs_LC_threshold}, while the converse happens in the case of half supported circuits, see Figure~\ref{fig:numerics_BW_vs_LC_half}.
\begin{figure}
	\centering
	\begin{subfigure}[ht]{.5\textwidth}
		\includegraphics[width=1\linewidth]{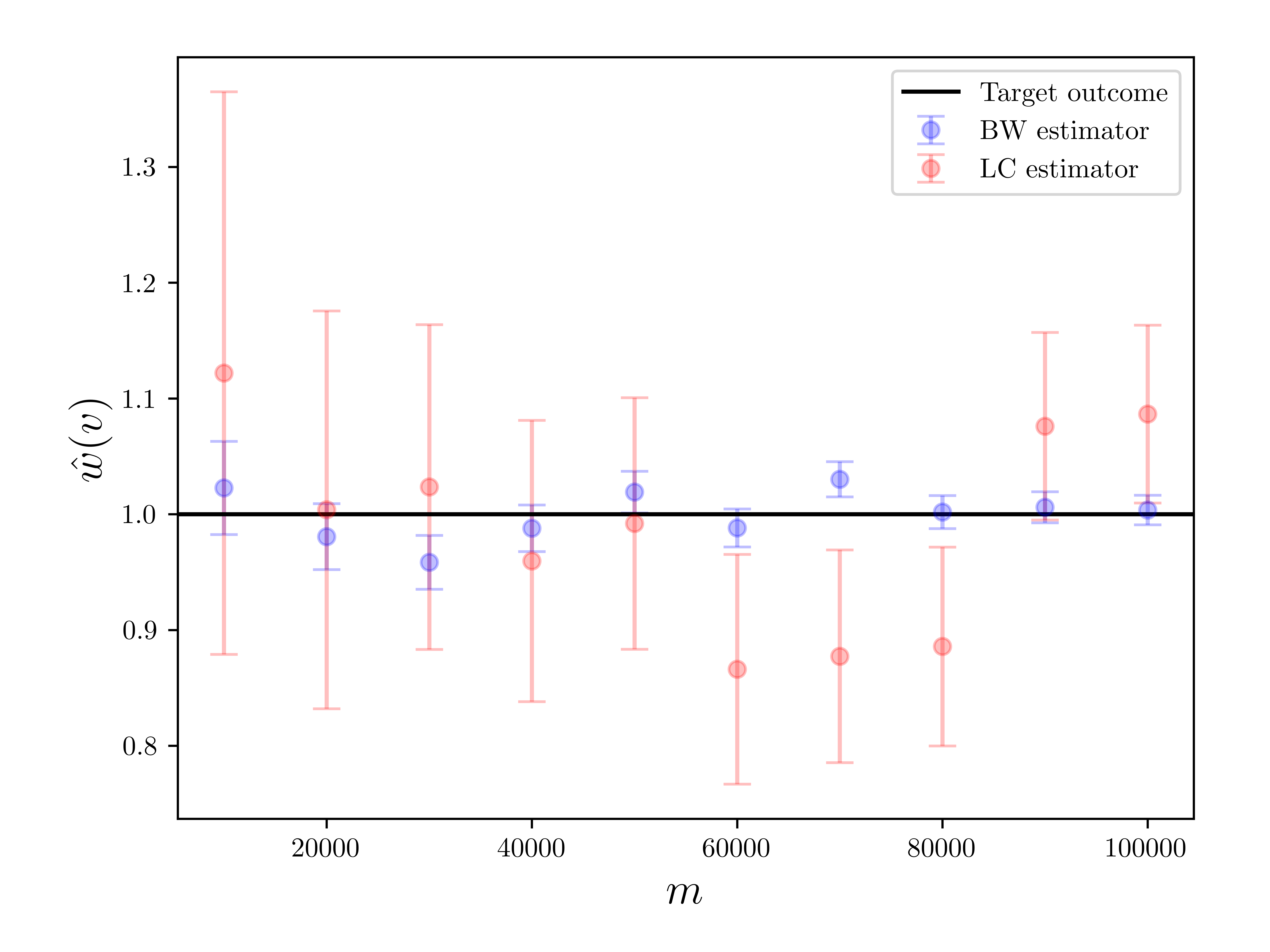}
		\caption{$\vfull$ plot: $10$ qubits in the support of $v$.}
		\label{fig:numerics_BW_vs_LC_full}	
	\end{subfigure}%
	\begin{subfigure}[ht]{.5\textwidth}
		\includegraphics[width=1\linewidth]{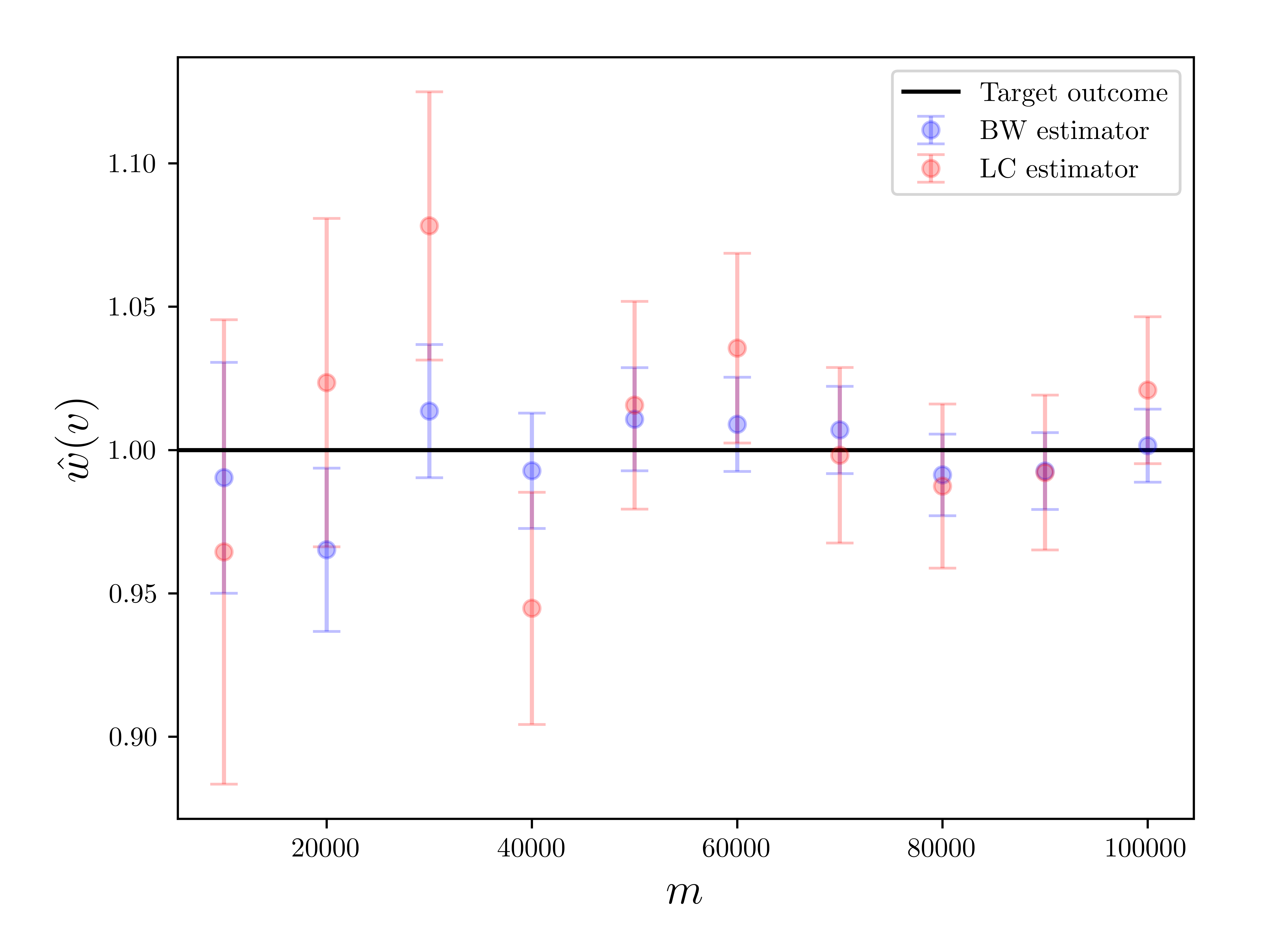}
		\caption{$\vthres$ plot: $8$ qubits in the support of $v$.}
		\label{fig:numerics_BW_vs_LC_threshold}	
	\end{subfigure}%
	\\
	\begin{subfigure}[ht]{.5\textwidth}
		\includegraphics[width=1\linewidth]{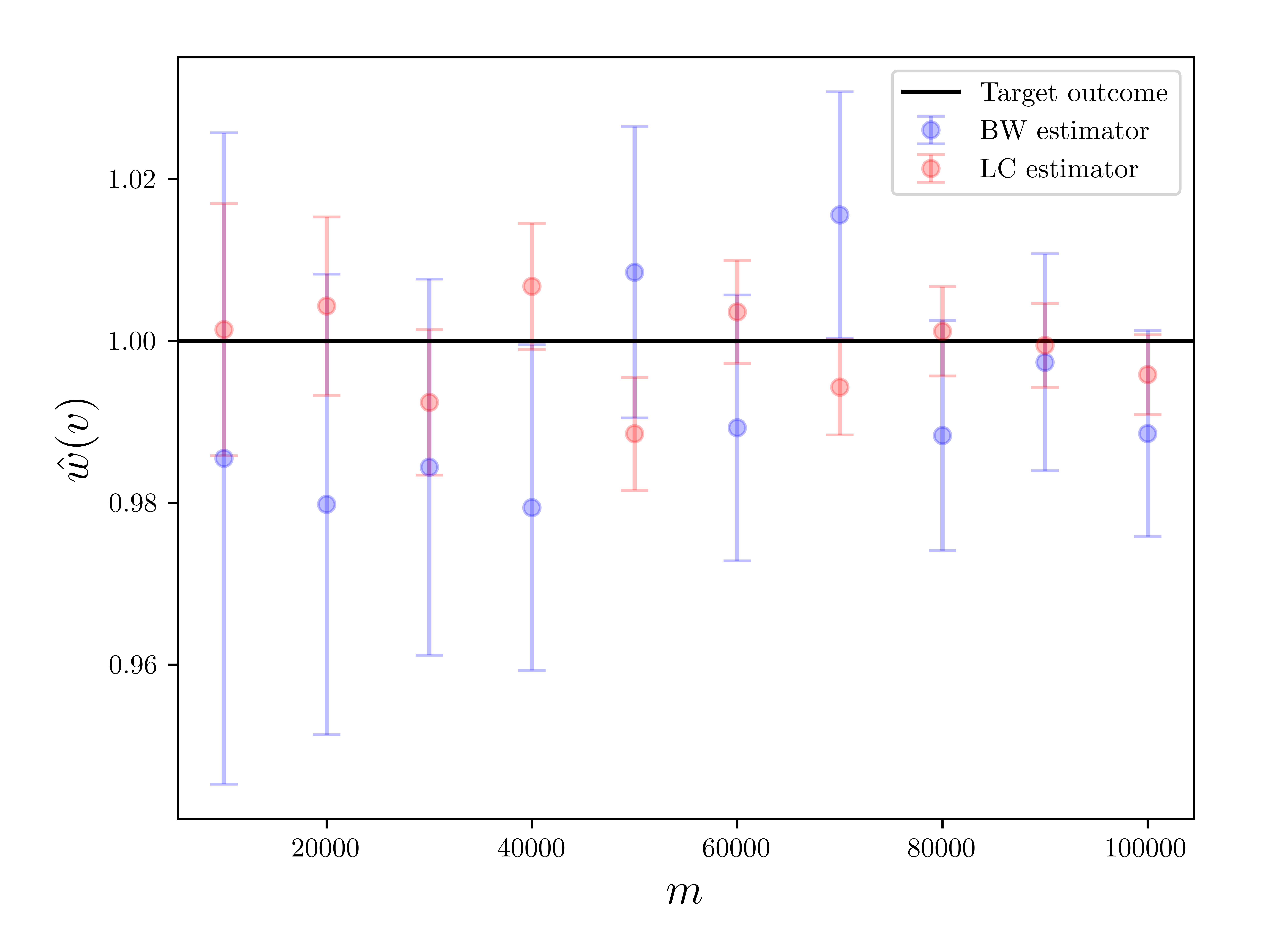}
		\caption{$\vhalf$ plot: $5$ qubits in the support of $v$.}
		\label{fig:numerics_BW_vs_LC_half}	
	\end{subfigure}%
	\caption{
		Convergence of the estimators $\hat w(\vfull)$, $\hat w(\vthres)$, and $\hat w(\vhalf)$, respectively, as defined in \cref{sec:classical_shadows}.
		We consider a system of $10$ qubits with input state $\rho=\ketbra{0}{0}$, ensuring we can classically simulate the whole procedure efficiently.
		For each fixed $m$, $100$ runs have been performed and then the average over all of them, with the respective standard deviation, has been plotted.
		BW estimator is converging faster for $\vfull$ and $\vthres$.
	}
	\label{fig:numerics}
\end{figure}

\section{Conclusions}
Shadow estimation with short circuits can interpolate between the originally proposed constructions with local and global Clifford unitaries.
We derived closed-form analytic expressions for the frame operator (and its inverse) associated with the arguably simplest circuit construction: one round of a brickwork circuit.
In particular, we observed how the $2$-design property of bricks can be used to determine systems of recurrence relations for the contributions of subcircuits with effectively periodic and open boundary conditions.
The recurrence relations admit closed-form solutions and can be used to calculate the classical shadow and the corresponding linear estimators.
Furthermore, and in contrast to numeric approaches, we explicitly worked out and analyse the sample complexity of Pauli estimation with one round of a brickwork circuit.
This gave rise to a simple criterion for the structure of the support of the Pauli observable in order to have a scaling advantage compared to using local Clifford unitaries.
Going beyond the worked out example, our results 
provide clear evidence for the potential of using short circuits for shadow estimation but also indicate limitations and the need for careful comparison in specific applications.
Besides shadow estimation, the analytic expression for the frame operator can potentially also unlock the usage of short depth circuits in related tasks involving randomized measurements such as benchmarking and mitigation.

We expect that generalizing our analytic approach will become considerably more involved for deeper circuits, especially for deriving the exact frame operator required for constructing the classic shadow.
To this end, we regard numerical methods as a considerably more flexible approach.
Nonetheless, it might be possible to derive sample complexity bounds for deeper circuits and other estimation tasks following and generalizing the argument presented here.
Altogether, our work provides both the analytical results and the motivation for implementing short depth quantum circuits in an actual shadow estimation experiment.
After all, the merits of the approach have to be evaluated in practice.

\section{Acknowledgements}
We thank Leandro Aolita and Renato Mello for fruitful discussions on shadow estimation and 
Christian Bertoni, Jonas Haferkamp, Marcel Hinsche, Marios Ioannou, Jens Eisert, and Hakop Pashayan for discussions on relations to their work \cite{bertoniShallowShadowsExpectation2022}. 
This work has been funded by 
the Deutsche Forschungsgemeinschaft (DFG, German Research Foundation) within the Emmy Noether program (grant number 441423094), 
the German Federal Ministry of Education and Research (BMBF) within the funding program ``quantum technologies -- from basic research to market'' via the joint projects
MIQRO (grant number 13N15522)
and 
MANIQU (grant number 13N15578), 
and by 
the Fujitsu Services GmbH as part of the endowed professorship ``Quantum Inspired and Quantum Optimization''.

\newpage
\section*{Appendices}
\appendix

\section{Auxiliary results}
\label{app:auxiliary_results}

Here, we include some basic facts concerning moment operators.
For a self-contained introduction, see \cite[Section~II.I]{Kliesch2020TheoryOfQuantum}.

Let $\mu$ be the Haar probability measure on the unitary group $\U(\gdim)$.
The $k$-\emph{th moment operator} of $\mu$ is defined by
\begin{equation}
	\mathcal M_\mu^{(k)} (A)
	\coloneqq
	\int_{\U(\gdim)} d\mu(U) \, U^{\otimes k} A U^{\ad \otimes k}, \quad A \in \CC^{k\gdim \times k\gdim}. 
\end{equation}
The operator $\mathcal M_\mu^{(k)}(A)$ is often referred to as the group twirling of the operator $A$.
It is very easy to check with the invariance of the Haar measure that $\mathcal M_\mu^{(k)}$ commutes with $U^{\otimes k}$ for any $U \in \U(\gdim)$, and in particular it is the projector onto the set of such operators endowed with the Hilbert-Schmidt inner product.
Hence, by Schur-Weyl duality, one can prove that $\mathcal M_\mu^{(k)}$ can be expressed as the linear combination of some suitable projectors associated with the irreps of the symmetric group $\Sym_k$.
In particular, we will use the following result:
\begin{lemma}
	\label{lem:Haar_second_moment}
	Let $\nu$ be the normalized Haar measure on $\U(\gdim)$.
	Then, for any integer $k \geq 1$
	\begin{equation}
		\label{eq:k_moment_zero}
		\EE_{U \sim \nu} \,  U^{\otimes k} \ketbra 00 U^{\otimes k\dagger} 
		=
		\binom{k+\gdim-1}{\gdim-1}^{-1}
		\,  \Psym[k] \, ,
	\end{equation}
	which evaluates to $\frac{2}{\gdim(\gdim+1)} \, (\1+\flip)$ for $k=2$, where $\Psym[k]$ is the projector onto the completely symmetric subspace and $\flip\in \L(\CC^d\otimes \CC^d)$ the flip operator. 
	Moreover, given $u, v \in \FF_2^{2n}$,
	\begin{equation}
		\EE_{U \sim \nu} \,  U^{\otimes 2} W(u) \otimes W(v) U^{\otimes 2\dagger} 
		= 
		\begin{cases}
			\1 \, , \quad \text{if} \ u = 0 \ \text{and} \ v = 0 \, \\
			\frac{\delta_{u, v}}{\gdim^2-1}\, (\gdim\, \flip - \1) \, , \quad \text{otherwise}
		\end{cases} \, ,
	\end{equation} 
	where $\delta_{u,v}$ is the Kronecker $\delta$ over $\FF_2^{2n}$.
\end{lemma}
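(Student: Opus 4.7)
The plan is to obtain both statements from the standard Schur--Weyl decomposition of the second commutant of $\U(\gdim)$ and then evaluate the resulting isotypic coefficients on the specific operators of interest. Throughout, I would denote the twirl channel by $\mathcal M^{(k)}_\nu(A) = \EE_{U\sim\nu}\,U^{\otimes k} A U^{\otimes k\dagger}$ and use the fact, already recalled in the text preceding the lemma, that $\mathcal M^{(k)}_\nu$ is the orthogonal projection (with respect to the Hilbert--Schmidt inner product) onto the commutant $\{X : [X, U^{\otimes k}] = 0\ \forall U\}$.

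First I would treat \cref{eq:k_moment_zero}. Since $\ketbra{0}{0}^{\otimes k}$ is invariant under every permutation of the $k$ tensor factors, its image $\mathcal M^{(k)}_\nu(\ketbra 00^{\otimes k})$ lies in the algebra of permutation operators and, in addition, is invariant under $U^{\otimes k}$. On the isotypic decomposition of $(\CC^\gdim)^{\otimes k}$ provided by Schur--Weyl, this singles out the totally symmetric component, so the image must be proportional to the projector $\Psym[k]$. The constant is fixed by taking the trace of both sides: since $\Tr[\ketbra 00^{\otimes k}] = 1$ and $\Tr[\Psym[k]] = \binom{\gdim+k-1}{\gdim-1}$ (the dimension of the symmetric subspace), the proportionality constant is $\binom{\gdim+k-1}{\gdim-1}^{-1}$. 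Specializing to $k=2$ with $\Psym[2] = \tfrac12 (\1+\flip)$ and $\binom{\gdim+1}{\gdim-1} = \gdim(\gdim+1)/2$ gives the stated $\tfrac{2}{\gdim(\gdim+1)}(\1+\flip)$.

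For the Pauli part, I would use the $k=2$ decomposition of the commutant as the span of $\{\1, \flip\}$, equivalently as the direct sum of the symmetric and antisymmetric parts with projectors $\Psym[2] = (\1+\flip)/2$ and $P_{\mathrm{asym}} = (\1-\flip)/2$. Being the Hilbert--Schmidt projection onto this two-dimensional commutant, $\mathcal M^{(2)}_\nu$ acts as
\begin{equation}
\mathcal M^{(2)}_\nu(A) = \frac{\Tr[\Psym[2] A]}{\Tr[\Psym[2]]} \Psym[2] + \frac{\Tr[P_{\mathrm{asym}} A]}{\Tr[P_{\mathrm{asym}}]} P_{\mathrm{asym}} \, .
\end{equation}
Plugging in $A = W(u)\otimes W(v)$ I would use the two Pauli identities $\Tr[W(u)\otimes W(v)] = \Tr W(u)\,\Tr W(v) = \gdim^2\,\delta_{u,0}\delta_{v,0}$ and $\Tr[\flip\, W(u)\otimes W(v)] = \Tr[W(u)W(v)] = \gdim\,\delta_{u,v}$, which together determine both numerators. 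The case $u=v=0$ recovers $A = \1$, whose twirl is trivially $\1$. For the remaining case at least one of $u,v$ is nonzero, and only $u = v \neq 0$ can give a nonvanishing result; substituting the values of the numerators and denominators ($\Tr[\Psym[2]] = \gdim(\gdim+1)/2$, $\Tr[P_{\mathrm{asym}}] = \gdim(\gdim-1)/2$) and combining the two fractions over a common denominator $\gdim^2-1$ yields $\frac{\delta_{u,v}}{\gdim^2-1}(\gdim\,\flip - \1)$, as claimed.

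The main obstacle is essentially presentational: one has to justify cleanly that the twirl is the orthogonal projection onto the commutant (which only requires invariance of the Haar measure and a variance-minimization or idempotency argument) and quote Schur--Weyl duality for $k=2$ to assert that this commutant is exactly $\linspan\{\1,\flip\}$. Once these ingredients are in place, both formulas reduce to short trace computations, and no further structural work is required.
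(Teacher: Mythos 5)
The paper offers no argument of its own for this lemma---its ``proof'' is simply a pointer to Sec.~II.I of Ref.~\cite{Kliesch2020TheoryOfQuantum}---so your plan is best read as supplying the standard derivation that the citation delegates, and it does so essentially correctly. Your route (the twirl is the Hilbert--Schmidt-orthogonal projection onto the commutant, which for $k=2$ is $\linspan\{\1,\flip\}$ by Schur--Weyl; the coefficients are then fixed by $\Tr[W(u)\otimes W(v)]=\gdim^2\,\delta_{u,0}\delta_{v,0}$ and $\Tr[\flip\, W(u)\otimes W(v)]=\gdim\,\delta_{u,v}$) is exactly the textbook argument, and your algebra for the Pauli part indeed lands on $\frac{\delta_{u,v}}{\gdim^2-1}(\gdim\,\flip-\1)$ for $(u,v)\neq(0,0)$ and on $\1$ for $u=v=0$. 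Two small repairs are needed. In the $k$-th-moment step, permutation invariance of $\ketbra{0}{0}^{\otimes k}$ plus commutant membership of its twirl only places the output in the span of the isotypic projectors (the center of the permutation algebra), and a single overall trace cannot fix several coefficients; to get proportionality to $\Psym[k]$ add one line, either noting that $\ket{0}^{\otimes k}$ lies in the symmetric subspace, which is an irreducible $\U(\gdim)$-module, and invoking Schur's lemma, or observing that each isotypic coefficient equals $\Tr[P_\lambda\,\ketbra{0}{0}^{\otimes k}]/\Tr[P_\lambda]$ and hence vanishes for every non-symmetric $\lambda$. Second, your $k=2$ specialization actually yields $\frac{2}{\gdim(\gdim+1)}\,\Psym[2]=\frac{1}{\gdim(\gdim+1)}(\1+\flip)$; the parenthetical value $\frac{2}{\gdim(\gdim+1)}(\1+\flip)$ in the lemma statement carries a spurious factor of $2$ (the form actually used downstream, $\frac{1}{10^{n/2}}\Psym[2]^{\otimes n/2}$ per two-qubit brick, is consistent with your computation), so you should flag the typo rather than claim your numbers reproduce the stated expression verbatim.
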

\begin{proof}
	See e.g.\ \cite[Section~II.I]{Kliesch2020TheoryOfQuantum}. 
\end{proof}
Finally, recall that a \emph{unitary $k$-design} is a probability measure $\nu$ on $\U(\gdim)$ which reproduce expectation values of the Haar measure $\mu$ up to degree $k$.
Formally, this means that, for any $A \in \C^{kN \otimes kN}$,
\begin{equation}
	\int_{\U(\gdim)} d\nu(U) U^{\otimes k} A U^{\ad \otimes k} = 
	\int_{\U(\gdim)} d\mu(U) U^{\otimes k} A U^{\ad \otimes k} \, .
\end{equation}
Therefore, the relevant results (for our purposes) in \cref{lem:Haar_second_moment} hold for any unitary $2$-design.

We remark that unitary $k$-designs are fundamental for practical implementations, since we are often interested only in the first $k$ moments and, more importantly, drawing Haar random unitaries is a very hard task.

\subsection{Frame operator for Local Cliffords shadows}
\label{sec:frame_operator_LC}
Here, we include the calculation for the frame operator of \ac{LC} shadows.
It follows the same steps as the calculation given in \cref{sec:proof_theorem} for \ac{BW} shadows.
However, dealing with only one layer of Clifford unitaries makes things more straightforward.
In particular, given $u, v \in \FF_2^{2n}$, and denoting by $\frameop_{\mathrm{LC}}$ the frame operator associated with \ac{LC} shadows, we have
\begin{equation}
	\label{eq:Svv_LC}
	\begin{aligned}
		\osandwich{u}{\frameop_{\mathrm{LC}}}{v} 
		& = 
		\frac{1}{\gdim}
		\sum_{i} \EE_{U \sim \mu} \obraket{u}{E_{i,U}} \obraket{E_{i,U}}{v} \\
		& =
		\frac{1}{\gdim}
		\sum_i \EE_{U \sim \mu} \Tr[W(u)^\dagger U \ketbra ii U^\dagger]\,  \Tr[U \ketbra ii U^\dagger W(v)]
		\\
		& =
		\EE_{U \sim \mu} \sandwich{0}{U^{\otimes 2\dagger} W(u) \otimes W(v) U^{\otimes 2}}{0} \, .
	\end{aligned}
\end{equation}
Considering the factorization
\begin{equation}
	W(v) = W(v_1) \otimes \dots \otimes W(v_n) \, ,
\end{equation}
by \cref{lem:Haar_second_moment}, we have
\begin{equation}
	\EE_{U \sim \mu} U^{\otimes 2\dagger} W(u) \otimes W(v) U^{\otimes 2}
	= 
	\delta_{u,v}
	\frac{1}{3^{\abs{\suppLC(v)}}} \left( 2 \, \flip - \1 \right)^{\otimes \abs{\suppLC(v)}} \, , 
\end{equation}
where $\suppLC(v) \coloneqq \{ i \mid \wt(v)_i \neq 0 \}$.

Therefore, it follows
\begin{equation}
	\begin{aligned}
		\osandwich{v}{\frameop_{\mathrm{LC}}}{v}
		& = 
		\frac{1}{3^{\abs{\suppLC(v)}}} 
		\sandwich{0}{2 \flip - \1}{0}^{\abs{\suppLC(v)}} \\
		& =
		\frac{1}{3^{\abs{\suppLC(v)}}} \, .
	\end{aligned}
\end{equation}

\section{Tensor networks for Lemma~\ref{lem:recursive_relations_periodic} and \ref{lem:recursive_relations_open}}
\label{app:tensor_networks}

In this section, we show how $t_1, t_2, t_3$ (respectively, $t_1, t_2$) appearing in \cref{lem:recursive_relations_periodic} (respectively, \cref{lem:recursive_relations_open}) can be written as a system of recurrence relations using tensor networks.

First, notice that each operator in the traces \cref{eq:tper(n)_def,eq:topen(n)_def} acts on two copies of $2$ qubits.
That means that each brick is represented by two overlapping copies, see \cref{fig:brick_two_copies}.
\begin{figure}[ht]
	\centering
	\includegraphics{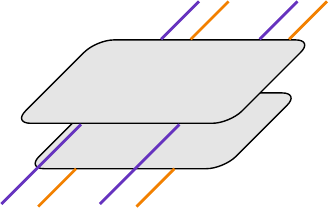}
	\caption{Each local operator corresponds to two overlapping copies of a brick in the \ac{BW} circuit.
	}
	\label{fig:brick_two_copies}
\end{figure}
Next, given (the two copies of) a brick, we set for notational purpose
\begin{equation*}
	\begin{tikzpicture}
		[scale=.8,baseline={([yshift=-.5ex]current bounding box.center)},vertex/.style={anchor=base,
			circle,fill=black!25,minimum size=18pt,inner sep=2pt}]
		\node[inner sep=0pt] (full)
		{\includegraphics{standalone/brick3D_2copies}};
	\end{tikzpicture}
	\equiv
	\begin{tikzpicture}
		[scale=.8,baseline={([yshift=-.5ex]current bounding box.center)},vertex/.style={anchor=base,
			circle,fill=black!25,minimum size=18pt,inner sep=2pt}]
		\node[inner sep=0pt] (full)
		{\includegraphics{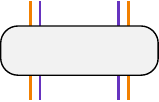}};
	\end{tikzpicture} \, .
\end{equation*}
Now notice that each brick is made up of identities and flips, the latter is usually represented as
\begin{equation}
	\flip = 
	\begin{tikzpicture}
		[scale=.8,baseline={([yshift=-.5ex]current bounding box.center)}]
		\node[inner sep=0pt] (full)
		{\includegraphics{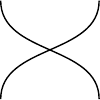}};
	\end{tikzpicture} \, .
\end{equation}
Finally, $\tper(n)$ and $\topen(n)$ will be simplified exploiting linearity and separability of bricks.
For this purpose, and to simplify the notation, we rewrite each pair of lines corresponding to the same qubit as a single one.
In particular, if two lines are straight (the identity operator is applied), we summarize them as a single black line, otherwise as a red line when the flip operator is applied.
For instance, for a brick $\1 + \flip$ in the first layer of the circuit (here $\flip \equiv \flip_{(2)}$ for simplicity), we have
\begin{equation*}
	\begin{tikzpicture}
		\node[inner sep=0pt] (full)
		{\includegraphics{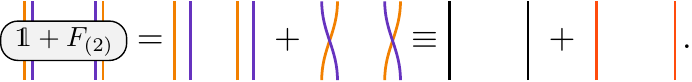}};
	\end{tikzpicture}
\end{equation*}

\subsection{Proof of relations (\ref{eq:t1_t2_t3_periodic_recursive}) and (\ref{eq:t1_t2_t3_periodic_base})}
We show how to derive recurrence relations in \cref{lem:recursive_relations_periodic}.
First, recall the following definitions:
\begin{align}
	t_1(n)
	& \coloneqq
	\Tr \left[ 
	\left(
	\1 \otimes \left( \1 + \flip_{(2)} \right)^{\otimes n/2-1} \otimes \1
	\right)
	\left(
	4 \flip_{(2)} -\1
	\right)^{\otimes n/2} 
	\right] \\
	& = 
	\begin{tikzpicture}
		[scale=.8,baseline={([yshift=-.5ex]current bounding box.center)}]
		\node[inner sep=0pt] (full)
		{\includegraphics{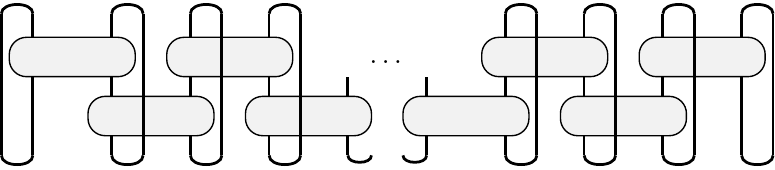}};
	\end{tikzpicture} \, , \\
	t_2(n)
	& \coloneqq
	\Tr \left[ 
	\left(
	\flip \otimes \left( \1 + \flip_{(2)} \right)^{\otimes n/2-1} \otimes \flip
	\right)
	\left(
	4 \flip_{(2)} -\1
	\right)^{\otimes n/2} 
	\right] \\
	& = 
	\begin{tikzpicture}
		[scale=.8,baseline={([yshift=-.5ex]current bounding box.center)}]
		\node[inner sep=0pt] (full)
		{\includegraphics{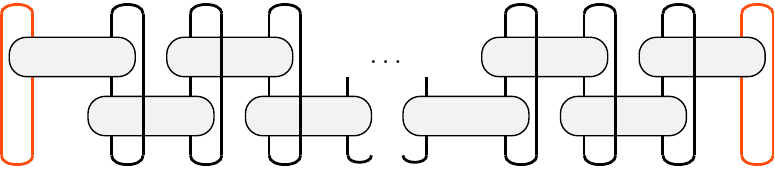}};
	\end{tikzpicture} \, , \\
	t_3(n)
	& \coloneqq
	\Tr \left[ 
	\left(
	\1 \otimes \left( \1 + \flip_{(2)} \right)^{\otimes n/2-1} \otimes \flip
	\right)
	\left(
	4 \flip_{(2)} - \1
	\right)^{\otimes n/2} 
	\right] \\
	& = 
	\begin{tikzpicture}
		[scale=.8,baseline={([yshift=-.5ex]current bounding box.center)}]
		\node[inner sep=0pt] (full)
		{\includegraphics{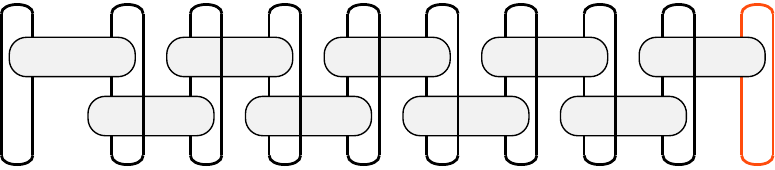}};
	\end{tikzpicture} \, .	
\end{align}
Therefore, we have
\begin{equation*}
	\begin{tikzpicture}
		[scale=.8,baseline={([yshift=-.5ex]current bounding box.center)}]
		\node[inner sep=0pt] (full)
		{\includegraphics{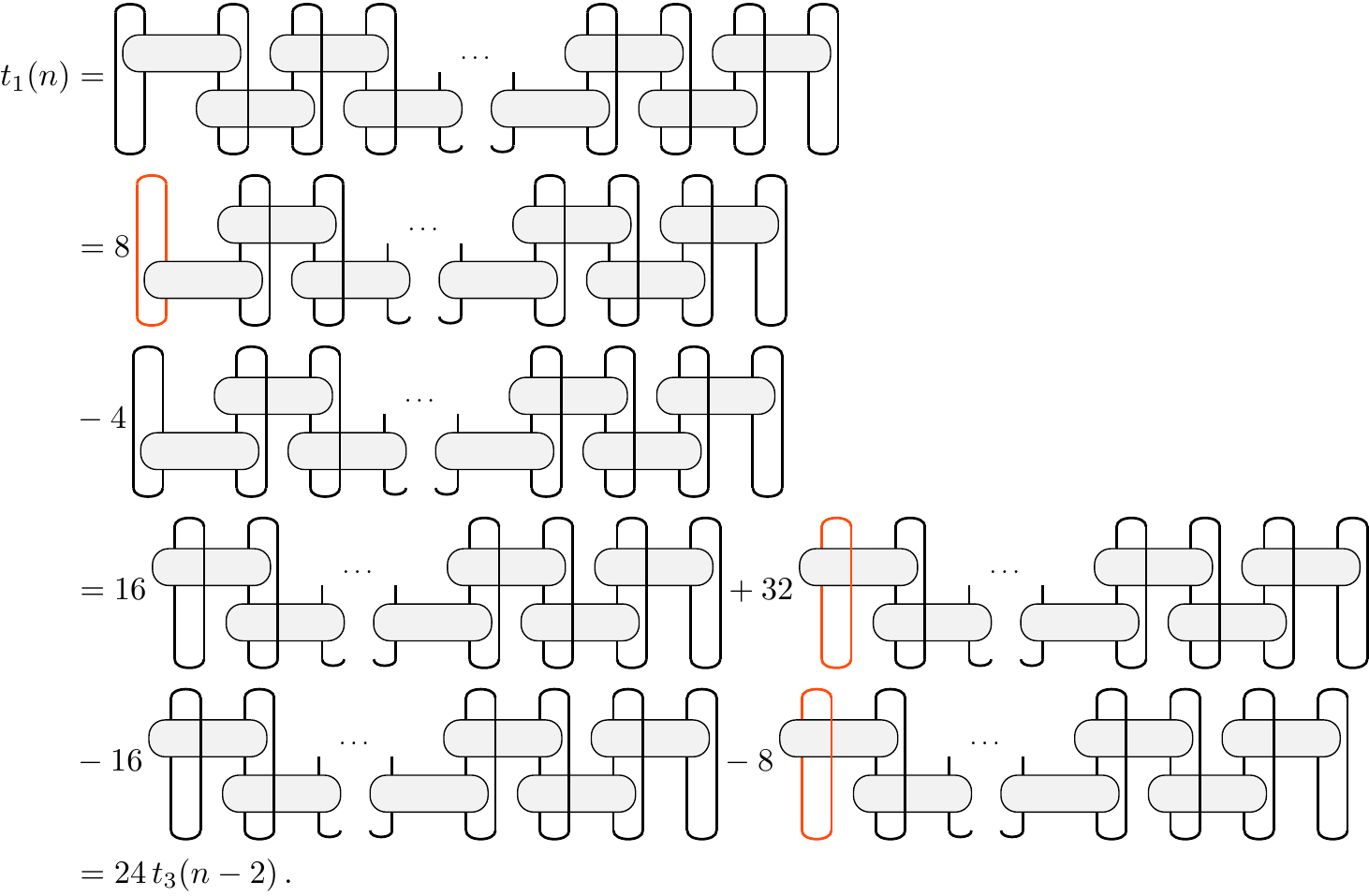}};
	\end{tikzpicture}
\end{equation*}
Similarly, for $t_2$ we obtain
\begin{equation*}
	\begin{tikzpicture}
		[scale=.8,baseline={([yshift=-.5ex]current bounding box.center)}]
		\node[inner sep=0pt] (full)
		{\includegraphics{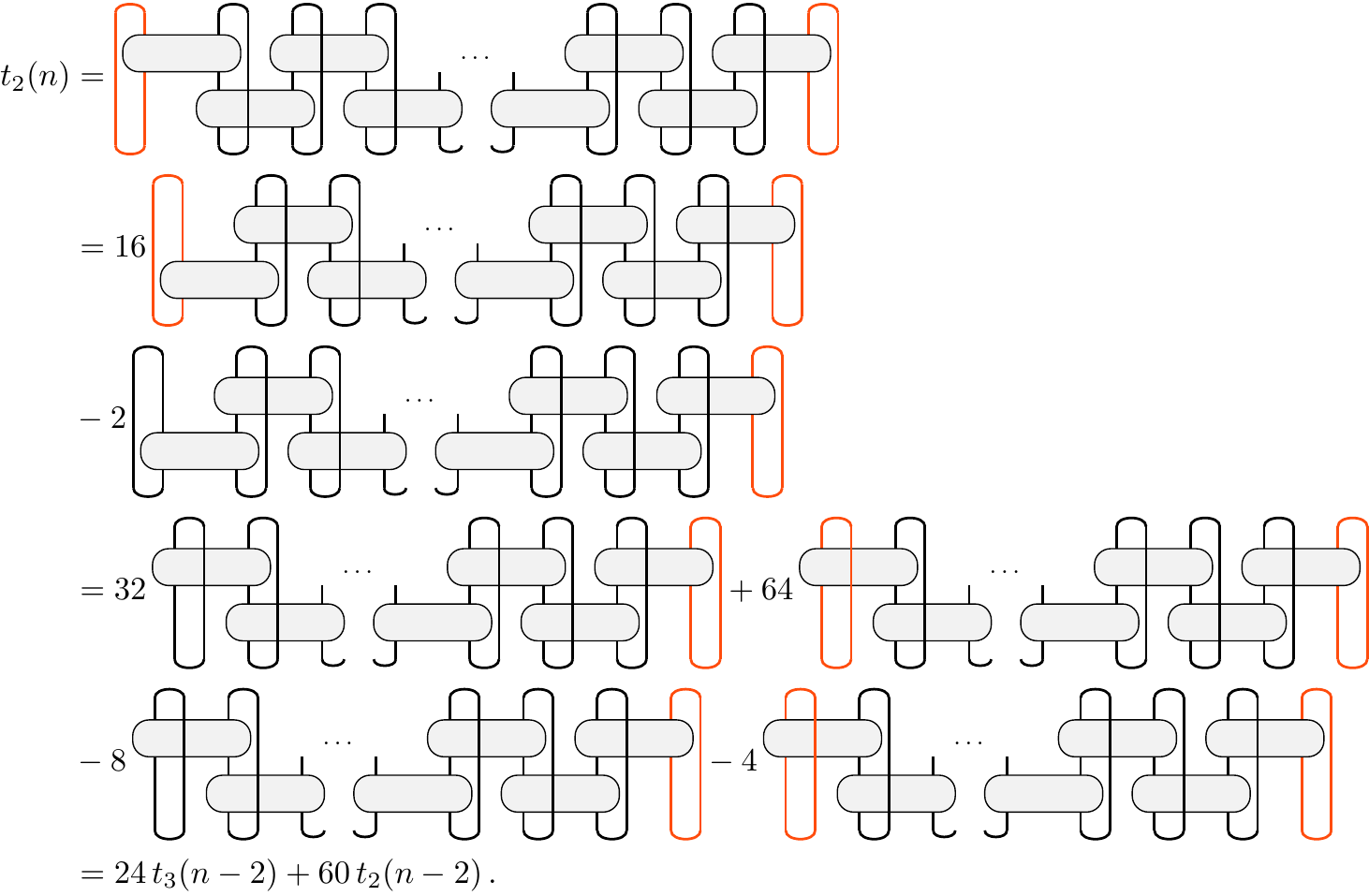}};
	\end{tikzpicture}
\end{equation*}
Finally, for $t_3$ a similar calculation yields
\begin{equation*}
	\begin{tikzpicture}
		[scale=.8,baseline={([yshift=-.5ex]current bounding box.center)}]
		\node[inner sep=0pt] (full)
		{\includegraphics{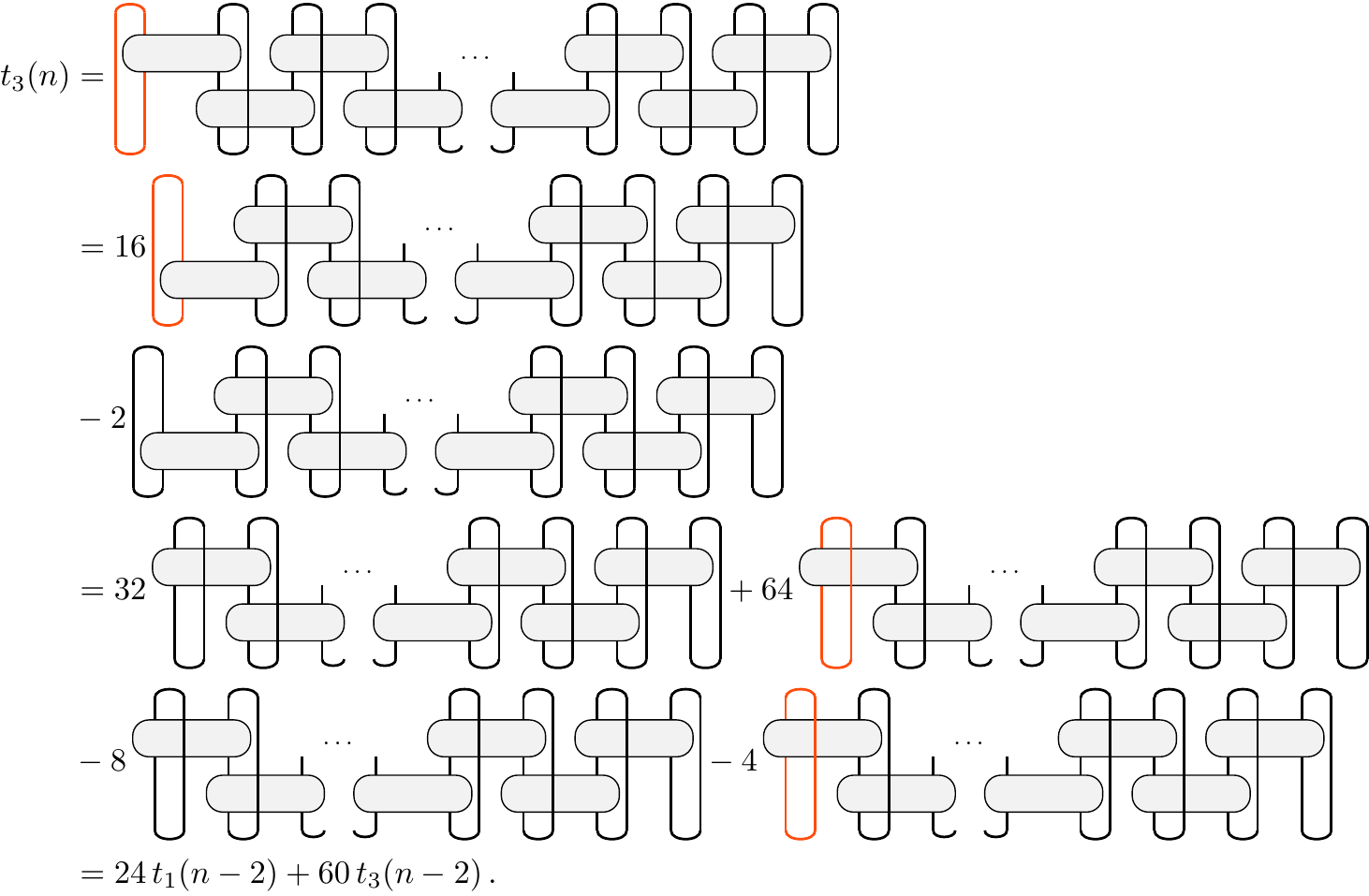}};
	\end{tikzpicture}
\end{equation*}
Moreover,
\begin{equation*}
	\begin{tikzpicture}
		[scale=.8,baseline={([yshift=-.5ex]current bounding box.center)}]
		\node[inner sep=0pt] (full)
		{\includegraphics{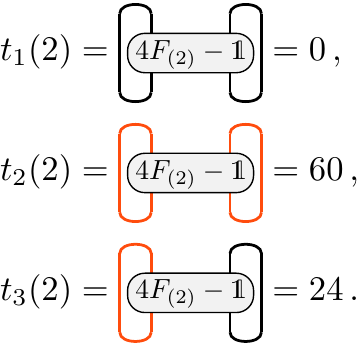}};
	\end{tikzpicture}
\end{equation*}

\subsection{Proof of relations (\ref{eq:t1_t2_open_recursive}) and (\ref{eq:t1_t2_open_base})}
Recall that, in this case,
\begin{align}
	t_1(n) 
	& \coloneqq
	\Tr\myleft[ \left\{ \1_{4}  \otimes (\1 + \flip_{(2)})^{\otimes (n/2-1)}\right\}
	\left\{ (4 \flip_{(2)} -\1)^{\otimes (n/2-1) } \otimes \1_{4} \right\}  \myright]  \\
	& = 
	\begin{tikzpicture}
		[scale=.8,baseline={([yshift=-.5ex]current bounding box.center)}]
		\node[inner sep=0pt] (full)
		{\includegraphics{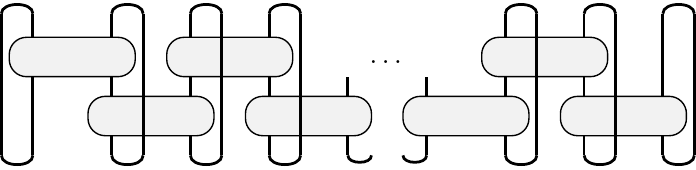}};
	\end{tikzpicture}
	\, , \\
	t_2(n) 
	& \coloneqq
	\Tr\myleft[ \left\{  \flip  \otimes (\1+\flip_{(2)})^{\otimes (n/2-1)}\right\}
	\left\{ (4 \flip_{(2)} -\1)^{\otimes (n/2-1) } \otimes \1_{4} \right\}  \myright] \\
	& = 
	\begin{tikzpicture}
		[scale=.8,baseline={([yshift=-.5ex]current bounding box.center)}]
		\node[inner sep=0pt] (full)
		{\includegraphics{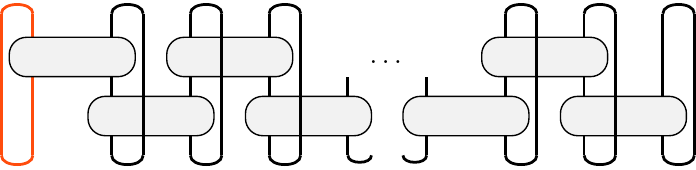}};
	\end{tikzpicture} 
	\, . 
\end{align}
For the first trace we have
\begin{equation*}
	\begin{tikzpicture}
		\node[inner sep=0pt] (full)
		{\includegraphics{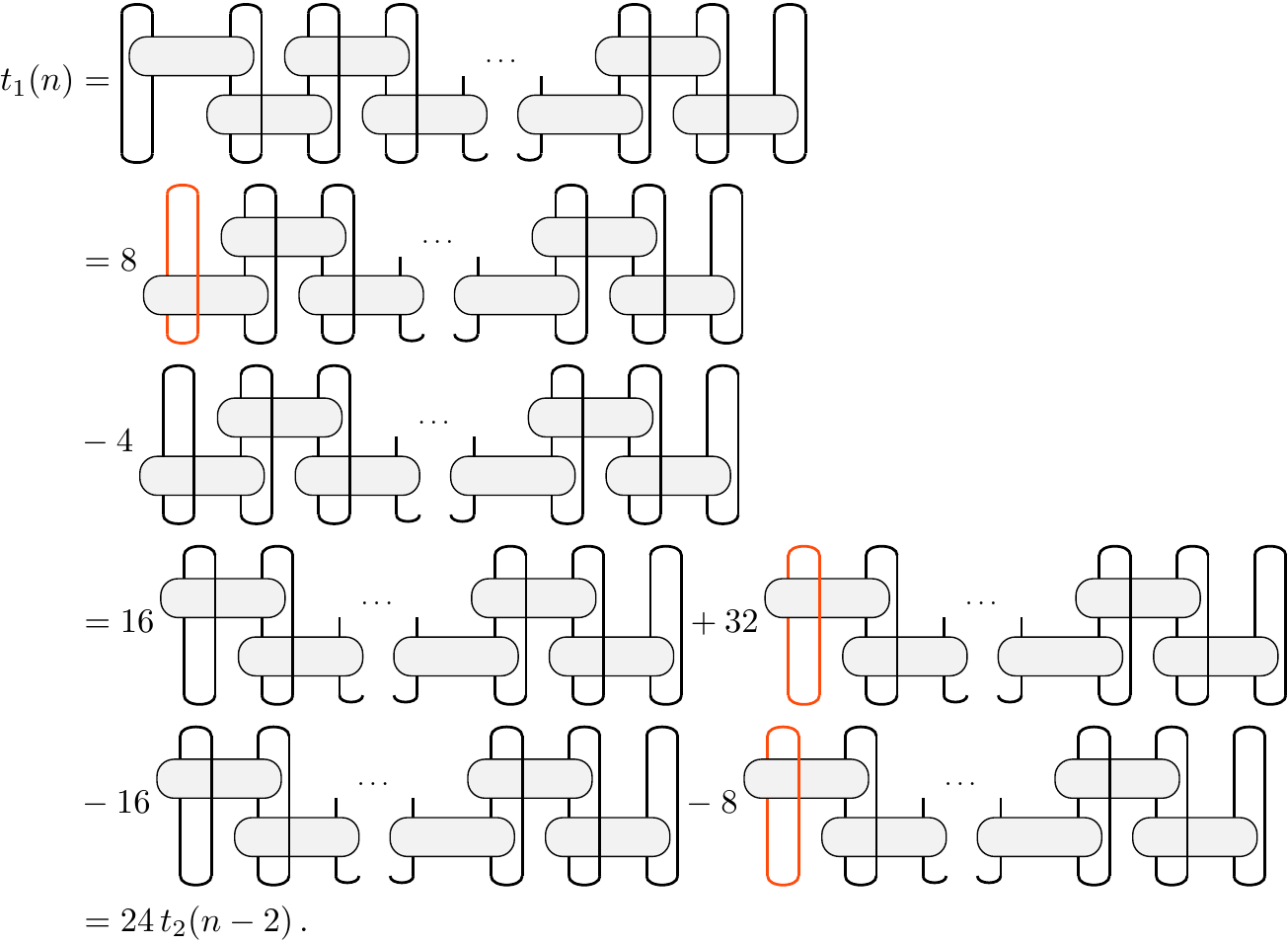}};
	\end{tikzpicture}
\end{equation*}
Similarly,
\begin{equation*}
	\begin{tikzpicture}
		\node[inner sep=0pt] (full)
		{\includegraphics{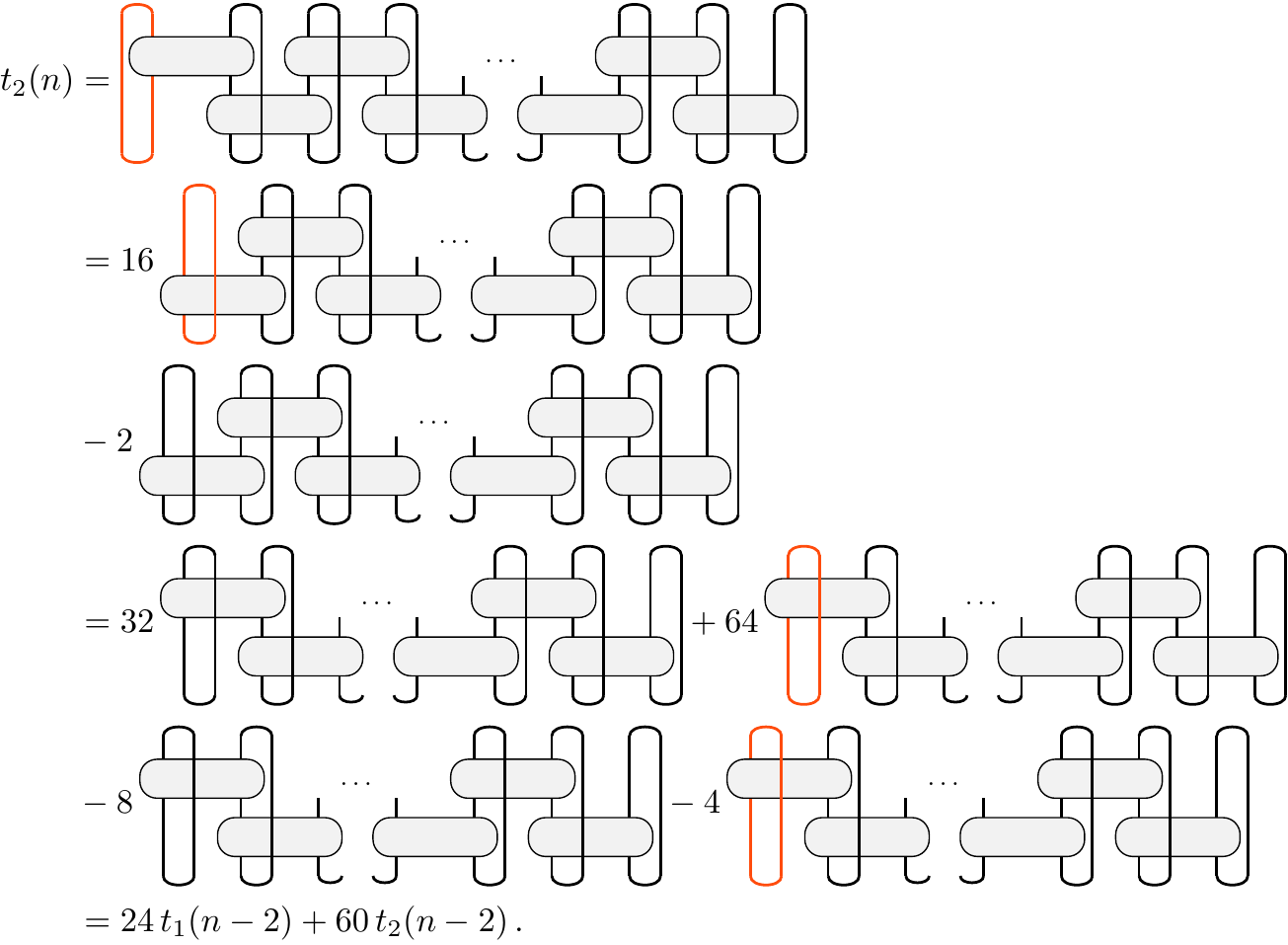}};
	\end{tikzpicture}	
\end{equation*}
Moreover,
\begin{equation*}
	\begin{tikzpicture}
		\node[inner sep=0pt] (full)
		{\includegraphics{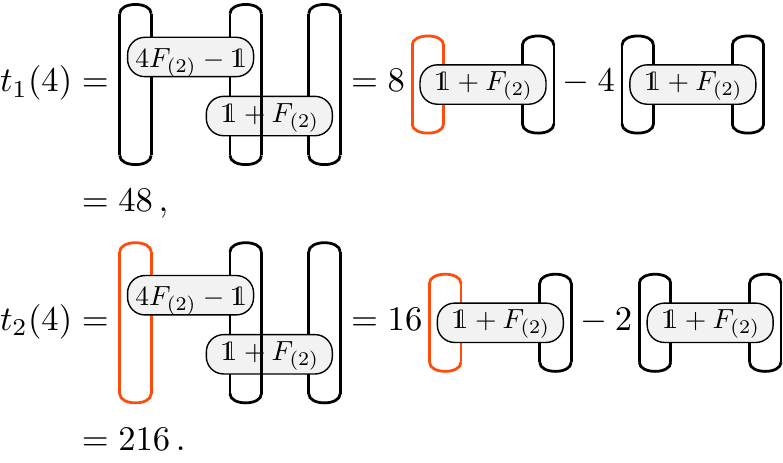}};	
	\end{tikzpicture}
\end{equation*}

\section{Another take on the variance bound}
\label{app:variance}

In this section, we provide an alternative proof of \cref{prop:variance_upper_bound} in the case of \ac{BW} circuits with periodic boundary conditions that relies on the $3$-design property of the multiqubits Clifford group.
In particular, it is based on the following result:
\begin{lemma}
	\label{lem:Clif_third_moment} 
	Let $n \in \NN$ and let $\gdim = 2^n$ be the dimension of a $n$-qubits system.
	If $0 \neq v \in \FF_2^{2n}$, then
	\begin{equation}
		\EE_{U \in \Cl_n(2)} U^{\otimes 3} \left( W(v)^{\otimes 2} \otimes W(u) \right) (U^\dagger)^{\otimes 3} = \frac{1}{\gdim^2 - 1} \delta_{u, 0} \left( \gdim \, \flip - \1_{\gdim^2} \right) \otimes \1_{\gdim} \quad \forall u \in \FF_2^{2n} \, .
	\end{equation}
	Moreover, 
	\begin{equation}
		\EE_{U \in \Cl_n(2)} U^{\otimes 3} \left( \1_{\gdim}^{\otimes 2} \otimes W(u) \right) (U^\dagger)^{\otimes 3} = \delta_{u,0} \1_{\gdim^3} \quad \forall u \in F_2^{2n} \, .
	\end{equation}
\end{lemma}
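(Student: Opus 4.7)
The plan is to leverage that the $n$-qubit Clifford group $\Cl_n(2)$ forms a unitary $3$-design \cite{Zhu15, Web15}, so that both Clifford averages can be replaced by the Haar averages over $\U(\gdim)$. By Schur--Weyl duality, the Haar third moment has image in the commutant of the diagonal action $U^{\otimes 3}$, which is spanned by the six permutation operators $R_\sigma$, $\sigma \in \Sym_3$.

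The second identity is the easier case: since $\1_\gdim^{\otimes 2} \otimes W(u)$ factorizes and $U^{\otimes 3}$ acts trivially on the first two tensor factors, the twirl reduces to $\1_\gdim^{\otimes 2} \otimes \EE_U[U W(u) U^\dagger]$. Already the $1$-design property of $\Cl_n(2)$ gives $\EE_U[UW(u)U^\dagger] = \gdim^{-1}\Tr(W(u))\,\1_\gdim = \delta_{u,0}\,\1_\gdim$, from which the second claim follows.

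For the first identity, I would expand $\EE_U[U^{\otimes 3} A\, (U^\dagger)^{\otimes 3}] = \sum_{\sigma \in \Sym_3} c_\sigma R_\sigma$ and determine the coefficients via the relations $\Tr(R_\tau\, \EE_U[U^{\otimes 3} A\, (U^\dagger)^{\otimes 3}]) = \Tr(R_\tau A)$, valid for every $\tau \in \Sym_3$ because each $R_\tau$ commutes with $U^{\otimes 3}$. The right-hand sides factorize along the cycles of $\tau$ as a product of $\Tr(\prod_{i \in c} X_i)$ over cycles $c$, and for $A = W(v)^{\otimes 2} \otimes W(u)$ with $v \neq 0$ they vanish unless $u = 0$, using $\Tr(W(v)) = 0$ together with $W(v)^2 = \1$. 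Only $\tau \in \{(12), (123), (132)\}$ produce nonzero right-hand sides, respectively $\gdim^2\delta_{u,0}$, $\gdim\,\delta_{u,0}$, and $\gdim\,\delta_{u,0}$.

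Inverting the $6\times 6$ Gram matrix $G_{\sigma\tau} = \Tr(R_\sigma R_\tau) = \gdim^{c(\sigma\tau)}$ (with $c(\pi)$ the number of cycles of $\pi$) against these values yields $c_e = -(\gdim^2-1)^{-1}$, $c_{(12)} = \gdim\,(\gdim^2-1)^{-1}$ and $c_\sigma = 0$ otherwise. Since $R_e = \1_{\gdim^3}$ and $R_{(12)} = \flip \otimes \1_\gdim$, this reproduces the claimed identity. The main obstacle is mostly the bookkeeping of the $6 \times 6$ linear system: one needs to verify that the contributions coming from $R_{(123)}$ and $R_{(132)}$ cancel exactly between the $R_{(12)}$- and $R_e$-rows, so that the answer indeed collapses onto the two-dimensional subspace of the commutant spanned by $\1_{\gdim^3}$ and $\flip \otimes \1_\gdim$. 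The $(12)$-symmetry of $A$ a priori reduces the effective dimension of this system and pairs $\{R_{(13)}, R_{(23)}\}$ and $\{R_{(123)}, R_{(132)}\}$ symmetrically, which shortens the algebra considerably.
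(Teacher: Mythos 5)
Your proof is correct, but it takes a genuinely different route from the paper's. The paper never invokes the unitary $3$-design property: it computes the Clifford average directly in the phase-space picture, writing $U W(v) U^\dagger = \chi([a,v]+\alpha_g(v))\, W(g(v))$ for $(g,a)\in \Sp{2n}{2}\times\FF_2^{2n}$, so that averaging over the Pauli translations $a$ produces the factor $\delta_{u,0}$, and averaging over $g$ becomes, by transitivity of $\Sp{2n}{2}$ on $\FF_2^{2n}\setminus\{0\}$, the uniform sum $\frac{1}{\gdim^2-1}\sum_{w\neq 0} W(w)\otimes W(w)\otimes \1$, which is recognized as $\frac{1}{\gdim^2-1}(\gdim\,\flip-\1)\otimes\1$ via the Pauli expansion of $\flip$. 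That computation is self-contained (it does not import the qubit $3$-design theorem as a black box) and makes transparent exactly where the qubit-specificity enters, a point the paper exploits in its closing remark about odd local dimension. Your route --- pass to the Haar average via the $3$-design property, expand the third moment in the permutation operators $R_\tau$, $\tau\in\Sym_3$, and fix the coefficients from the trace conditions --- is sound, and your data check out: the right-hand sides $(0,\gdim^2,0,0,\gdim,\gdim)\,\delta_{u,0}$ are correct (the ordering ambiguity $\Tr(W(v)W(u)W(v))$ versus $\Tr(W(v)^2W(u))$ for the $3$-cycles is harmless because of $\delta_{u,0}$), and the stated coefficients $c_e=-(\gdim^2-1)^{-1}$, $c_{(12)}=\gdim(\gdim^2-1)^{-1}$ reproduce all six traces. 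One small caveat: for $n=1$ (so $\gdim=2$, still allowed by the lemma) the six permutation operators are linearly dependent and the $6\times 6$ Gram matrix is singular, so it cannot literally be inverted; the clean fix, which also removes the bookkeeping you were worried about, is to verify that your candidate $\frac{1}{\gdim^2-1}(\gdim\,\flip-\1)\otimes\1$ lies in the commutant and has the correct Hilbert--Schmidt overlaps with all $R_\tau$, and then conclude equality with the twirl because the $R_\tau$ span the commutant. The second identity via the $1$-design (Pauli) average is fine as written.
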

\begin{proof}
	First, we fix notations for the phase space representation of Clifford operators.
	
	Let $[\cdot, \cdot] : \FF_2^{2n} \times \FF_2^{2n} \rightarrow \FF_2$ be the standard symplectic product over $\FF_2^{2n}$.
	Let $\tilde \alpha_g: \FF_2^{2n} \rightarrow \ZZ_4$ be a centre fixing automorphism of the associated Heisenberg-Weyl group, where $g \in \Sp{2n}{2}$.
	One can prove that $\tilde \alpha_g = \alpha_{g} + [w, \cdot]$, where $\alpha_{g}: \FF_2^{2n} \rightarrow \FF_2$ is a suitable function satisfying the compatibility condition $\alpha_g(0) = 0$ \cite[Sec.~3.3]{heinrichStabiliserTechniquesTheir2021}.
	Notice also that $\abs{\Cl_n(2)} = d^2 \abs{\Sp{2n}{2}}$.
	
	With these notations, the action of $U \in \Cl_n(2)$ on Weyl operators can be written as
	\begin{equation}
		\label{eq:Cliff_definition_symplectic}
		U W(v) U^\dagger 
		\coloneqq
		\chi([a, v] + \alpha_g(v)) W( g(v) ) \, ,
	\end{equation}
	where $\chi(v) \coloneqq i^{- v_z \cdot v_x}$ denotes the character of $W(v)$.
	
	Hence,
	\begin{equation}
		\begin{aligned}
			\EE_{U \in \Cl_n(2)} 
			& 
			U^{\otimes 3} \left( W(v)^{\otimes 2} \otimes W(u) \right) (U^\dagger)^{\otimes 3} \\
			& = 
			\frac{1}{\abs{\Cl_n(2)}} \sum_{a \in \FF_2^{2n}} \chi([a, 2v + u]) \sum_{g \in \Sp{2n}{2}} \chi(2\alpha_g(v) + \alpha_g(u))
			 W(g(v))^{\otimes 2} \otimes W(g(u)) \\
			& = 
			\frac{1}{2^{2n}} \sum_{a \in \FF_2^{2n}} \chi([a, u]) \frac{1}{\abs{\Sp{2n}{2}}}
			\sum_{g \in \Sp{2n}{2}} \chi(\alpha_g(u)) W(g(v))^{\otimes 2} \otimes W(g(u)) \\
			& = 
			\frac{1}{\abs{\Sp{2n}{2}}} \, \delta_{u, 0} \sum_{g \in \Sp{2n}{2}} W(g(v))^{\otimes 2} \otimes W(g(u)) \\
			& = 
			\frac{1}{\abs{\Sp{2n}{2} \cdot v}} \, \delta_{u, 0} \sum_{w \in \Sp{2n}{2} \cdot v} W(w) \otimes W(w) \otimes \1_2 \, .
		\end{aligned}
	\end{equation}
	Going from the third to the fourth line, we used $\sum_{a \in \FF_2^{2n}} \chi([a, u]) = d^2 \delta_{u, 0}$.
	In the last step, we wrote the average over $\Sp{2n}{2}$ as an average over the orbit of $v$ under $\Sp{2n}{2}$.
	Notice that, since $v \neq 0$, $\Sp{2n}{2}$ acts transitively on $\FF_2^{2n} \setminus 0$ \cite{heinrichStabiliserTechniquesTheir2021}, the average over such orbit can be rewritten as an average over $\FF_2^{2n} \setminus 0$.
	Moreover, recalling that the flip operator has the following Pauli expansion:
	\begin{equation}
		\flip = \frac{1}{\gdim} \sum_{w \in \FF_2^{2n}} W(w) \otimes W(w) \, ,
	\end{equation}
	it holds that
	\begin{equation}
		\EE_{U \in \Cl_n(2)} U^{\otimes 3} \left( W(v)^{\otimes 2} \otimes W(u) \right) (U^\dagger)^{\otimes 3} 
		= \frac{1}{\gdim^2-1} \, \delta_{u, 0} \left( \gdim \flip - \1_{\gdim^2} \right) \otimes \1_{\gdim}.
	\end{equation}
	If $v = 0$, the assertion follows trivially from previous considerations.
\end{proof}
We will also need the following calculation:
\begin{lemma}
	\label{lem:partial_trace_Psym}
	Given two copies of a system of $3$ qubits, it holds
	\begin{equation}
		\Tr_3 \Psym[3] = \frac{d+2}{6} \left(\1 + \flip\right) \, .
	\end{equation}
\end{lemma}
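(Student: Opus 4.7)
I would use a Schur-Weyl / commutant argument to reduce the statement to a single trace computation.

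First, observe that $\Tr_3 \Psym[3]$ is invariant under conjugation by $U\otimes U$ for every $U \in \U(d)$: indeed $\Psym[3]$ commutes with $U^{\otimes 3}$, and the partial trace over the third factor then yields an operator commuting with $U\otimes U$. The commutant of $\{U\otimes U : U\in \U(d)\}$ on $(\CC^d)^{\otimes 2}$ is spanned by $\1$ and $\flip$, or equivalently by the projectors onto the symmetric and antisymmetric subspaces.

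Next, I would show that the range of $\Tr_3 \Psym[3]$ actually lies in the symmetric subspace $\sym^2$. This follows from the inclusion $\sym^3 \subset \sym^2\otimes \CC^d$ (a vector symmetric in all three factors is, in particular, symmetric in the first two), so that $\Psym[3] = (\Psym[2]\otimes \1)\,\Psym[3] = \Psym[3]\,(\Psym[2]\otimes\1)$. Sandwiching gives $\Tr_3 \Psym[3] = \Psym[2]\,\Tr_3\Psym[3]\,\Psym[2]$. Combined with the previous paragraph, this forces
\begin{equation}
\Tr_3 \Psym[3] = \lambda\,\Psym[2] = \frac{\lambda}{2}(\1 + \flip)
\end{equation}
for some scalar $\lambda$.

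Finally, I would fix $\lambda$ by taking traces: $\Tr(\Tr_3 \Psym[3]) = \Tr \Psym[3] = \binom{d+2}{3}$ and $\Tr \Psym[2] = \binom{d+1}{2}$, so
\begin{equation}
\lambda = \frac{\binom{d+2}{3}}{\binom{d+1}{2}} = \frac{d+2}{3},
\end{equation}
yielding the claimed identity $\Tr_3 \Psym[3] = \tfrac{d+2}{6}(\1+\flip)$.

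There is no real obstacle here: the only thing to be careful about is that the range of $\Tr_3 \Psym[3]$ sits inside $\sym^2$, which is what makes the commutant collapse from $\linspan\{\1,\flip\}$ down to a single scalar parameter. As a sanity check, one can also verify the formula by writing $\Psym[3] = \frac{1}{6}\sum_{\pi\in S_3} R(\pi)$ and computing $\Tr_3 R(\pi)$ directly for each of the six permutations: the two permutations fixing the third letter contribute $d\cdot\1$ and $d\cdot\flip$, while the four permutations moving the third letter each contribute a copy of $\1$ or $\flip$, reproducing $(d+2)(\1+\flip)/6$ after dividing by $6$.
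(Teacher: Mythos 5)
Your argument is correct, and its main route is genuinely different from the paper's. The paper proves the lemma by expanding $\Psym[3] = \tfrac16\sum_{\pi\in S_3}P_\pi$ over the six permutations and evaluating $\Tr_3 P_\pi$ for each term diagrammatically (the two permutations fixing the third slot give $d\,\1$ and $d\,\flip$, the other four give $\1$ or $\flip$), which is exactly what you relegate to your closing sanity check. Your primary argument instead uses Schur--Weyl: $\Tr_3\Psym[3]$ commutes with $U\otimes U$, hence lies in $\linspan\{\1,\flip\}$; the identity $\Psym[3]=(\Psym\otimes\1)\Psym[3](\Psym\otimes\1)$ forces the result to be proportional to $\Psym$; and the coefficient follows from $\Tr\Psym[3]=\binom{d+2}{3}$ and $\Tr\Psym=\binom{d+1}{2}$, giving $\lambda=(d+2)/3$ and hence $\tfrac{d+2}{6}(\1+\flip)$. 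All steps check out (the invariance of the partial trace under a unitary on the traced factor is what makes the first observation rigorous, and you state it correctly). What each approach buys: the paper's computation is short, fully explicit, and matches the tensor-network machinery used elsewhere in its appendix, whereas your structural argument avoids enumerating permutations altogether and generalizes immediately to $\Tr_k \Psym[k] = \tfrac{d+k-1}{k}\,\Psym[k-1]$ for arbitrary $k$, which the diagrammatic enumeration would make increasingly tedious.
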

\begin{proof}
	We prove the latter using tensor network diagrams.
	First, let us consider the decomposition $\Psym[3] = \frac{1}{6} \left( \1 + P_{(1,2)} + P_{(1, 3)} + P_{(2, 3)} + P_{(1, 2, 3)} + P_{(1, 3, 2)} \right)$, $P_{(\cdot)}$ are unitary operators associated with elements of the permutation group $\Sym_3$, and each $a \in \Sym_3$ is represented in cyclic notation.
	Then, 
	\begin{equation*}
		\begin{tikzpicture}
			\node[inner sep=0pt] (full)
			{\includegraphics{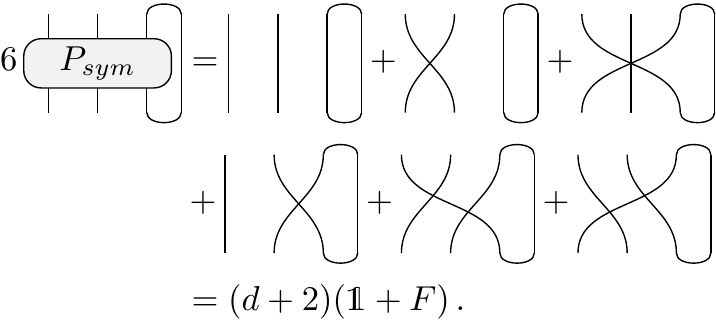}};	
		\end{tikzpicture}
	\end{equation*}
\end{proof}

\begin{proposition}
	\label{prop:variance_upper_bound_appendix}
	For any state $\rho$, estimate $W(v)$ using \ac{BW} shadows with periodic boundary conditions.
	Then, the variance of the estimator depends only on $v \in \FF_2^{2n}$, and
	\begin{equation}
		\varBW(v)
		\leq
		\frac{1}{\osandwich v S v} \, .
	\end{equation}
\end{proposition}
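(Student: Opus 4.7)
The plan is to establish the stronger statement that the second moment itself coincides with $1/\osandwich{v}{\frameop}{v}$ and is hence $\rho$-independent; the bound $\varBW(v,\rho) \leq 1/\osandwich{v}{\frameop}{v}$ then follows immediately from the definition of $\varBW$ by dropping the non-negative squared-mean term $\obraket{W(v)}{\rho}^2$. First I would rewrite the second moment as a single trace over three tensor copies using $\Tr(AB)\Tr(CD)\Tr(EF) = \Tr[(A\otimes C\otimes E)(B\otimes D\otimes F)]$:
\begin{equation}
	\sum_i \EE_U f_{W(v)}(i,U)^2 \, \obraket{E_{i,U}}{\rho} = \frac{1}{\osandwich{v}{\frameop}{v}^2} \Tr\myleft[\Delta \cdot \EE_U U^{\otimes 3}(W(v)^{\otimes 2} \otimes \rho) U^{\otimes 3\dagger}\myright]\,,
\end{equation}
where $\Delta \coloneqq \sum_i E_i^{\otimes 3}$. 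By right-Pauli invariance of $\mu$, absorbing $X^i$ (with $X^i \ket{0} = \ket{i}$) into $U$ equates every summand of $\Delta$, effectively replacing it by $d \cdot E_0^{\otimes 3}$ inside the expectation.

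Next, I would expand $\rho = \frac{1}{d}\sum_u \rho_u W(u)$ and factorize $U = D U_2 D^\dagger U_1$ as in \cref{sec:proof_theorem}. Averaging the innermost conjugation by $U_1^{\otimes 3}$ brick-by-brick using \cref{lem:Clif_third_moment}, each first-layer brick forces the Pauli label $u_j^{(1)}$ to vanish; since the first layer tiles all $n$ qubits, only the term $u = 0$ survives in the Pauli expansion of $\rho$. The expression thereby becomes manifestly $\rho$-independent, and the surviving operator is a tensor product $B(v) = \bigotimes_j B_j$ over first-layer bricks with $B_j = \1$ when $v_j^{(1)} = 0$ and $B_j = \frac{1}{15}(4 \flip_{(2)} - \1) \otimes \1$ otherwise, matching the two-design building blocks underlying \cref{lem:frame_op_diagonal}.

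After conjugation by $D^{\otimes 3\dagger}$, each second-layer brick straddles two adjacent $B_j$-blocks because of the one-qubit offset between layers. I would then apply \cref{lem:Clif_third_moment} once more to each second-layer brick and invoke \cref{lem:partial_trace_Psym} to collapse the resulting $\Psym[3]$ projectors on boundary qubits into multiples of $\Psym[2]$. Contracting the final operator against $\Delta$ then produces a tensor network with the same combinatorial structure as the one computing $\osandwich{v}{\frameop}{v}$ in \cref{lem:frame_op_diagonal}, up to an overall factor of $1/\osandwich{v}{\frameop}{v}$. This identifies the second moment with $1/\osandwich{v}{\frameop}{v}$ and proves the proposition.

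The main obstacle is the last step: the $\Psym[3]$ projectors produced by the second-layer twirl must be combined with the $(4 \flip_{(2)} - \1) \otimes \1$ factors of $B(v)$ across the one-qubit offset between the two layers, and one must verify that this bookkeeping reproduces exactly the recurrence encoded in $\tper(n)$ rather than a numerically close but distinct quantity. This is precisely the step that genuinely leverages the full 3-design property of the Clifford group and distinguishes the present argument from the more general Pauli-invariance proof of Ref.~\cite{buClassicalShadowsPauliinvariant2022}.
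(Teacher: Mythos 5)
Your setup (the three-copy rewriting, Pauli invariance to replace $\sum_i E_i^{\otimes 3}$ by $d\,\ketbra{0}{0}^{\otimes 3}$, and dropping the squared mean) matches the paper's proof, but the way you assign the two layers creates a genuine gap. Because you let the \emph{first} layer $U_1$ twirl $W(v)^{\otimes 2}\otimes W(u)$, the surviving operator is a product of blocks $B_j$ indexed by \emph{first}-layer bricks. After the shift $D$, the restriction of $\bigotimes_j B_j$ to a second-layer brick is not of the form $W(b)^{\otimes 2}\otimes W(u')$: expanding $\flip_{(2)}$ in the Pauli basis it is a sum of such terms, and the labels are correlated across neighbouring second-layer bricks because each $B_j$ feeds two of them. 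So \cref{lem:Clif_third_moment} cannot be applied ``once more'' brick by brick as stated; after re-expansion you face precisely the joint combinatorial sum over shared half-brick labels whose identification with $\osandwich{v}{\frameop}{v}$ you defer as ``the main obstacle'' --- that step is the substance of the proof, not bookkeeping. Moreover, in your ordering no $\Psym[3]$ projectors ever appear: \cref{lem:Clif_third_moment} outputs $(4\flip_{(2)}-\1)/15\otimes\1$ or $\1$, and $\ketbra{0}{0}^{\otimes 3}$ is never twirled, so the appeal to \cref{lem:partial_trace_Psym} has nothing to act on.

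The paper's proof assigns the layers the other way around, and that is exactly what closes the argument: cycling the trace so that $U_1^{\otimes 3}$ averages the state stack gives $\EE_{U_1}U_1^{\otimes3}\ketbra{0}{0}^{\otimes3}U_1^{\otimes3\dagger}\propto\Psym[3]^{\otimes n/2}$, while the $U_2$-twirl (conjugated by $D$) acts on $W(v)^{\otimes 2}\otimes W(u)$, where \cref{lem:Clif_third_moment} applies verbatim brick by brick: it forces $u=0$ (hence $\rho$-independence, periodic boundary conditions ensuring every qubit pair is covered) and produces exactly the operators $Q_{\tilde v_i}\otimes\1$ indexed by \emph{second}-layer bricks, i.e.\ the same building blocks as in \cref{lem:frame_op_diagonal}. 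Then \cref{lem:partial_trace_Psym} traces out the third copy, $\Tr_3\Psym[3]\propto\1+\flip_{(2)}$, and the remaining two-copy trace is literally the expression \cref{eq:vSv:general_main} for $\osandwich{v}{\frameop}{v}$, yielding the bound $1/\osandwich{v}{\frameop}{v}$ with no new recurrence to verify. If you swap which layer twirls the state and which twirls the observable, your argument goes through and coincides with the paper's.
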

\begin{proof}
	In the following, we will denote by $D$ the cyclic shift operator between Hilbert spaces as before, so that a random brickwork unitary is given by $U = D U_2 D^\ad U_1$, where $U_{i}, \ i = 1,2$, is the tensor product of two-local Haar random unitaries.
	For a given operator $A \in \L(\CC^{2} \otimes \CC^{2})$, we will also consider the operator acting on three copies of two qubit sites $A_{(3)} \coloneqq A \otimes A \otimes A \in \L(\CC^{8} \otimes \CC^{8})$.
	
	According to the shadow estimation protocol, we estimate the expectation value $\Tr[W(v) \rho]$ of some Pauli observable $W(v)$ by measuring $\rho$ many times in the computational basis after having applied $U \sim \mu$, where $\mu$ is a probability measure on the ensemble of \ac{BW} operators. 
	Then, a single such sample has a variance $\varBW(v, \rho)$ bounded as
	\begin{equation}
		\begin{aligned}
			\varBW(v, \rho) &\coloneqq 
			\sum_i \EE_{U\sim \mu} \obraket{W(v)}{\tilde E_{i,U}}^2 \obraket{E_{i,U}}{\rho} - \Tr[W(v)\rho]^2
			\\
			&\leq
			\sum_i \EE_{U\sim \mu} \Tr\myleft[S^{-1}(W(v)) U^\ad E_i U\myright]^2 \, \Tr\myleft[U^\dagger E_{i} U\rho\myright]
			\\
			&=
			\gdim \, \EE_{U\sim \mu} \Tr\myleft[ U^{\otimes 3\ad} \ketbra 00^{\otimes 3} U^{\otimes 3} \left(S^{-1}(W(v))^{\otimes 2} \otimes \rho\right) \myright] 
			\\
			&= 
			\frac{\gdim}{\osandwich {v}{S}{v}^2 } \, \EE_{U\sim \mu} 
			\Tr\myleft[ U^{\otimes 3\ad} \ketbra 00^{\otimes 3} U^{\otimes 3} \left(W(v)^{\otimes 2} \otimes \rho\right) \myright] 
			\\
			&=
			\frac{1}{(2 \sqrt{5})^n}
			\frac{\gdim}{\osandwich v S v^2 } \, 
			\EE_{U_2} 
			\Tr\myleft[ \Psym[3]^{\otimes n/2} D_{(3)} \, U_2^{\otimes 3} \left( D_{(3)}^{\ad} W(v)^{\otimes 2} \otimes \rho \, D_{(3)}\right) U_2^{\otimes 3\ad} D_{(3)}^{\ad}  \myright]  \, ,
		\end{aligned}
	\end{equation}  
	where in the last step we applied again \cref{eq:k_moment_zero} from \cref{lem:Haar_second_moment} and $U = D U_2 D^\ad U_1$.
	Notice that $\Psym[3]$ acts on triples of two neighboring qubit sites.
	
	Consider now the expansion in the Pauli basis $\rho = \sum_{u \in \FF_2^{2n}} c_u W(u)$, where $c_u \coloneqq \frac{1}{d} \Tr(W(u)\rho)$, and for any $w \in \FF_2^{2n}$ consider the decomposition
	\begin{align}
		D^\ad W(w) D &= W(w_{2,3}) \otimes \dots \otimes W(w_{n,1}) \, ,
	\end{align}
	which agrees with the structure of the second layer of the \ac{BW} circuit.
	Then, by \cref{lem:Clif_third_moment}, we have
	\begin{equation}
		\EE_{U_2} 
		U_2^{\otimes 3} \left( D_{(3)}^{\ad} W(v)^{\otimes 2} \otimes W(u) \, D_{(3)}\right) U_2^{\otimes 3\ad}
		=
		\delta_{u,0} \bigotimes_{i \in [n/2]} Q_{\hat v_i} \otimes \1_{\gdim} \, ,
	\end{equation}
	from which it follows
	\begin{equation}
		\begin{aligned}
			\sigma^2(v, \rho)
			& \leq
			\frac{1}{(2 \sqrt{5})^n} \frac{\gdim}{\osandwich v S v^2 } \, c_0 \, 
			\Tr\myleft[ \Psym[3]^{\otimes n/2} \ 
			D_{(3)} \bigotimes_{i \in [n/2]} Q_{\tilde v_i} \, D_{(3)}^{-1} \otimes \1_{\gdim} \myright] \, ,
		\end{aligned} 
	\end{equation}
	where $Q_{\tilde v_i}$ is defined in \cref{eq:Q_v}.
	Hence, $\sigma^2(v, \rho) \equiv \sigma^2(v)$, since $c_0 = 1/\gdim$.
	According to \cref{lem:Clif_third_moment}, this means that each Clifford unitary in the second layer depolarizes any dependency from the corresponding two-qubits Weyl operator appearing in the decomposition of $\rho$ in the Pauli basis; periodic boundary conditions ensure that this applies to each pair of qubits.
	Finally, by \cref{lem:partial_trace_Psym}, 
	\begin{equation}
		\begin{aligned}
			\sigma^2(v)
			& \leq
			\frac{1}{(2 \sqrt{5})^n}
			\frac{1}{\osandwich v S v^2 } \, 
			\Tr\myleft[ \Tr_3(\Psym[3]^{\otimes n/2}) \ 
			D_{(2)} \bigotimes_{i \in [n/2]} Q_{\tilde v_i} \, D_{(2)}^{-1} \myright]
			\\
			& =
			\frac{1}{(2 \sqrt{5})^{2n}}
			\frac{1}{\osandwich v S v^2 } \, 
			\Tr\myleft[ (\1 + \flip)^{\otimes n/2} \ 
			D_{(2)} \bigotimes_{i \in [n/2]} Q_{\tilde v_i} \, D_{(2)}^{-1} \myright]
			\\
			& =
			\frac{1}{\osandwich v S v} \, ,
		\end{aligned}
	\end{equation}	
\end{proof}
Finally, notice that \cref{lem:Clif_third_moment} does not hold for arbitrary values of the local dimension.
Indeed, in odd dimensions, the flip operator has the Weyl expansion
\begin{equation}
	\flip = \frac{1}{\gdim} \sum_{v \in \FF_\ldim^2} W(v) \otimes W(-v) \, ,
\end{equation}
meaning the operator $\frac{1}{\gdim} \sum_{v \in \FF_2^{2n}} W(v) \otimes W(v)$ admits a nice expression for fields of characteristic $2$ only, and the proof of \cref{prop:variance_upper_bound_appendix} holds for qubit systems only.

\section{Numeric bounds on the variance}
\label{app:bounds}

For the construction of the classical shadow, the exact expressions stated in Theorem~\ref{thm:main} are required.
We here derive simpler (and looser) bounds for controlling the variance.

Let us start with the case relevant for $\partBW(v) = (n/2)$.
Set $a \coloneqq (\sqrt{41} + 5)^{1/2} / (5\sqrt2)$ and $b \coloneqq \i  (\sqrt{41} - 5)^{1/2} / (5\sqrt2)$. 
We have $2 < 1 / a < 2.1$, $\frac1{1 - |b/a|^2} \leq 1.2$ and $\frac1{1 + |b/a|^2} \geq .89$.
Further, since $|b/a| < 1$ and assuming $n \geq 2$ and $n$ even, 
\begin{equation}
	\frac1{\frameopPeriodic(n)} 
	= \frac1{a^n + b^n} = \frac1{a^n} \frac1{1 + (b/a)^n} \leq \frac1{a^n} \frac1{1 - |b/a|^2} \leq 1.2 \cdot 2.1^n \, ,
\end{equation}
and analogously
\begin{equation}
	\frac1{\frameopPeriodic(n)} \geq \frac1{a^n} \frac1{1 + |b/a|^2} \geq 0.8 \cdot 2^n\,. 
\end{equation}

In the same way, we can bound $1/\frameopOpen(n)$.
To this end, further set
$c = 5 (25 - 3 \sqrt{41}) / (2 \sqrt{41})$ 
and $d = 5 (25 + 3 \sqrt{41}) / (2 \sqrt{41})$.
We have, up to adjusting the phase of $b$,
\begin{equation}\label{eq:secbound}
	\frac1{\frameopOpen(n)} = \frac1{c a^n + d b^n} = \frac1{ca^n} \frac1{1 + (d/c) (b/a)^n}\,
\end{equation}
and, thus, for $n \geq 4$ we have
\begin{equation}
	0.3 \cdot 2^{n} < \frac1{ca^n} \frac1{1 + (d/c) |b/a|^4} \leq \frac1{\frameopOpen(n)} \leq \frac1{ca^n} \frac1{1 - (d/c) |b/a|^4} < 0.6 \cdot 2.1^n.
\end{equation}
Note that \cref{eq:secbound} goes to $c^{-1} a^{-n} \approx 0.44 \cdot 2.1^n$ for large $n$ as the second fraction becomes $1$ asymptotically. 
Similarly, $\frac1{\frameopPeriodic(n)}$ asymptotically becomes $a^{-n}$.
The deviation from this asymptotic scaling is small already for small $n$. 
E.g., the relative error of the asymptotic approximation is smaller than $10^{-2}$ for $n \geq 6$.
Asymptotically the frame operator elements for periodic and open boundary conditions only differ by a constant factor $c^{-1} \approx 0.44$.

Setting $\Gamma \equiv \frac{1}{c} \frac{1}{1 - (d/c) \abs{b/a}^4}$ and $\Delta \equiv \frac{1}{a}$, 
a bound of the form $1 / \frameopOpen(n) \leq \Gamma \Delta^n$, implies that the variance is dominated by
\begin{equation}
	\varBW \leq \prod_{l\in \partBW(v)} \frameopOpen(2l + 2)^{-1}
	\leq \prod_{l\in \partBW(v)} (\Gamma \Delta^2) \Delta^{2l}
	\leq (\Gamma \Delta^2)^{|\partBW(v)|} \Delta^{2\Sigma(\partBW(v))},\,
\end{equation}
where $|\partBW(v)|$ denotes the length of the tuple $\partBW(v)$, i.e.\ the number of parts in the partition, and 
$\Sigma(\partBW(v)) \coloneqq \sum_{l \in |\partBW(v)|} l = |\tilde v|$ is the cumulative length of all parts.

Inserting the previous bounds for $\Gamma$ and $\Delta$ (without intermediate rounding), we conclude that 
\begin{equation}\label{eq:generalVarianceBound}
	\varBW \leq 2.2^{|\partBW(v)|} 4.4^{\Sigma(\partBW(v))}\,.
\end{equation}

When comparing to the variance of \ac{LC} circuits in the case where $\partBW = (n/2)$ or $(n/2 - 1)$, we are interested in ensuring that $\Gamma \Delta^n < 3^{\abs{\suppLC(v)}}$.
This is the case when 
\begin{equation}
	\abs{\suppLC(v)} \geq n \log_3\Delta + \log_3\Gamma \, ,
\end{equation}
which for open and periodic boundary conditions translates to the sufficient condition
\begin{equation}
	\abs{\suppLC(v)} > 0.68 n + 0.12 \, 
\end{equation} 
(the constant term for open boundary conditions is actually negative).

More generally, \cref{eq:generalVarianceBound} is smaller than $3^{\abs{\suppLC(v)}}$ if
\begin{equation}
	\abs{\suppLC(v)} \geq 0.8 |\partBW(v)|  + 1.4 \Sigma(\partBW(v))\,.
\end{equation}

\section{More details on numerical experiments}
\label{app:numerics}

In this section, we describe in detail the procedure used for our numerical experiments, which are implemented in the following repository: \url{https://github.com/MirkoArienzo/shadow_short_circuits}.
First, given $\rho = \ketbra{0}{0}$ and a Pauli observable $W(v)$, \cref{eq:reconstruction_shadow} becomes
\begin{equation}
	\label{eq:pauli_task_zero_state}
	\sandwich{0}{W(v)}{0} 
	=
	\frac{1}{\osandwich{v}{\frameop}{v}}
	\sum_i \EE_{U \sim \mu} \sandwich{i}{U W(v) U^\ad}{i} \abs{ \sandwich{i}{U}{0} }^2 \, .
\end{equation}
As discussed above, $U$ is chosen to be a Clifford operator, which is represented by a pair $(g, a)$, with $g \in \Sp{2n}{2}$, and $a \in \FF_2^{2n}$.
Then, writing $U = \bigotimes_{i=1}^{n/2} \bigotimes_{j=1}^{n/2} U_1^{(i)} U_2^{(j)}$, each local symplectic matrix is sampled using K\"onig-Smolin's algorithm \cite{koenig_how_2014}, and $a$ is a uniformly distributed vector in $\FF_2^{2n}$.
Then, samples $\{ (U_j, i_j) \}_{j=1}^m$ are drawn according to standard stabilizer simulation techniques \cite{gottesmanHeisenbergRepresentationQuantum1998, aaronsonImprovedSimulationStabilizer2004}, and the estimator is given by the following empirical average:
\begin{equation}
	\hat w (v)
	=
	\frac{1}{m \osandwich{v}{\frameop}{v}} 
	\sum_{j=1}^m \sandwich{i_j}{U_j W(v) U_j^\ad}{i_j} \, .
\end{equation}

A single estimate requires the calculation of the phase function appearing in \cref{eq:Cliff_definition_symplectic}, which can be done in time $\LandauO(n^3)$ \cite{heinrichStabiliserTechniquesTheir2021}.
However, when the observable is of $Z$-type, we can avoid this calculation, and speed up the simulation.
To prove this fact, let us consider the decomposition $\FF_2^{2n} = Z_n \oplus X_n$, and label the computational basis by binary vectors $i\in\FF_2^n$.
Then,
\begin{equation}
	\begin{aligned}
		\sandwich{i}{U W(v) U^\ad}{i}
		& =
		(-1)^{\alpha_g(v) + [a, gv] + (gv)_z \cdot i} 1_{Z_n}(gv) \, ,
	\end{aligned}
\end{equation}
where $(gv)_z\in Z_n\simeq\FF_2^n$ is the $Z$ part of the vector $gv\in\FF_2^{2n}$, and $1_{Z_n}$ is the indicator function on $Z_n$.
Then, suppose the outcome of the latter is non-zero, so $(gv)_x = 0$.
Hence, since $v_x = 0$ by assumption, we find:
\begin{equation}
	\begin{aligned}
		[a, gv] + (gv)_z \cdot i
		& = 
		(gv)_z \cdot i_0
		= 
		(gv)_z \cdot \sum_j i_j (ge_j)_x \\
		& =
		\sum_j i_j [gv, ge_j] 
		= 
		\sum_j i_j [v, e_j] \\
		& = 
		0 \, ,
	\end{aligned}
\end{equation}
where we wrote $i = i_0 + a_x$ for some $i_0 \in \FF_2^{n}$, and then we considered the decomposition $i_0 = \sum_j i_j (ge_j)_z$, where $\{e_j\} \subset Z_n$ is the canonical basis.

Then, from \cref{eq:pauli_task_zero_state} we get
\begin{equation}
	\begin{aligned}
		\label{eq:pauli_task_zero_state_intermediate}
		\osandwich{v}{\frameop}{v}
		& =
		\sum_i \EE_{U \sim \mu} \sandwich{i}{U W(v) U^\ad}{i} \abs{ \sandwich{i}{U}{0} }^2 \\
		& =
		\sum_i \EE_{U \sim \mu} (-1)^{\alpha_g(v)} \abs{ \sandwich{i}{U}{0} }^2 \, .
	\end{aligned}
\end{equation}
Define now
\begin{equation}
	p_{\pm} 
	\coloneqq
	\frac{1}{\abs{\Sp{2n}{2}}} \abs{ \{ g \in \Sp{2n}{2} \mid gv \in Z_n \, , \ (-1)^{\alpha_g(v)} = \pm 1 \} } \, .
\end{equation}
Then,
\begin{equation}
	\begin{aligned} 
		p_+ + p_-
		& = 
		\frac{1}{\abs{\Sp{2n}{2}}}
		\abs{ \{ g \in \Sp{2n}{2} \mid gv \in Z_n \} } \\ 
		& = 
		\frac{1}{\abs{\Sp{2n}{2}}} \abs{\stab(v) } \cdot \abs{ Z_n \setminus 0 } \\ 
		& =
		\frac{1}{\abs{\Sp{2n}{2}}} \frac{\abs{\Sp{2n}{2}}}{\abs{\Sp{2n}{2} \cdot v}} \cdot \abs{ Z_n \setminus 0 } \\
		& =
		\frac{2^n-1}{2^{2n}-1} = \frac{1}{2^n+1} \, ,
	\end{aligned}
\end{equation}
where $\stab(v)$ denotes the set of stabilizers of $v$.

On the other hand, $p_+$ and $p_-$ also have the interpretation of frequencies of $\pm 1$ outcomes in \cref{eq:pauli_task_zero_state_intermediate} respectively.
This means
\begin{equation}
	\osandwich{v}{\frameop}{v} 
	= 
	p_+ - p_- 
	=
	\frac{\abs{Z_n}}{2^{2n}-1}
	=
	\frac{1}{2^n+1}
	\, ,
\end{equation}
from which it follows $p_- = 0$.

In conclusion, whenever $v_x=0$, we only need to check if $(gv)_x$ is trivial.


\begin{acronym}[POVM]\itemsep.5\baselineskip
\acro{AGF}{average gate fidelity}

\acro{BOG}{binned outcome generation}

\acro{BW}{brickwork}

\acro{CP}{completely positive}
\acro{CPT}{completely positive and trace preserving}
\acro{CS}{compressed sensing} 

\acro{DFE}{direct fidelity estimation} 
\acro{DM}{dark matter}

\acro{GST}{gate set tomography}
\acro{GUE}{Gaussian unitary ensemble}

\acro{HOG}{heavy outcome generation}

\acro{LC}{local Clifford}

\acro{MBL}{many-body localization}
\acro{ML}{machine learning}
\acro{MLE}{maximum likelihood estimation}
\acro{MPO}{matrix product operator}
\acro{MPS}{matrix product state}
\acro{MUBs}{mutually unbiased bases} 
\acro{MW}{micro wave}

\acro{NISQ}{noisy and intermediate scale quantum}

\acro{POVM}{positive operator valued measure}
\acro{PVM}{projector-valued measure}

\acro{QAOA}{quantum approximate optimization algorithm}
\acro{QML}{quantum machine learning}
\acro{QMT}{measurement tomography}
\acro{QPT}{quantum process tomography}

\acro{RDM}{reduced density matrix}

\acro{SFE}{shadow fidelity estimation}
\acro{SIC}{symmetric, informationally complete}
\acro{SPAM}{state preparation and measurement}

\acro{RB}{randomized benchmarking}
\acro{rf}{radio frequency}

\acro{TT}{tensor train}
\acro{TV}{total variation}

\acro{VQA}{variational quantum algorithm}

\acro{VQE}{variational quantum eigensolver}

\acro{XEB}{cross-entropy benchmarking}

\end{acronym}

\addcontentsline{toc}{chapter}{References}
\bibliographystyle{myapsrev4-2}
\bibliography{mk,shadows}

\end{document}